\documentclass[11pt]{article} 
\usepackage{fullpage}

\usepackage{bm}
\usepackage{amsthm}

\usepackage{thmtools}
\usepackage{thm-restate}

\usepackage{cleveref}

\usepackage{amsmath,amssymb}
\usepackage{fancybox}
\def\MM{{\mathbf{M}}}
\def\AA{\mathbf{A}}

\def\CC{\mathbf{C}}
\def\DD{{\mathbf{D}}}

\newdimen\pIR
\pIR= -131072sp
\newcommand\R{{\rm I\kern\pIR R}}

\def\Reals#1{\mathbb{R}^{#1}}

\newenvironment{fminipage}%
 {\begin{Sbox}\begin{minipage}}%
 {\end{minipage}\end{Sbox}\fbox{\TheSbox}}
\newenvironment{algbox}[0]{\vskip 0.2in
\noindent 
\begin{fminipage}{6.3in}
}{
\end{fminipage}
\vskip 0.2in
} 
\newcommand{\Half}{\frac{1}{2}}

\newtheorem{theorem}{Theorem}[section]
\newtheorem{definition}[theorem]{Definition}
\newtheorem{lemma}[theorem]{Lemma}
\newtheorem{fact}[theorem]{Fact}
\newtheorem{proposition}[theorem]{Proposition}

\newcommand{\M}[1]{\mathbf{#1}}

\newcommand{\V}[1]{\bm{#1}}
\newcommand{\T}{\top}

\newcommand{\OO}[1]{O(#1)}

\newcommand{\Mij}[3]{#1_{{#2},{#3}}}

\newcommand{\numberthis}{\addtocounter{equation}{1}\tag{\theequation}}

\usepackage{color}
\newcommand{\algname}{\textsc{GraphSampling}}

\title{Spectral Sparsification of Random-Walk Matrix Polynomials}

\author{Dehua Cheng
  \footnotemark[1]
  \\USC \and
  Yu Cheng
  \footnotemark[2]
  \\USC \and
  Yan Liu
  \thanks{Supported in part by NSF research grants IIS-1134990, IIS-1254206 and U.S. Defense Advanced Research Projects Agency (DARPA) under Social Media in Strategic Communication (SMISC) program, Agreement Number W911NF-12-1-0034.}
  \\USC \and
  Richard Peng
  \\MIT \and
  Shang-Hua Teng
  \thanks{Supported in part by NSF grants CCF-1111270 and CCF-096448 and by the Simons Investigator Award from the Simons Foundation.}
  \\ USC}


\begin{document}

\maketitle

\begin{abstract}

We consider a fundamental algorithmic question in 
  spectral graph theory:
Compute a {\em spectral sparsifier} of
  a {\em random-walk matrix-polynomial}
$$\M{L}_{\V{\alpha}}(G) = \M{D} - \sum_{r=1}^d \alpha_r\M{D} \cdot 
  \left(\M{D}^{-1}\M{A}\right)^r$$
where $\M{A}$ is the adjacency matrix of a weighted, 
  undirected graph, $\M{D}$ is the diagonal matrix of 
  weighted degrees, and  
  $\V{\alpha} = (\alpha_1,...,\alpha_d)$ are nonnegative coefficients
  with $\sum_{r=1}^d \alpha_r = 1$.
Recall that $\M{D}^{-1}\M{A}$ is the transition matrix of
  random walks on the graph.
In its linear form (when $d = 1$), 
 the matrix polynomial becomes $\M{D} - \M{A}$, which is a {\em Laplacian matrix},
 and hence this problem becomes the standard
  spectral sparsification problem, which enjoys nearly linear time
  solutions \cite{SpielmanTengSpectralSparsification,SpielmanSrivastava}.
However, the sparsification of $\M{L}_{\V{\alpha}}(G)$
  appears to be algorithmically challenging as the matrix power 
  $(\M{D}^{-1}\M{A})^r$
  is defined by all paths of length $r$,
  whose precise calculation would be prohibitively expensive 
  (due to the cost of matrix multiplication and densification in the matrix powers).

In this paper, we develop the first nearly linear time algorithm 
  for this sparsification problem:
For any $G$ with $n$ vertices and $m$ edges,
  $d$ coefficients $\V{\alpha}$,
  and $\epsilon > 0$,
  our algorithm runs in time 
$O(d^2\cdot m\cdot \log^2 n/\epsilon^{2})$
  to construct a Laplacian  matrix $\tilde{\M{L}} =\M{D}-\tilde{\M{A}}$ with 
   $O(n\log n/\epsilon^{2})$
   non-zeros such that
$$\tilde{\M{L}} \approx_{\epsilon} \M{L}_{\V{\alpha}}(G) =  \M{D} - \sum_{r=1}^d \alpha_r\M{D} \cdot  \left(\M{D}^{-1}\M{A}\right)^r.$$
In the equation, $\tilde{\M{L}} \approx_{\epsilon} \M{L}_{\V{\alpha}}(G)$ 
denotes that $\tilde{\M{L}}$ and $\M{L}_{\V{\alpha}}(G)$
  are {\em spectrally similar} within a factor of $1\pm \epsilon$ 
  as defined in \cite{SpielmanTengSpectralSparsification}.

Matrix polynomials arise in 
  mathematical analysis of matrix functions as well as
  numerical solutions (such as Newton's method)
  of matrix equations.
Our work is particularly motivated by the algorithmic problems
  for speeding up the classic Newton's method
  in applications such as computing the inverse 
  square-root of the precision matrix
  of a Gaussian random field (in order to obtain {\em i.i.d}
  random samples of the graphic model), as well as 
  computing the $q\textsuperscript{th}$-root transition (for $q\geq 1$) in a time-reversible 
  Markov model. 
The key algorithmic step for both applications
  is the construction of a spectral sparsifier
  of a constant degree\footnote{In numerical algorithms
  where random-walk matrix-polynomials arise, the degree $d$ of the polynomials
  is usually either a constant or bounded above by a polylogarithmic
  function in $n$.} random-walk matrix-polynomials 
  introduced by Newton's  method.
Our sparsification algorithm leads to a simpler and faster algorithm
  for these problems than the previous one \cite{ChengCLPT14} that 
  circumvents the challenging problem of sparsifying high-degree random-walk 
  matrix polynomials at the cost of slower convergences and complex approximation.
Our algorithm can also be used to 
  build efficient data structures
  for effective resistances for multi-step time-reversible Markov models,
  and we anticipate that it could be useful for other 
  tasks in network analysis.

\end{abstract}


\section{Introduction}

Polynomials are used in many fields of mathematics and science
  for encoding equations that model various physical,
  biological and economical processes.
In scientific computing and its underpinning numerical analysis,
  polynomials appear naturally in (truncated) Taylor series,
  the fast multipole method \cite{Greengard}, and various numerical approximations.
These computational methods are responsible for a large part 
  of engineering and scientific simulations ranging from weather forecasting
  to earthquake modeling \cite{StrangFix} to particle/galaxy simulation \cite{Greengard}.

Like its scalar counterpart, 
  matrix polynomials of the form, $\sum_{i=0}^d c_i \cdot\M{M}^i$,
  arise in mathematical analysis of matrix functions and dynamical systems,
  as well as numerical solution of matrix equations.
One class of matrices of particular importance in network analysis is the 
 adjacency matrix of a weighted, undirected graph $G$.
We will denote these matrices using $\M{A}$.
If we use $\M{D}$ to denote the diagonal matrix containing 
  weighted degrees of vertices,
  $\M{D}^{-1}\M{A}$ is the transition matrix of
  random walks on the graph.
Powers of this matrix correspond to multiple steps of random walks on the
  graph, and are graphs themselves.
However, they are usually dense, and are cost-prohibitive
  both in time and memory to construct when the input graph
  $G$ is of large-scale.
Our objective is to efficiently construct sparse approximations 
  of this natural family of  random-walk matrices.

\subsection{Motivation from Gaussian Sampling and Fractional Markov Transition}
A problem that motivates our study of random walk polynomials is
the inverse square-root problem: find a linear operator $\CC$ such that $\CC^\T \CC$
is close to the inverse of the {\em Laplacian matrix}.
Laplacian matrices are a subclass of {\em SDDM matrices}
  \footnote{SDDM matrices are positive definite symmetric diagonal dominant matrices with non-positive off-diagonal elements.}
  and have standard split-form representation of $\M{L} = \M{D} - \M{A}$.
When we apply an extension of the classical Newton's method to
  this form we can reduce the problem
  to that of factoring 
$\M{D} - \left(\frac{3}{4} \M{D}\cdot(\M{D}^{-1} \M{A})^2 
 + \frac{1}{4}\M{D} \cdot(\M{D}^{-1} \M{A})^3\right)$,
  which has smaller spectral radius, by using 
  the matrix identity
\begin{equation}\label{iter}
\left(\M{D}-\M{A}\right)^{-\frac{1}{2}} = 
\left(\M{I}+\frac{1}{2}\M{D}^{-1}\M{A}\right)
\left(\M{D} - \left(\frac{3}{4} \M{D}\cdot(\M{D}^{-1} \M{A})^2 
 + \frac{1}{4}\M{D} \cdot(\M{D}^{-1} \M{A})^3\right)\right)
^{-\frac{1}{2}}.
\end{equation}

Finding inverse square-root factorizations is a key step in
  sampling from Gaussian graphical models:
Given a graphical model of a {\em Gaussian random field}
  specified by its precision matrix $\M{\Lambda}$ and potential vector $\V{h} $,
 i.e., $\text{Pr}(\mathbf{x}| \M{\Lambda}, \V{h}  ) \propto \exp (-\frac{1}{2} \mathbf{x}^{\T} \M{\Lambda} \mathbf{x} + \V{h}^{\T} \mathbf{x})$,  
  efficiently generate {\em i.i.d}
  random samples from this {\em multivariate Gaussian distributions} 
  \cite{LohWainwright}.
If one can compute an efficient sparse representation
  of $\M{C} \approx \M{\Lambda}^{-1/2}$, 
  then one can convert {\em i.i.d.} standard Gaussian random vector 
  $\V{z}$ using $\V{x} =\M{C}\V{z}+\V{\mu}$ (where
  $\V{\mu} = \M{\Lambda}^{-1}\V{h}$) 
  to {i.i.d} random vectors of a Gaussian random field
  that numerically approximates the one defined
   by $(\M{\Lambda},\V{h})$~\cite{ChengCLPT14}.
Furthermore, if the precision matrix $\M{\Lambda} = (\lambda_{i,j})$
  is {\em symmetric diagonally dominant} (SDD), 
  i.e., for all $i$, $\lambda_{i,i} > \sum_{j\neq
  i} \left|\lambda_{i,j}\right|$,
  then one can reduce this factorization problem
  to the problem formulated by Equation (\ref{iter})
  involving an SDDM matrix.

Then, in order to iteratively apply Equation (\ref{iter})
  to build an efficient representation of the inverse square-root factor of $\M{D}-\M{A}$,
  one needs to efficiently construct the second term in Equation (\ref{iter}),
\begin{eqnarray}\label{eqn:Newton}
\M{D} - \left(\frac{3}{4} \M{D}\cdot(\M{D}^{-1} \M{A})^2 
 + \frac{1}{4}\M{D} \cdot(\M{D}^{-1} \M{A})^3\right).
\end{eqnarray}
The quadric and cubic powers in this matrix can be very dense,
  making exact computations involving them expensive.
Instead, we will directly compute an approximation of this matrix that
  still suffices for algorithmic purposes.

Finding an inverse square-root of an SDDM matrix is a special 
  case of the following
   basic algorithmic problem in spectral graph theory and numerical analysis \cite{ChengCLPT14}: 
{\em \begin{quote}
Given an $n\times n$ SDDM matrix $\MM$,
   a non-zero integer $q$, and an approximation parameter $\epsilon$,
   compute an efficient sparse representation of
  an  $n\times n$ linear operator $\tilde{\CC}$ such that 
\[
\M{M}^{1/q} \approx_{\epsilon} \tilde{\CC} \tilde{\CC}^\T
\]
\end{quote}}
\noindent where $\approx_{\epsilon}$ is spectral similarity between linear operators
which we will define at the start of Section~\ref{sec:background}.

The matrix $q^{th}$-root computation appears in several numerical applications 
 and particularly in the analysis of Markov models \cite{HighamLin}.
For example, in his talk for Brain Davies' 65  Birthday conference (2009), 
  Nick Higham quoted an email that he received from a power company
  regarding the usage of an electricity network to illustrate the
  practical needs of taking the $q^{th}$-root of a Markov transition.
{\em 
\begin{quote}
``I have an Excel spreadsheet containing the transition matrix of 
 how a company's 
[Standard \& Poor's] credit 
rating charges from on year to 
 the next. I'd like to be working in
eighths of a year, so the aim is
 to find the {\bf eighth root of the matrix.}''
\end{quote}
}
In our case, note that when the graph 
  is connected, $\M{D}^{-1}\M{A}$ is the transition matrix of a 
  {\em reversible Markov chain} \cite{DavidAldous}, and the first order
  approximation of the $q^{th}$-root transition is
$\M{I} - (\M{I}-\M{D}^{-1}\M{A})^{1/q}$.
Extension of Newton's method then leads to an iterative formula similar 
  to Equation (\ref{iter}) for finding factorization of the $q^{th}$ root.
Thus, the key algorithmic task for obtaining a nearly linear time Newton(-like) algorithm
  for $q^{th}$-root factorizations is the efficient approximation of
  matrix polynomials akin to Equation (\ref{eqn:Newton}).

\subsection{Main Technical Contribution}

We start with a definition that captures the matrix polynomials
  such like Equation (\ref{eqn:Newton}) that arise in the application of
  Newton's or Newton-like methods to graph Laplacians.

\begin{definition}[Random-Walk Matrix-Polynomials]
Let $\M{A}$ and $\M{D}$ be the adjacency matrix and diagonal weighted
 degree matrix of a weighted, undirected graph $G$ respectively.
For a non-negative vector $\V{\alpha} = (\alpha_1,...,\alpha_d)$
  with $\sum_{r=1}^d \alpha_r = 1$, the matrix 
\begin{eqnarray}\label{eqn:mmp}
\M{L}_{\V{\alpha}}(G) = \M{D} - \sum_{r=1}^d \alpha_r\M{D} \cdot 
  \left(\M{D}^{-1}\M{A}\right)^r
\end{eqnarray}
is a $d$-degree {\em random-walk matrix-polynomial} of $G$.
\end{definition}

Random-walk matrix-polynomials naturally include the 
  {\em graph Laplacian} $G$ as the linear case:
when $d = 1$, the matrix polynomial becomes $\M{L}(G)=\M{D} - \M{A}$,
  which is the Laplacian matrix of $G$.
In fact, the following proposition can be established by a simple induction,
  which we prove in Appendix \ref{apx:graphprev}.

\begin{restatable}[Laplacian Preservation]{proposition}{restateGraphPrev}
\label{prop:Laplacian}
\label{lem:graphprev}
For any weighted, undirected graph $G$ with adjacency matrix $\M{A}$ and diagonal matrix 
$\M{D}$, for every non-negative vector 
  $\V{\alpha} = (\alpha_1,...,\alpha_d)$ such that
  with $\sum_{r=1}^d \alpha_r = 1$, 
  the random-walk matrix-polynomial
$\M{L}_{\V{\alpha}}(G)$
  remains a Laplacian matrix.
  \end{restatable}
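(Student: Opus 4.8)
The plan is to show that $\M{L}_{\V{\alpha}}(G)$ satisfies the two defining properties of a Laplacian matrix: (i) it is symmetric with non-positive off-diagonal entries, and (ii) each of its row sums is zero (equivalently, $\M{L}_{\V{\alpha}}(G) \V{1} = \V{0}$). Since $\M{L}_{\V{\alpha}}(G)$ is a convex combination of the matrices $\M{D} - \M{D}(\M{D}^{-1}\M{A})^r$ with weights $\alpha_r \geq 0$ summing to $1$, and since the Laplacian matrices form a convex cone closed under such combinations (a sum of Laplacians is a Laplacian, and a nonnegative scalar multiple of a Laplacian is a Laplacian), it suffices to prove that each individual term $\M{D} - \M{D}(\M{D}^{-1}\M{A})^r$ is itself a Laplacian matrix, for every $r \geq 1$. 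This is the claim I would establish by induction on $r$.

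For the base case $r = 1$, the term is $\M{D} - \M{A} = \M{L}(G)$, which is the ordinary graph Laplacian. For the inductive step, the key is to observe that $\M{D}(\M{D}^{-1}\M{A})^r$ is, up to reweighting, the adjacency matrix of a new graph: writing $\M{W}_r := \M{D}(\M{D}^{-1}\M{A})^r = \M{A}(\M{D}^{-1}\M{A})^{r-1}$, one checks that $\M{W}_r$ is symmetric (since $\M{W}_r = \M{D}^{1/2}(\M{D}^{-1/2}\M{A}\M{D}^{-1/2})^r \M{D}^{1/2}$ and $\M{D}^{-1/2}\M{A}\M{D}^{-1/2}$ is symmetric) and has non-negative entries (as a product of non-negative matrices $\M{A}$ and $\M{D}^{-1}\M{A}$). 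Hence $\M{W}_r$ is the (weighted) adjacency matrix of some undirected graph $G_r$ — possibly with self-loops, but self-loops contribute equally to the diagonal of the degree matrix and of the adjacency matrix, so they cancel in the Laplacian. It then remains to verify that the diagonal matrix appearing in the term, namely $\M{D}$ itself, equals the weighted-degree matrix of $G_r$, i.e. that the row sums of $\M{W}_r$ equal the corresponding diagonal entries of $\M{D}$. This is exactly the statement $\M{W}_r \V{1} = \M{D}\V{1}$, which follows because $\M{D}^{-1}\M{A}$ is a stochastic matrix: $(\M{D}^{-1}\M{A})\V{1} = \V{1}$, so $\M{W}_r \V{1} = \M{D}(\M{D}^{-1}\M{A})^r \V{1} = \M{D}\V{1}$. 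Combining these facts, $\M{D} - \M{W}_r$ is symmetric, has non-positive off-diagonals, and zero row sums — i.e., it is a Laplacian.

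The main point requiring care — the mild "obstacle" — is the handling of self-loops and of the matching between $\M{D}$ and the degree matrix of $G_r$: one must check that $\M{D}$ is precisely the weighted degree matrix induced by $\M{W}_r$ rather than some other diagonal matrix, and that diagonal entries of $\M{W}_r$ (walks of length $r$ returning to their start) do not break the Laplacian structure. Both are settled by the two identities above: stochasticity of $\M{D}^{-1}\M{A}$ gives the degree-matching, and self-loops cancel in forming $\M{D} - \M{W}_r$. Once each term $\M{D} - \M{D}(\M{D}^{-1}\M{A})^r$ is known to be a Laplacian, closure of the Laplacian cone under the convex combination $\sum_r \alpha_r(\cdot)$ finishes the proof; note that the leading $\M{D}$ terms combine correctly because $\sum_r \alpha_r = 1$, so $\M{L}_{\V{\alpha}}(G) = \M{D} - \sum_r \alpha_r \M{W}_r = \sum_r \alpha_r(\M{D} - \M{W}_r)$.
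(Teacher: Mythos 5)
Your proposal is correct and follows essentially the same route as the paper: the heart of both arguments is that the row sums of $\M{D}(\M{D}^{-1}\M{A})^r$ equal the weighted degrees (you get this from stochasticity of $\M{D}^{-1}\M{A}$, the paper by the equivalent induction $S_{r+1}=S_r$), combined with symmetry and sign conditions and the normalization $\sum_r \alpha_r = 1$. Your per-monomial decomposition plus convex-cone closure, and the explicit verification of symmetry and nonnegativity, are only minor repackagings of the paper's argument.
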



Consequently, applying spectral sparsification algorithms~\cite{SpielmanTengSpectralSparsification,SpielmanSrivastava,BSS10}
to $\M{L}_{\V{\alpha}}(G)$ gives:

\begin{proposition}[Spectral Sparsifiers of Random-Walk Matrix Polynomials]
For all $G$ and $\V{\alpha}$ as in Proposition \ref{prop:Laplacian},
for any $\epsilon > 0$, there exists a Laplacian matrix 
$\tilde{\M{L}} = {\M{D}} - \tilde{\M{A}}$ with ${O}(n\log n /\epsilon^2)$ non-zeros
  such that  for all $\V{x}\in \Reals{n}$
\begin{eqnarray}\label{eqn:spectralSparsification}
(1-\epsilon)\cdot \V{x}^\T  \tilde{\M{L}} \V{x} \leq 
\V{x}^\T \left(\M{D} - \sum_{r=1}^d \alpha_r\M{D} \cdot 
  \left(\M{D}^{-1}\M{A}\right)^r\right)\V{x}
\leq (1+\epsilon)\cdot \V{x}^\T \tilde{\M{L}} \V{x}.
\end{eqnarray}
\end{proposition}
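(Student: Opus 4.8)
The plan is to obtain this as an immediate corollary of Proposition~\ref{prop:Laplacian} together with any nearly-linear-size spectral sparsification theorem for graph Laplacians; nothing beyond those two ingredients is needed, so the argument is essentially bookkeeping.

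First I would invoke Proposition~\ref{prop:Laplacian}: the matrix $\M{M} := \M{L}_{\V{\alpha}}(G) = \M{D} - \sum_{r=1}^{d}\alpha_r \M{D}(\M{D}^{-1}\M{A})^r$ is itself the Laplacian matrix of a weighted undirected graph $H$ on the same $n$ vertices (possibly carrying self-loops, which do not affect the Laplacian). I would also record that $H$ has \emph{the same} weighted degree matrix $\M{D}$: writing $\M{P}=\M{D}^{-1}\M{A}$ for the random-walk transition matrix, $\M{P}\mathbf{1}=\mathbf{1}$ implies every $\M{D}\M{P}^r$ has the same row sums as $\M{D}$, and since $\sum_r\alpha_r=1$ the off-diagonal part $\M{A}_{\V{\alpha}} := \sum_{r=1}^d \alpha_r \M{D}\M{P}^r$ of $\M{M}$ is a symmetric nonnegative matrix with $\M{A}_{\V{\alpha}}\mathbf{1}=\M{D}\mathbf{1}$, so $\M{M}=\M{D}-\M{A}_{\V{\alpha}}$ is a genuine split form with degree matrix $\M{D}$.

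Next I would feed $H$ to an off-the-shelf sparsifier, e.g.\ \cite{SpielmanSrivastava} (or \cite{SpielmanTengSpectralSparsification}, or the $O(n/\epsilon^2)$-edge construction of \cite{BSS10}), which returns a Laplacian $\tilde{\M{L}}$ with $O(n\log n/\epsilon^2)$ nonzero entries and $(1-\epsilon)\,\V{x}^\T\tilde{\M{L}}\V{x}\le \V{x}^\T\M{M}\V{x}\le(1+\epsilon)\,\V{x}^\T\tilde{\M{L}}\V{x}$ for all $\V{x}\in\Reals{n}$ --- which is exactly inequality (\ref{eqn:spectralSparsification}). To display $\tilde{\M{L}}$ in the advertised form $\tilde{\M{L}}=\M{D}-\tilde{\M{A}}$, I would simply set $\tilde{\M{A}} := \M{D}-\tilde{\M{L}}$; this is symmetric, shares the off-diagonal sparsity pattern of $\tilde{\M{L}}$ (up to a diagonal term recording the gap between $\M{D}$ and the weighted degrees of $\tilde{\M{L}}$), and satisfies $\tilde{\M{A}}\mathbf{1}=\M{D}\mathbf{1}$, so $\tilde{\M{L}}=\M{D}-\tilde{\M{A}}$ is the required Laplacian.

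I expect no real obstacle: the content is entirely in Proposition~\ref{prop:Laplacian} (which certifies that we are sparsifying an honest graph Laplacian) and in the cited sparsification results (used as black boxes). The only points needing a sentence of care are (i) checking that the $O(n\log n/\epsilon^2)$ sparsity figure is the one the invoked theorem actually delivers, and (ii) the trivial split-form bookkeeping above. By contrast, the genuinely novel and difficult part of the paper --- producing such a $\tilde{\M{L}}$ in $O(d^2 m\log^2 n/\epsilon^2)$ time \emph{without} ever forming the dense intermediate matrices $(\M{D}^{-1}\M{A})^r$ or $\M{M}$ --- plays no role in this purely existential statement.
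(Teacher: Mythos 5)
Your proposal is correct and takes essentially the same route as the paper, which states this proposition as an immediate consequence of Proposition~\ref{prop:Laplacian} (so that $\M{L}_{\V{\alpha}}(G)$ is an honest graph Laplacian) followed by a black-box application of the sparsification results of \cite{SpielmanTengSpectralSparsification,SpielmanSrivastava,BSS10}. Your additional degree-matrix and split-form bookkeeping for $\tilde{\M{L}}=\M{D}-\tilde{\M{A}}$ is harmless detail beyond what the paper records.
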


The Laplacian matrix $\tilde{\M{L}}$ satisfying Equation (\ref{eqn:spectralSparsification})
   is called {\em spectrally similar} with approximation parameter $\epsilon$ to 
  $\M{L}_{\V{\alpha}}(G)$ \cite{SpielmanTengSpectralSparsification}.
The computation of a (nearly) linear size {\em spectral sparsifier} of a dense
  Laplacian matrix is 
  a fundamental algorithmic problem in spectral graph theory
  that has been used in solving linear systems \cite{SpielmanTengLinear,KMP10SDDSolver}
  and combinatorial optimization \cite{CKMST10Maxflow}.
The work of \cite{SpielmanSrivastava,SpielmanTengSpectralSparsification}
  showed that spectral sparsifier of $O(n\log n/\epsilon^{2})$ non-zeros 
  can be constructed in $O(m\log^2 n/\epsilon^{2})$ time 
  for any $n\times n$ Laplacian matrix $\M{L}$ with $m$ non-zeros.
A recent technique of \cite{PengSpielman} can sparsify a degree $2$ random-walk 
matrix-polynomial in nearly linear time.

In this paper, we give the first nearly linear time spectral-sparsification algorithm
  for all random-walk matrix-polynomials.
Our sparsification algorithm is built on 
 the following key mathematical observation that might be interesting on its own:
One can obtain a sharp enough upper bound on the effective resistances
  of the high-order polynomial  $\M{D} - \M{D} (\M{D}^{-1} \M{A})^r$
 from the combination of the linear $\M{D} - \M{A}$ and
  quadratic $\M{D} - \M{A}\M{D}^{-1}\M{A}$ polynomials.

This allows us to design an efficient path sampling algorithm that utilizes this 
  mathematical observation to achieve the critical sparsification.
We prove the following result which generalizes
  the works of \cite{SpielmanTengSpectralSparsification,SpielmanSrivastava,PengSpielman}.

\begin{theorem}[Random-Walk Polynomials Sparsification]
\label{thm:randWalk}
For any weighted, undirected graph $G$ with $n$ vertices and $m$ non-zeros,
 for every non-negative vector $\V{\alpha} = (\alpha_1,...,\alpha_d)$
   with $\sum_{r=1}^d \alpha_r = 1$,
for any $\epsilon > 0$, 
 we can construct in time  
$O(d^2\cdot m\cdot \log^2 n/\epsilon^{2})$
  a spectral sparsifier with $O(n\log n/\epsilon^{2})$ non-zeros and
  approximation parameter $\epsilon$
  for the random-walk matrix-polynomial
$\M{L}_{\V{\alpha}}(G)$.
\end{theorem}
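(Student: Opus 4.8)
The approach is to sparsify each homogeneous piece $\M{L}_r := \M{D} - \M{D}(\M{D}^{-1}\M{A})^r$ of the polynomial separately --- each is a Laplacian by Proposition~\ref{prop:Laplacian}, and $\M{L}_{\V{\alpha}}(G)=\sum_{r=1}^{d}\alpha_r\M{L}_r$ with $\alpha_r\ge 0$, so spectral approximations of the pieces combine, after which one final ordinary sparsification brings the edge count down to $O(n\log n/\epsilon^2)$. Expanding the matrix power over walks, one writes $\M{L}_r=\tfrac12\sum_{p}w_p\,\V{b}_{v_0v_r}\V{b}_{v_0v_r}^{\T}$, where $p=(v_0,\dots,v_r)$ ranges over length-$r$ walks of $G$, $\V{b}_{ij}=\V{e}_i-\V{e}_j$, and $w_p=\prod_{k=0}^{r-1}A_{v_kv_{k+1}}\big/\prod_{j=1}^{r-1}d_{v_j}$ (walks with $v_0=v_r$ vanish). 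By the Spielman--Srivastava / matrix-Chernoff analysis, $\tilde{\M{L}}_r\approx_{\epsilon}\M{L}_r$ is obtained by drawing $O(T\log n/\epsilon^2)$ walks independently with probability $\propto\tilde{\tau}_p$, rescaling and averaging, provided $\tilde{\tau}_p\ge\tau_p:=w_p\,R_{\mathrm{eff}}^{\M{L}_r}(v_0,v_r)$ (the leverage score of the walk-edge) and $T:=\sum_p\tilde{\tau}_p$. Since $\sum_p\tau_p=2\,\mathrm{rank}(\M{L}_r)=O(n)$, everything hinges on choosing $\tilde{\tau}_p$ that is (i) a valid over-estimate, (ii) has $T=\tilde O(n)$, and (iii) admits an efficient sampler for the distribution $\propto\tilde{\tau}_p$ --- all without ever forming the dense matrix $(\M{D}^{-1}\M{A})^r$.

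For (i)--(ii) I would prove the structural estimate advertised in the introduction. With the symmetric normalized adjacency $\M{N}:=\M{D}^{-1/2}\M{A}\M{D}^{-1/2}$ (eigenvalues in $[-1,1]$) one has $\M{L}_r=\M{D}^{1/2}(\M{I}-\M{N}^r)\M{D}^{1/2}$, and the scalar inequalities
\[
\tfrac12(1-\mu)\le 1-\mu^{r}\le r(1-\mu)\quad(r\ \text{odd}),\qquad
1-\mu^{2}\le 1-\mu^{r}\le \tfrac{r}{2}(1-\mu^{2})\quad(r\ \text{even}),
\]
valid on $[-1,1]$, give (conjugating by $\M{D}^{1/2}$) $\tfrac12\M{L}_1\preceq\M{L}_r\preceq r\M{L}_1$ for odd $r$ and $\M{L}_2\preceq\M{L}_r\preceq\tfrac{r}{2}\M{L}_2$ for even $r$. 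The lower bounds preserve kernels, hence invert, so $R_{\mathrm{eff}}^{\M{L}_r}(v_0,v_r)\le 2R_{\mathrm{eff}}^{\M{L}_1}(v_0,v_r)$ (odd) and $\le R_{\mathrm{eff}}^{\M{L}_2}(v_0,v_r)$ (even); the triangle inequality for effective resistance then bounds these by $2\sum_{k=0}^{r-1}R_{\mathrm{eff}}^{\M{L}_1}(v_k,v_{k+1})$, respectively $\sum_{l}R_{\mathrm{eff}}^{\M{L}_2}(v_{2l},v_{2l+2})$ --- a sum over \emph{local} pieces: single edges of $G$ for odd $r$, two-step hops $v_{2l}\to v_{2l+1}\to v_{2l+2}$ through a common neighbor for even $r$. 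I would take $\tilde{\tau}_p:=w_p$ times this local sum, with the resistances replaced by constant-factor Johnson--Lindenstrauss estimates $\hat R^{\M{L}_1}$, $\hat R^{\M{L}_2}$ (from $O(\log n)$ Laplacian solves on $G$, respectively on a sparsifier of the quadratic polynomial $\M{D}-\M{A}\M{D}^{-1}\M{A}$, which is furnished by the degree-$2$ base case of \cite{PengSpielman}). The total $T$ is then controlled by the collapsing identity $\sum_{p:\,(v_k,v_{k+1})=(u,v)}w_p=A_{uv}$ (and its two-step analogue $\sum_{p:\,(v_{2l},v_{2l+2})=(a,b)}w_p=(\M{A}\M{D}^{-1}\M{A})_{ab}$), under which each position contributes $\sum_{(u,v)}A_{uv}\hat R^{\M{L}_1}(u,v)=O(n)$ (a sum of leverage scores of $G$), respectively $\sum_{a,b}(\M{A}\M{D}^{-1}\M{A})_{ab}\hat R^{\M{L}_2}(a,b)=O(n)$; summing over the at most $r$ positions yields $T=O(rn)=O(dn)$.

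For (iii) I would use that $w_p$ \emph{factorizes} at any position $k$ as $w_p=W_L(v_0,\dots,v_k\mid v_k)\cdot A_{v_kv_{k+1}}\cdot W_R(v_{k+1},\dots,v_r\mid v_{k+1})$, where --- by the telescoping identity $\sum_{v_0,\dots,v_{k-1}}\prod_{i<k}A_{v_iv_{i+1}}/\prod_{1\le j\le k}d_{v_j}=1$ --- the factors $W_L$ and $W_R$ are exactly the laws of a $k$-step reverse random walk from $v_k$ and an $(r-1-k)$-step forward random walk from $v_{k+1}$. Consequently, sampling $p$ with probability $\propto w_p\cdot\phi(v_k,v_{k+1})$ for a local edge weight $\phi$ amounts to: pick the pivot position $k$; pick the pivot edge $(u,v)$ with probability $\propto A_{uv}\phi(u,v)$ from a precomputed alias table over the $m$ edges of $G$; then grow the walk leftward and rightward by ordinary random-walk steps, each in $O(1)$ time with per-vertex alias tables --- $O(r)$ per walk after $O(m)$ preprocessing. (In the even case the pivot is a two-step hop; one may sample it by first choosing the midpoint and then its endpoints through the decomposition $\hat R^{\M{L}_2}(a,b)=\nbr{\V{z}_a-\V{z}_b}^2=\sum_{c}(z_{a,c}-z_{b,c})^2$ into Johnson--Lindenstrauss coordinates, or, equivalently, note $\M{L}_{2s}=\M{D}-\M{D}(\M{D}^{-1}\M{B})^s$ with $\M{B}=\M{A}\M{D}^{-1}\M{A}$ and reduce even $r$ to the odd case run on a sparsifier of $\M{D}-\M{B}$.) Putting the pieces together --- $O(dn\log n/\epsilon^2)$ walks of length $O(d)$ over the $d$ values of $r$, $O(\log n)$ Laplacian solves for the resistance estimates, the degree-$2$ base case, and the final re-sparsification --- and accounting carefully yields the stated time $O(d^2m\log^2 n/\epsilon^2)$ and size $O(n\log n/\epsilon^2)$.

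The crux, and the place I expect most of the work, is meeting (i)--(iii) at once: the effective-resistance over-estimate must be loose enough to be computable yet tight enough to keep the sampling budget $T$ near-linear in $n$, and simultaneously it must decompose additively into single-edge (or two-edge) contributions so that the \emph{correct} importance distribution is samplable by short random walks and alias tables. Both requirements are reconciled only by splitting on the parity of $r$ and resting the odd case on the linear polynomial $\M{D}-\M{A}$ and the even case on (a sparsifier of) the quadratic polynomial $\M{D}-\M{A}\M{D}^{-1}\M{A}$; in particular, spectral sparsification of the quadratic random-walk polynomial is a genuine prerequisite for the even case.
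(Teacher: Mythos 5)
Your skeleton is the paper's skeleton: support $\M{L}_{G_r}$ by the linear term for odd $r$ and the quadratic term for even $r$, oversample walk-edges by leverage-score upper bounds via a pivot edge plus two random-walk extensions, then resparsify once at the end. Where you diverge is the choice of over-estimates, and that is where there is a genuine gap. For even $r$ you weight each walk by the \emph{true} (JL-estimated) effective resistances $\hat R^{\M{L}_2}(v_{2l},v_{2l+2})$ of the dense graph $G_2$, so your sampler must draw a pivot two-hop $(a,u,b)$ with probability proportional to $\M{A}_{au}\M{A}_{ub}\hat R^{\M{L}_2}(a,b)/\M{D}_{uu}$. This does not factor as ``choose the midpoint $u$, then its endpoints independently'': $\hat R^{\M{L}_2}(a,b)=\|\V{z}_a-\V{z}_b\|^2$ couples $a$ and $b$ through signed cross terms, and $G_2$ has up to $\sum_u \deg(u)^2$ edges so the distribution cannot be tabulated; making such coupled clique-like distributions samplable is precisely the specialized machinery of Peng--Spielman that the paper is explicitly trying to avoid. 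Your fallback (``reduce even $r=2s$ to the odd case run on a sparsifier of $\M{D}-\M{A}\M{D}^{-1}\M{A}$'') is not an equivalent reformulation: it silently assumes that $\M{D}-\tilde{\M{B}}\approx\M{D}-\M{B}$ implies $\M{D}-\M{D}(\M{D}^{-1}\tilde{\M{B}})^s\approx\M{D}-\M{D}(\M{D}^{-1}\M{B})^s$, i.e.\ preservation of spectral approximation under matrix powers. That is exactly the delicate issue; the paper proves it only for squaring (Section~\ref{sec:higherDegree}), and even there it needs the extra condition $\M{D}+\M{A}\approx\M{D}+\tilde{\M{A}}$ and a Schur-complement argument, so you cannot invoke it for free here.

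The fix --- and the paper's actual route --- is to use looser but \emph{local} bounds: for odd $r$, $R_{G_r}(u_0,u_r)\le\sum_i 2/\M{A}(u_{i-1},u_i)$; for even $r$, bound each two-hop resistance not by the true $R^{\M{L}_2}$ but by the closed-form resistance $1/\M{A}(v_{2l},v_{2l+1})+1/\M{A}(v_{2l+1},v_{2l+2})$ inside the anchored rank-one subgraph $G_2(v_{2l+1})$ (the Peng--Spielman claim restated in Appendix~\ref{apx:supportRankOne}), so that in both parities $Z(\V{p})=\sum_i 2/\M{A}(u_{i-1},u_i)$. Then $\sum_{\V{p}} w(\V{p})Z(\V{p})=2rm$ exactly, and the importance distribution factorizes into ``uniform position, uniform edge of $G$, random-walk extensions from both ends'' --- no JL sketches, no Laplacian solves, and no degree-2 sparsifier as a prerequisite; the larger budget $O(dm\log n/\epsilon^2)$ (versus your $O(dn\log n/\epsilon^2)$) is harmless because the final resparsification happens anyway. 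A secondary accounting issue: sparsifying each monomial separately costs $\Theta(rn\log n/\epsilon^2)$ samples per nonzero coefficient regardless of $\alpha_r$, which with many nonzero coefficients and $m=O(n)$, $d\gg\log n$ exceeds the claimed $O(d^2m\log^2 n/\epsilon^2)$; your own total of ``$O(dn\log n/\epsilon^2)$ walks over the $d$ values of $r$'' is only consistent with sampling the odd and even parts jointly, choosing the walk length with probability proportional to $\alpha_r$, which is how the paper handles the full polynomial.
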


The total work of our sparsification algorithm depends
  quadratically in the degree of the polynomial,
  which could be expensive when $d=\Theta(n^c)$ for some constant $c > 0$.
In Section \ref{sec:higherDegree} we present, for even degrees $d$,
  a more efficient algorithm to sparsify
$\M{D} - \M{D}(\M{D}^{-1}\M{A})^d$.
We will show that, for any positive integer $r$, if we are given $\tilde{\AA}$ such that
\begin{align}
  \DD-\tilde{\AA}
  \approx
  \DD-\DD \left(\DD^{-1}\AA\right)^{2r},
\end{align}
we can construct a sparse matrix $\tilde{\AA}_{\times}$ and a sparse matrix 
 $\tilde{\AA}_{+}$ such that 
\begin{align}
  \DD - \tilde{\AA}_{\times} 
  \approx
  \DD-\DD \left(\DD^{-1}\AA\right)^{4r} \quad \mbox{and} \quad
   \DD - \tilde{\AA}_{+}
  \approx
  \DD-\DD \left(\DD^{-1}\AA\right)^{2r+4}.
\end{align}


Applying these two routines inductively gives an algorithm that,
  for any $d$ divisible by $4$,
  approximates the $d$-degree random-walk matrix-monomial
  in time polylogarithmic in $d$.
We can also extend this algorithm to handle all the even-degree monomials.
Because when $d = \OO{1/\epsilon}$, we can directly invoke Theorem \ref{thm:randWalk}
  to sparsify $\M{D} - \M{D} (\M{D}^{-1} \M{A})^{d}$,
  and when $d = \Omega(1/\epsilon)$, we know that 
  $\M{D} - \M{D} (\M{D}^{-1} \M{A})^{d} \approx_\epsilon
  \M{D} - \M{D} (\M{D}^{-1} \M{A})^{d+2}$,
  therefore any even-degree monomial can be replaced by a degree $4r$ monomial,
  while introducing a small error only.

\begin{theorem}[High Degree]
\label{thm:power}
For any even integer $d$, 
  let $\M{L}_{G_{d}} = \M{D}- \M{D} (\M{D}^{-1} \M{A})^{d}$ be the $d$-step random walk matrix,
For any $\epsilon > 0$, we can construct 
  a graph Laplacian $\M{L}_{\tilde G}$ 
  with $O(n\log n/\epsilon^{2})$ nonzero entries,
  in total work $O(m \cdot \log^3 n\cdot \log^5 (d) /\epsilon^4)$,
  such that
$
  \M{L}_{\tilde G} \approx_\epsilon \M{L}_{G_d}.
$
\end{theorem}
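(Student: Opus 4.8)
The plan is to establish Theorem~\ref{thm:power} by recursion on the walk length, turning the target degree $d$ into $O(\log d)$ constant‑degree sparsification calls in the style of ``binary exponentiation.'' Throughout I would work in the normalized basis: set $\M{W} = \M{D}^{-1/2}\M{A}\M{D}^{-1/2}$ and $\M{N}_{k} = \M{I}-\M{W}^{k}$, so that $\M{D}-\M{D}(\M{D}^{-1}\M{A})^{k} = \M{D}^{1/2}\M{N}_{k}\M{D}^{1/2}$ and, crucially, for \emph{even} $k$ one has $0 \preceq \M{W}^{k} \preceq \M{I}$, hence $0 \preceq \M{N}_{k} \preceq \M{I}$. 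First I would dispose of two reductions noted in the excerpt. If $d = O(1/\epsilon)$, apply Theorem~\ref{thm:randWalk} to the monomial directly, at cost $O(d^{2}m\log^{2}n/\epsilon^{2}) = O(m\log^{2}n/\epsilon^{4})$. If $d = \Omega(1/\epsilon)$, an eigenvalue check shows $\M{N}_{d} \approx_{\epsilon/3} \M{N}_{d+2}$ --- the ratio $(1-\mu^{d+2})/(1-\mu^{d})$ over $\mu\in(-1,1)$ is largest as $\mu\to 1$, where it tends to $1+2/d$ --- so $d$ may be rounded up to the nearest multiple of $4$ with only a small additive error. Thus it suffices to sparsify $\M{N}_{d}$ for $d$ divisible by $4$.

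For such $d$, write $d = 4m$ and expand $m$ in binary; then walk length $d$ is reached from walk length $4$ by $O(\log m) = O(\log d)$ steps, each of which either doubles the length or adds $4$ to it. I would precompute once, via Theorem~\ref{thm:randWalk}, sparsifiers of $\M{N}_{2}$ and $\M{N}_{4}$, and then apply the two routines named in the excerpt. The \emph{doubling} routine, given $\M{D}-\tilde{\M{A}} \approx \M{D}-\M{D}(\M{D}^{-1}\M{A})^{2r}$, forms the degree‑$2$ random‑walk polynomial $\M{D}-\M{D}(\M{D}^{-1}\tilde{\M{A}})^{2}$ of the sparse graph with adjacency $\tilde{\M{A}}$ and re‑sparsifies it with Theorem~\ref{thm:randWalk} at degree $2$; its correctness is the ``squaring lemma'' discussed below. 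The \emph{increment} routine uses the identity $\M{N}_{2r+4} = \M{N}_{4} + \M{W}^{2}\M{N}_{2r}\M{W}^{2}$: because $\M{N}_{2r}$ enters flanked by the \emph{fixed} matrix $\M{W}^{2}$, a spectral guarantee $\tilde{\M{N}}_{2r}\approx_{\epsilon'}\M{N}_{2r}$ passes through unchanged to $\M{W}^{2}\tilde{\M{N}}_{2r}\M{W}^{2}\approx_{\epsilon'}\M{W}^{2}\M{N}_{2r}\M{W}^{2}$, and the latter --- a sum of $O(n\log n/\epsilon'^{2})$ rank‑one terms, each dominated by the degree‑$4$ walk since $\tilde{\M{N}}_{2r}\preceq(1+\epsilon')\M{I}$ (valid as $\M{W}^{2r}\succeq 0$) --- is re‑sparsified by the path‑sampling machinery behind Theorem~\ref{thm:randWalk}. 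Taking per‑step accuracy $\epsilon' = \Theta(\epsilon/\log d)$ makes the $O(\log d)$ \emph{additive} errors sum to $\epsilon$; each step is one constant‑degree sparsification of an object with $O(m)$ or $O(n\log n/\epsilon'^{2})$ nonzeros, i.e.\ $O(m + n\log^{3}n/\epsilon'^{4})$ work, so the total work is $O(m\log^{3}n\log^{5}d/\epsilon^{4})$ and the output has $O(n\log n/\epsilon^{2})$ nonzeros.

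I expect the main obstacle to be the squaring lemma: $\M{D}-\tilde{\M{A}}\approx_{\epsilon}\M{D}-\M{D}(\M{D}^{-1}\M{A})^{2r}$ must imply $\M{D}-\M{D}(\M{D}^{-1}\tilde{\M{A}})^{2}\approx_{O(\epsilon)}\M{D}-\M{D}(\M{D}^{-1}\M{A})^{4r}$, with the loss \emph{additive} --- a multiplicative constant per step would compound to $\mathrm{poly}(d)$ over $O(\log d)$ steps and wreck the polylogarithmic bound. In the normalized basis, writing $\M{N}:=\M{N}_{2r}$ and $\tilde{\M{N}}:=\M{I}-\tilde{\M{W}}$, the claim is that $\tilde{\M{N}}\approx_{\epsilon}\M{N}$ with $0\preceq\M{N}\preceq\M{I}$ implies $f(\tilde{\M{N}})\approx_{O(\epsilon)}f(\M{N})$ for $f(x) = 2x-x^{2} = 1-(1-x)^{2}$, since $\M{I}-\tilde{\M{W}}^{2} = f(\tilde{\M{N}})$ and $\M{I}-\M{W}^{4r} = f(\M{N})$. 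The subtlety is that squaring is not operator monotone, so a naive estimate loses a constant factor; the fix uses that $f$ is operator concave with $f(0)=0$ (equivalently, $x\mapsto x^{2}$ is operator convex) and is operator monotone on the interval $[0,1]$, where $\M{N}$ actually lives because $2r$ is even, together with the elementary identity $f(cx)-cf(x)=cx^{2}(1-c)$ and a capping of the $O(\epsilon)$‑worth of eigenvalues of $\tilde{\M{N}}$ that lie just above $1$ (which contributes only an $O(\epsilon^{2})$ error). Everything else --- the two reductions, the binary‑exponentiation schedule, the congruence transfer in the increment routine, the path‑sampling re‑sparsification, and the error and running‑time bookkeeping --- I expect to go through routinely given Theorem~\ref{thm:randWalk}.
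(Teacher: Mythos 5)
Your overall plan coincides with the paper's: the same two reductions (small $d$ handled by Theorem~\ref{thm:randWalk}; for $d=\Omega(1/\epsilon)$ round up to a multiple of $4$ using $1-\lambda^{d+2}\le(1+2/d)(1-\lambda^{d})$), the same doubling and ``$+4$'' routines run $O(\log d)$ times at per-step accuracy $\Theta(\epsilon/\log d)$ with a final resparsification, and your increment step (conjugate the guarantee by $\M{W}^2$ via Fact~\ref{fact}.\ref{fact:compose} and add the PSD term $\M{I}-\M{W}^4$) is exactly the paper's Lemma~\ref{lem:symComposition}. The genuine gap is in the one step you yourself identify as the crux: the squaring lemma. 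You pass from $\tilde{\M{N}}\preccurlyeq e^{\epsilon}\M{N}$ to $f(\tilde{\M{N}})\preccurlyeq f(e^{\epsilon}\M{N})$ by asserting that $f(x)=2x-x^2=1-(1-x)^2$ is operator monotone on $[0,1]$. It is not: $f(\M{A})\preccurlyeq f(\M{B})$ for $0\preccurlyeq\M{A}\preccurlyeq\M{B}\preccurlyeq\M{I}$ is equivalent to $(\M{I}-\M{B})^2\preccurlyeq(\M{I}-\M{A})^2$, i.e.\ to operator monotonicity of $t\mapsto t^2$ on $[0,1]$, which fails — take $\M{B}_0=\left(\begin{smallmatrix}1&1\\1&1\end{smallmatrix}\right)\preccurlyeq\M{A}_0=\left(\begin{smallmatrix}2&1\\1&1\end{smallmatrix}\right)$, note $\M{A}_0^2-\M{B}_0^2=\left(\begin{smallmatrix}3&1\\1&0\end{smallmatrix}\right)$ has negative determinant, and scale both by a small constant so they lie in $[0,\M{I}]$. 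Operator concavity of $f$, the identity $f(cx)-cf(x)=cx^2(1-c)$, and capping the eigenvalues of $\tilde{\M{N}}$ just above $1$ do not repair this: each of those compares $f$ at commuting arguments (a scalar multiple or a truncation of one fixed matrix), while the entire difficulty is that $\tilde{\M{N}}$ and $\M{N}$ do not commute. So both directions of $f(\tilde{\M{N}})\approx_{O(\epsilon)}f(\M{N})$ are unproved as written, and this is precisely the step whose failure would let the error compound across the $O(\log d)$ levels.

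The statement you need is true, and the paper proves it with loss exactly $\epsilon$ while avoiding matrix-function monotonicity altogether. The key point is that evenness of $2r$ makes $\M{B}:=\M{D}(\M{D}^{-1}\M{A})^{2r}\succcurlyeq 0$, so $\M{D}-\tilde{\M{A}}\approx_{\epsilon}\M{D}-\M{B}$ also yields $\M{D}+\tilde{\M{A}}\approx_{\epsilon}\M{D}+\M{B}$ (Lemma~\ref{lem:lemmaDplusA}: rearrange and use $\M{B}\succcurlyeq 0$). Then the quadratic-form identity in Lemma~\ref{lem:powerupSpsfPrsv}, which writes the form of the block matrix with diagonal $\M{D}$ and off-diagonal $-\tilde{\M{A}}$ as $\Half[(\V{x}+\V{y})^\T(\M{D}-\tilde{\M{A}})(\V{x}+\V{y})+(\V{x}-\V{y})^\T(\M{D}+\tilde{\M{A}})(\V{x}-\V{y})]$, shows the two block matrices are $\approx_{\epsilon}$, and Schur complementation (Fact~\ref{lem:schurApprox}) gives $\M{D}-\tilde{\M{A}}\M{D}^{-1}\tilde{\M{A}}\approx_{\epsilon}\M{D}-\M{D}(\M{D}^{-1}\M{A})^{4r}$ — the purely additive loss your recursion requires. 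If you substitute this (or any correct proof of the squaring lemma) for the operator-monotonicity argument, and spell out the resparsification of the composed objects by path sampling with resistance bounds taken from $\M{D}-\M{A}\M{D}^{-1}\M{A}$ and $\M{D}-\tilde{\M{A}}$ as in Lemmas~\ref{lem:matrixCompSquare} and~\ref{lem:matrixCompPlus}, the rest of your scheduling and cost bookkeeping goes through as in the paper.
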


While we build our construction on the earlier work 
 \cite{SpielmanTengSpectralSparsification,SpielmanSrivastava,PengSpielman}
  for graph sparsification,
we need to overcome some significant difficulties
  posed by high degree matrix polynomials,
which appear be algorithmically challenging:
  The matrix $(\M{D}^{-1}\M{A})^r$ is  
  defined by all paths of length $r$,
  whose precise calculation would be prohibitively expensive
  due to the cost of matrix multiplication and densification in the matrix powers.
Moreover, the algorithm of \cite{PengSpielman} relies an explicit clique-like representation
 of edges in the quadratic power and expanders, which is much more specialized.

\subsection{Some Applications}

Matrix polynomials are involved in numerical methods such as Newton's or Newton-like
  methods, which have been widely 
used for finding solutions of matrix equations.
Our sparsification algorithm can be immediately applied to speed up 
  these numerical methods that involves SDDM matrices.
For example, in the application of finding the inverse square root of an SDDM matrix
  $\M{D}-\M{A}$,
  we can directly sparsify the cubic
  matrix polynomial given in Equation \ref{eqn:Newton}, and iteratively
  approximate the inverse-square root factor of $\M{D} - \M{A}$ using 
  Equation \ref{iter}.
This leads to a simpler and faster algorithm than the
  one presented in \cite{ChengCLPT14},
  which circumvents the challenging problem of sparsifying high-degree random-walk 
  matrix polynomials at the cost of slower convergences and complex approximation.
The simplicity of the new algorithm comes from the fact that we no longer need
  the Maclaurin series for conditioning the numerical iterations.
The convergence analysis of the new algorithm follows 
  the standard analysis of the Newton's method, with careful
  adaptation to handle the approximation errors introduced by the
  spectral sparsification. 
The elimination of the Maclaurin series speeds up the previous algorithm by 
  a factor of $\log\log \kappa$, where $\kappa$ is the relative condition
  number of $\M{D}-\M{A}$.
  
In general, our sparsification algorithm can be used inside
 the Newton-like method for approximating the inverse $q^{th}$-root 
  of SDDM matrices 
  to obtain a simpler and faster
  nearly linear time algorithm than the one presented in 
  \cite{ChengCLPT14} for $q\in \mathbb{Z}_+$, with reduction formula as follows
  \begin{align}
    \left(\M{I} - \M{X} \right)^{-1/q}
    =
    \left(\M{I} + \frac{\M{X}}{2q} \right)
    \left[ \left(\M{I} + \frac{\M{X}}{2q}\right)^{2q} \left(\M{I} - \M{X} \right)  \right]^{-1/q}
    \left(\M{I} + \frac{\M{X}}{2q} \right).
  \end{align}
Our mathematical and algorithmic advances enable the sparsification of the $(2q+1)$-degree polynomials in the middle, in turn speed up the previous algorithm by a factor of
 $\log(\log (\kappa)/\epsilon)$.

By Proposition \ref{lem:graphprev}, the random-walk matrix polynomial
$\M{L}_{\V{\alpha}}(G) = \M{D} - \sum_{r=1}^d \alpha_r\M{D} \cdot 
  \left(\M{D}^{-1}\M{A}\right)^r
$ 
defines a weighted graph $G_{\V{\alpha}}$ whose adjacency matrix is 
 $ \sum_{r=1}^d \alpha_r\M{D} \cdot 
  \left(\M{D}^{-1}\M{A}\right)^r$, and overlays $d$ graphs induced by 
  the multi-step random walks.
While  $\M{D} - \M{A}$ and $\M{D} - \M{A}\M{D}^{-1}\M{A}$ offers
  a good enough bound on the effective resistances of edges in $G_{\V{\alpha}}$
 for the purpose of path sampling,
  these estimates are relatively loose comparing with the standard approximation condition.
Because spectral similarity implies effective-resistance similarity~\cite{SpielmanTengSpectralSparsification,SpielmanSrivastava},
our sparsification algorithm together with 
  the construction of Spielman and Srivastava~\cite{SpielmanSrivastava}
  provide an efficient data structure for effective resistances in $G_{\V{\alpha}}$.
After nearly linear preprocessing time, we can answer queries regarding
approximate effective resistances in $G_{\V{\alpha}}$ in logarithmic time.



Due to these connections to widely used tools in numerical
and network analysis, we anticipate our nearly linear time
sparsification algorithm could be useful for a variety of other tasks.

\section{Background and Notation}
\label{sec:background}
  
We assume $G = (V, E, w)$ 
  is a weighted undirected graph with $n = |V|$ vertices,
  $m = |E|$ edges and edge weights $w_e > 0$.
Let $\M{A} = (a_{i,j})$ denote the adjacency matrix of $G$, i.e., 
  $a_{i,j} = w(i,j)$.
We let $\M{D}$ to denote the diagonal matrix containing 
  weighted degrees of vertices.
Note that $\M{D}^{-1}\M{A}$ is the {\em transition matrix} of
  random walks on the graph and $\M{L}_G = \M{D} - \M{A}$ 
  is the {\em Laplacian matrix} of $G$.
It is well known that for any vector $\V{x} = (x_1,...,x_n)$, 
\begin{align}
  \V{x}^\T \M{L}_G \V{x} = \sum_{(u,v)\in E} (x_u - x_v)^2 w_{uv}
\end{align}

We use $G_r$ to denote the graph introduced by $r$-step random walks on $G$.
We have  $\M{L}_{G_2} = \M{D} - \M{A} \M{D}^{-1} \M{A}$,
  and $\M{L}_{G_r} = \M{D} - \M{D} (\M{D}^{-1} \M{A})^r$ in general.

In our analysis, we will make extensive use of spectral approximations
based on the Loewner partial ordering of positive semidefinite matrices.
Given two matrices $\M{X}$ and $\M{Y}$, we use 
$ \M{Y} \succcurlyeq \M{X}$ (or equivalently $ \M{X} \preccurlyeq \M{Y}$)
to denote that $\M{Y}-\M{X}$ is positive semi-definite.
Approximations using this ordering obey usual intuitions with
approximations of positive scalars.
We will also use a compressed, symmetric notation in situations where
we have mirroring upper and lower bounds.
We say $\M{X} \approx_\epsilon \M{Y}$ when
\begin{align}
\exp \left(\epsilon\right) \M{X} \succcurlyeq \M{Y} \succcurlyeq  \exp \left( -\epsilon \right) \M{X},
\end{align}
 We use the following standard facts about this notion of approximation.
 
\begin{fact}\label{fact}
For positive semi-definite matrices
$\M{X}$, $\M{Y}$, $\M{W}$ and $\M{Z}$,
\begin{enumerate}
\item if $\M{Y} \approx_{\epsilon} \M{Z}$,
then $\M{X} + \M{Y} \approx_{\epsilon} \M{X} + \M{Z}$;

\item \label{fact:sum}
if $\M{X} \approx_{\epsilon} \M{Y}$ and $\M{W} \approx_{\epsilon} \M{Z}$,
  then $\M{X} + \M{W} \approx_{\epsilon} \M{Y} + \M{Z} $;

\item \label{fact:transit}
if $\M{X} \approx_{\epsilon_1} \M{Y}$ and
$\M{Y} \approx_{\epsilon_2} \M{Z}$,
then $\M{X} \approx_{\epsilon_1 + \epsilon_2} \M{Z}$;

\item
if $\M{X}$ and $\M{Y}$ are positive definite matrices
  such that $\M{X} \approx_{\epsilon} \M{Y}$,
  then $\M{X}^{-1} \approx_{\epsilon} \M{Y}^{-1}$;

\item \label{fact:compose}
if $\M{X} \approx_{\epsilon} \M{Y}$
and $\M{V}$ is a matrix, then
$  \M{V}^{\T} \M{X} \M{V} \approx_{\epsilon} \M{V}^{\T} \M{Y} \M{V}.$

\end{enumerate}
\end{fact}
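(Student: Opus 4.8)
The plan is to derive all five items directly from the definition
$\M{X} \approx_\epsilon \M{Y} \iff \exp(\epsilon)\M{X} \succcurlyeq \M{Y} \succcurlyeq \exp(-\epsilon)\M{X}$,
reducing everything to three elementary properties of the Loewner order:
(i) the positive semidefinite cone is closed under addition and under scaling by nonnegative reals, and $\succcurlyeq$ is transitive;
(ii) $\exp(\epsilon) \ge 1 \ge \exp(-\epsilon)$ whenever $\epsilon \ge 0$;
(iii) congruence $\M{W} \mapsto \M{V}^{\T}\M{W}\M{V}$ preserves $\succcurlyeq$ and commutes with scalar multiplication, since $\V{z}^{\T}\M{V}^{\T}\M{W}\M{V}\V{z} = (\M{V}\V{z})^{\T}\M{W}(\M{V}\V{z})$.

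Items 1, 2, 3 and 5 are then routine. For item 2, I would add the upper bounds $\exp(\epsilon)\M{X} \succcurlyeq \M{Y}$ and $\exp(\epsilon)\M{W} \succcurlyeq \M{Z}$ and use closure under addition to get $\exp(\epsilon)(\M{X}+\M{W}) \succcurlyeq \M{Y}+\M{Z}$, with the lower bound handled symmetrically. Item 1 is the special case $\M{W} = \M{Z} = \M{X}$ together with $\exp(\epsilon)\M{X} \succcurlyeq \M{X} \succcurlyeq \exp(-\epsilon)\M{X}$, which is where (ii) and $\M{X}\succcurlyeq 0$ enter. For item 3, I would scale the chain $\exp(\epsilon_1)\M{X} \succcurlyeq \M{Y} \succcurlyeq \exp(-\epsilon_1)\M{X}$ by the positive constant $\exp(\epsilon_2)$ and chain it with $\exp(\epsilon_2)\M{Y} \succcurlyeq \M{Z} \succcurlyeq \exp(-\epsilon_2)\M{Y}$ by transitivity, obtaining $\exp(\epsilon_1+\epsilon_2)\M{X} \succcurlyeq \M{Z} \succcurlyeq \exp(-(\epsilon_1+\epsilon_2))\M{X}$. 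For item 5, I would apply property (iii) to each of the two defining inequalities for $\M{X}\approx_\epsilon\M{Y}$.

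The only step requiring a genuine argument is item 4, which I therefore expect to be the main obstacle. Here I would invoke the operator-antimonotonicity of matrix inversion: if $\M{P} \succcurlyeq \M{Q} \succ 0$ then $\M{Q}^{-1} \succcurlyeq \M{P}^{-1}$. I would prove this by the standard congruence-and-eigenvalue argument: conjugating $\M{P} \succcurlyeq \M{Q}$ by $\M{Q}^{-1/2}$ shows every eigenvalue of $\M{Q}^{-1/2}\M{P}\M{Q}^{-1/2}$ is at least $1$, hence every eigenvalue of its inverse $\M{Q}^{1/2}\M{P}^{-1}\M{Q}^{1/2}$ is at most $1$, i.e.\ $\M{Q}^{1/2}\M{P}^{-1}\M{Q}^{1/2} \preccurlyeq \M{I}$, and conjugating back by $\M{Q}^{-1/2}$ gives $\M{P}^{-1} \preccurlyeq \M{Q}^{-1}$. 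Applying this with $(\M{P},\M{Q}) = (\exp(\epsilon)\M{X},\M{Y})$ yields $\M{Y}^{-1} \succcurlyeq \exp(-\epsilon)\M{X}^{-1}$, and with $(\M{P},\M{Q}) = (\M{Y},\exp(-\epsilon)\M{X})$ yields $\exp(\epsilon)\M{X}^{-1} \succcurlyeq \M{Y}^{-1}$; combining the two is exactly $\M{X}^{-1} \approx_\epsilon \M{Y}^{-1}$. (Positive definiteness of $\M{X}$ and $\M{Y}$ is what makes the square roots and inverses here legitimate, which is why that hypothesis appears only in item 4.)
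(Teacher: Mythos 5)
Your proposal is correct: the paper states these five properties as standard facts and gives no proof of its own, and your derivations are precisely the standard arguments it implicitly relies on. In particular, items 1, 2, 3, 5 follow exactly as you say from closure of the PSD cone under addition and nonnegative scaling, transitivity, and congruence, and your handling of item 4 via operator antimonotonicity of inversion (proved by conjugating with $\M{Q}^{-1/2}$, applied to the pairs $(\exp(\epsilon)\M{X},\M{Y})$ and $(\M{Y},\exp(-\epsilon)\M{X})$) is the standard and complete way to get $\M{X}^{-1}\approx_{\epsilon}\M{Y}^{-1}$, with positive definiteness used exactly where you note it is needed.
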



The Laplacian matrix is closely related to electrical flow 
  \cite{SpielmanSrivastava,CKMST10Maxflow}.
For an edge with weight $w(e)$, we view it as a resistor with resistance $r(e) = 1/w(e)$.
Recall that the {\em effective resistance} between two vertices $u$ and $v$
  $R(u,v)$ is defined as the potential difference induced between them
  when a unit current is injected at one and extracted at the other.
Let $\V{e}_i$ denote the vector with 1 in the $i$-th entry and 0 everywhere else,
  the effective resistance $R(u,v)$ equals to
  $(\V{e}_u - \V{e}_v)^\T \M{L}^{\dagger} (\V{e}_u - \V{e}_v)$,
  where $\M{L}^{\dagger}$ is the {\em Moore-Penrose Pseudoinverse} of $\M{L}$.
From this expression, we can see that effective resistance obeys triangle inequality.
Also note that adding edges to a graph does not increase the
  effective resistance between any pair of nodes.

\begin{figure}[t]

  \begin{algbox}
  $\tilde{G}= \algname(G=\{V,E,w\},\tau_e,M)$
  \begin{enumerate}
  \item Initialize graph $\tilde{G}=\{V,\emptyset\}$.
  \item For $i$ from $1$ to $M$: 
  \label{alg:sampleStep}

  Sample an edge $e$ from $E$ with $p_e = {\tau_e} / (\sum_{e\in E} \tau_e)$.
  Add $e$ to $\tilde{G}$ with weight ${w(e)} / (M \tau_e)$.
  \item Return graph $\tilde{G}$.
  \end{enumerate}
  \end{algbox}
  
  \caption{Pseudocode for Sampling by Effective Resistances}
  \label{fig:ERSparsify}
  \end{figure}


  

In \cite{SpielmanSrivastava}, it was shown that
 oversampling the edges using upper bound on the effective resistance
 suffices for constructing spectral sparsifiers.
The theoretical guarantees for this sampling process were strengthened
in \cite{kelner2013spectral}.
A pseudocode of this algorithm is given in Figure~\ref{fig:ERSparsify}, and its guarantees
can be stated as follows.

\begin{theorem}
\label{thm:ERsparsify}
Given a weighted undirected graph $G$,
  and upper bound on its effective resistance $Z(e) \ge R(e)$.
For any approximation parameter $\epsilon>0$,
  there exists $M = \OO{ \log n / \epsilon^{2} \cdot (\sum_{e\in E} \tau_e)}$,
  with $\tau_e = w(e) Z(e)$,
  such that with probability at least $1-\frac{1}{n}$,
  $\tilde{G} = \algname(G,w,\tau_e,M)$
  has at most $M$ edges, and satisfies
  \begin{align}
    (1-\epsilon) \M{L}_{G} \preccurlyeq \M{L}_{\tilde{G}} \preccurlyeq (1+\epsilon) \M{L}_{{G}}.
  \end{align}
\end{theorem}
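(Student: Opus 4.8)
The plan is to recast the two one-sided operator inequalities as a single matrix concentration statement about an average of i.i.d.\ rank-one matrices, and then invoke a matrix Chernoff bound. For each edge $e=(u,v)$ write $\V{b}_e = \V{e}_u - \V{e}_v$, so that $\M{L}_G = \sum_{e\in E} w(e)\,\V{b}_e\V{b}_e^\T$ and $\M{L}_{\tilde G}$ is a nonnegative reweighting of a subset of these rank-one terms. Let $\M{\Pi} = \M{L}_G^{\dagger/2}\M{L}_G\M{L}_G^{\dagger/2}$ be the orthogonal projection onto $\mathrm{im}(\M{L}_G)$ and set $\V{v}_e = \sqrt{w(e)}\,\M{L}_G^{\dagger/2}\V{b}_e$, so that $\sum_{e\in E}\V{v}_e\V{v}_e^\T = \M{\Pi}$. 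Because the edges of $\tilde G$ form a subset of $E$, we have $\ker\M{L}_G\subseteq\ker\M{L}_{\tilde G}$, hence on $\ker\M{L}_G$ both quadratic forms vanish; conjugating by $\M{L}_G^{\dagger/2}$ and using item~(\ref{fact:compose}) of Fact~\ref{fact} on the orthogonal complement shows that the target inequality $(1-\epsilon)\M{L}_G\preccurlyeq\M{L}_{\tilde G}\preccurlyeq(1+\epsilon)\M{L}_G$ is equivalent to
\begin{align}
\nbr{\,\M{L}_G^{\dagger/2}\,\M{L}_{\tilde G}\,\M{L}_G^{\dagger/2} - \M{\Pi}\,} \;\le\; \epsilon .
\end{align}

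Next I would identify the random matrix on the left. \algname{} draws $e_1,\dots,e_M$ i.i.d.\ with $\Pr[e_i=e]=p_e=\tau_e/T$, where $T=\sum_{e\in E}\tau_e$, and reweights each sampled copy of an edge so that the resulting Laplacian $\M{L}_{\tilde G} = \sum_{i=1}^M \frac{w(e_i)}{M p_{e_i}}\V{b}_{e_i}\V{b}_{e_i}^\T$ is an unbiased estimator of $\M{L}_G$. Consequently $\M{L}_G^{\dagger/2}\M{L}_{\tilde G}\M{L}_G^{\dagger/2} = \frac1M\sum_{i=1}^M\M{Y}_i$ with $\M{Y}_i = \frac1{p_{e_i}}\V{v}_{e_i}\V{v}_{e_i}^\T$ i.i.d.\ positive semidefinite, and
\begin{align}
\expec{}{\M{Y}_i} \;=\; \sum_{e\in E} p_e\cdot\tfrac1{p_e}\,\V{v}_e\V{v}_e^\T \;=\; \M{\Pi},
\end{align}
so the displayed bound becomes a sample-average deviation estimate $\nbr{\frac1M\sum_i\M{Y}_i - \expec{}{\M{Y}_1}}\le\epsilon$.

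The quantitative heart of the argument is a uniform bound on the summands, and this is exactly where the effective-resistance overestimate enters: $\nbr{\V{v}_e}^2 = w(e)\,\V{b}_e^\T\M{L}_G^{\dagger}\V{b}_e = w(e)R(e) \le w(e)Z(e) = \tau_e$, hence $\nbr{\M{Y}_i} = \nbr{\V{v}_{e_i}}^2/p_{e_i} \le \tau_{e_i}\cdot(T/\tau_{e_i}) = T$ almost surely. Since moreover $\nbr{\expec{}{\M{Y}_1}} = \nbr{\M{\Pi}} = 1$ and every $\M{Y}_i$ is supported on the subspace $\mathrm{im}(\M{\Pi})$ of dimension at most $n$, a matrix Chernoff/Bernstein bound (the concentration argument of \cite{SpielmanSrivastava}, in the sharpened form of \cite{kelner2013spectral}) yields
\begin{align}
\prob{}{\nbr{\tfrac1M\textstyle\sum_{i=1}^M\M{Y}_i - \M{\Pi}} > \epsilon} \;\le\; n\exp\!\left(-c\,\epsilon^2 M/T\right)
\end{align}
for an absolute constant $c>0$. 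Taking $M = \OO{\log n/\epsilon^2\cdot\sum_{e\in E}\tau_e}$ with a large enough implied constant makes the right-hand side at most $1/n$; on that event the operator inequality holds. Finally \algname{} adds one (possibly repeated) edge per iteration, so $\tilde G$ has at most $M$ edges, which finishes the proof.

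I expect the matrix concentration step to be the main obstacle: handling eigenvalues one at a time after a union bound over an $\epsilon$-net of the sphere loses polynomial factors and gives the wrong dependence on $\epsilon$, so one genuinely needs a matrix-valued Chernoff (or Bernstein) inequality to obtain the $1/\epsilon^2$ scaling with only a $\log n$ overhead --- and the strengthening in \cite{kelner2013spectral} is precisely what allows using the upper estimates $Z(e)\ge R(e)$ and $\tau_e = w(e)Z(e)$ in place of the exact leverage scores $w(e)R(e)$ while keeping $M$ tied to $\sum_{e\in E}\tau_e$. Everything else --- the outer-product identities, the reduction by congruence, and the per-edge norm bound via $R(e)\le Z(e)$ --- is routine.
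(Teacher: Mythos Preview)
The paper does not supply its own proof of Theorem~\ref{thm:ERsparsify}; it is quoted as a known result, attributed to \cite{SpielmanSrivastava} with the strengthening to overestimates $Z(e)\ge R(e)$ credited to \cite{kelner2013spectral}. Your argument is correct and is precisely the standard proof from those references: normalize by $\M{L}_G^{\dagger/2}$ to turn the two-sided spectral bound into a deviation estimate for an i.i.d.\ average of rank-one PSD matrices with mean $\M{\Pi}$, use $\nbr{\V{v}_e}^2 = w(e)R(e)\le\tau_e$ to bound each summand by $T=\sum_e\tau_e$, and finish with a matrix Chernoff inequality. There is nothing to compare against in the paper itself, and nothing to fix in your write-up.
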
 

The equivalence between solving linear systems in such matrices
and graph Laplacians, weakly-SDD matrices, and M-matrices
are well known~\cite{SpielmanTengSolver,DaitchS08,Kelner,PengSpielman}.


\section{Sparsification by Sampling Random Walks}
\label{sec:sampling}

We sparsify $\M{L}_{G_r} = \M{D} - \M{D} (\M{D}^{-1} \M{A})^r$ in two steps.
In the first and critical step, we obtain an initial sparsifier
  with $O(dm\log n/\epsilon^2)$ non-zeros for 
  $\M{L}_{G_r}$
  using an upper bound estimate on the effective resistance of $G_r$
  obtained from 
  $\M{L}_{G} = \M{D} - \M{A}$ and $\M{L}_{G_2} =\M{D} - \M{A}\M{D}^{-1}\M{A}$.
In the second step, we apply the standard spectral sparsification
  algorithms to further reduce the number of nonzeros to $O(n\log n/\epsilon^2)$.

In the first step sparsification, we bound the effective resistance on $G_r$
  using Lemma \ref{lem:support} (proof included in Appendix \ref{apx:support}),
  which allows us to use the resistance of any length-$r$ path on $G$
  to upper bound the effective resistance between its two endpoints on $G_r$.
Lemma \ref{lem:upperboundEffRes} shows that we can sample by these estimates efficiently.

\begin{restatable}[Two-Step Supports]{lemma}{restateSupportLowOrder}
  \label{lem:support}
  For a graph Laplacian matrix $\M{L} = \M{D} - \M{A}$, with
  diagonal matrix $\M{D}$ and nonnegative off-diagonal $\M{A}$,
  for all positive odd integer $r$, we have
  \begin{align}
  \Half \M{L}_G \preceq \M{L}_{G_r} \preceq r \M{L}_G.
  \end{align}
  and for all positive even integers $r$ we have
  \begin{align}
  \M{L}_{G_2} \preceq \M{L}_{G_r} \preceq \frac{r}{2} \M{L}_{G_2}.
  \end{align}
\end{restatable}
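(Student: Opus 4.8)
The plan is to reduce both operator inequalities to elementary scalar inequalities on $[-1,1]$ by passing to the normalized adjacency matrix, which simultaneously diagonalizes all the powers that appear. Assume without loss of generality that $\M{D}\succ 0$ (isolated vertices contribute trivial rows and may be dropped), and set $\M{N}=\M{D}^{-1/2}\M{A}\M{D}^{-1/2}$, a symmetric matrix. A telescoping identity gives $(\M{D}^{-1}\M{A})^r=\M{D}^{-1/2}\M{N}^r\M{D}^{1/2}$, hence $\M{D}(\M{D}^{-1}\M{A})^r=\M{D}^{1/2}\M{N}^r\M{D}^{1/2}$, and therefore
\[
\M{L}_{G_r}=\M{D}^{1/2}\bigl(\M{I}-\M{N}^r\bigr)\M{D}^{1/2},\qquad
\M{L}_{G}=\M{D}^{1/2}\bigl(\M{I}-\M{N}\bigr)\M{D}^{1/2},\qquad
\M{L}_{G_2}=\M{D}^{1/2}\bigl(\M{I}-\M{N}^2\bigr)\M{D}^{1/2}.
\]
Since $\M{D}^{1/2}$ is invertible, conjugating by $\M{D}^{-1/2}$ (the Loewner-order form of the conjugation rule in Fact~\ref{fact}) makes the odd-$r$ claim equivalent to $\Half(\M{I}-\M{N})\preceq \M{I}-\M{N}^r\preceq r(\M{I}-\M{N})$ and the even-$r$ claim equivalent to $\M{I}-\M{N}^2\preceq \M{I}-\M{N}^r\preceq \tfrac{r}{2}(\M{I}-\M{N}^2)$. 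Moreover $\M{I}-\M{N}=\M{D}^{-1/2}\M{L}_G\M{D}^{-1/2}\succeq 0$ and $\M{I}+\M{N}=\M{D}^{-1/2}(\M{D}+\M{A})\M{D}^{-1/2}\succeq 0$ (the latter since $\V{x}^\T(\M{D}+\M{A})\V{x}=\sum_{(u,v)\in E}w_{uv}(x_u+x_v)^2\ge 0$), so every eigenvalue $x$ of $\M{N}$ lies in $[-1,1]$. All the matrices above are polynomials in the symmetric matrix $\M{N}$, so by the spectral theorem it suffices to verify, for every $x\in[-1,1]$,
\[
\Half(1-x)\;\le\;1-x^r\;\le\;r(1-x)\quad (r\text{ odd}),\qquad
1-x^2\;\le\;1-x^r\;\le\;\tfrac{r}{2}\,(1-x^2)\quad (r\text{ even}).
\]

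For the upper bounds I would use geometric-sum factorizations: $1-x^r=(1-x)\sum_{i=0}^{r-1}x^i\le r(1-x)$ on $[-1,1]$, since each $x^i\le 1$ and $1-x\ge 0$; and, for even $r$, writing $y=x^2\in[0,1]$, $1-x^r=1-y^{r/2}=(1-y)\sum_{i=0}^{r/2-1}y^i\le \tfrac{r}{2}(1-y)=\tfrac{r}{2}(1-x^2)$. For the lower bounds: the even case is immediate, since $y=x^2\in[0,1]$ and $r/2\ge 1$ give $x^r=y^{r/2}\le y=x^2$; the odd case splits on the sign of $x$ --- on $[0,1]$, $\sum_{i=0}^{r-1}x^i\ge 1$, so $1-x^r\ge 1-x\ge \Half(1-x)$, while on $[-1,0]$, $x^r\le 0$, so $1-x^r\ge 1$ whereas $1-x\le 2$, hence $1-x^r\ge 1\ge \Half(1-x)$.

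The only subtle point I expect is the odd-degree lower bound: one cannot bound the partial geometric sum $\sum_{i=0}^{r-1}x^i$ from below term-by-term because it oscillates for $x<0$, and the sign case-split above is exactly what repairs this; it is also where the constant $\Half$ enters (and is essentially sharp as $r\to\infty$, $x\to -1^+$). Everything else is the routine spectral-theorem reduction together with two applications of the factorization $1-t^k=(1-t)\sum_{i=0}^{k-1}t^i$.
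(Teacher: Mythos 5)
Your proposal is correct and follows essentially the same route as the paper's proof: conjugate by $\M{D}^{-1/2}$ to reduce to the normalized matrix $\M{D}^{-1/2}\M{A}\M{D}^{-1/2}$, diagonalize, and verify the scalar inequalities for eigenvalues in $[-1,1]$. Your write-up actually supplies slightly more detail than the paper (the geometric-sum factorizations and the sign case-split for the odd lower bound, which the paper leaves as an asserted scalar fact), but the underlying argument is the same.
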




For each length-$r$ path $\V{p} = (u_0 \ldots u_r)$ in $G$,
  we have a corresponding edge in $G_r$,
  with weight proportional to the chance of this particular
  path showing up in the random walk.
We can view $G_r$ as the union of these edges, i.e.,
  $\M{L}_{G_r}(u_0, u_r) = \sum_{\V{p}=(u_0 \ldots u_r)} w(\V{p})$.

We bound the effective resistance on $G_r$ in two different ways.
If $r$ is odd, $G$ is a good approximation of $G_r$,
  so we can obtain an upper bound using the resistance
  of a length-$r$ (not necessarily simple) path on $G$.
If $r$ is even, $G_2$ is a good approximation of $G_r$, in this case,
  we get an upper bound by composing the effective resistance of 2-hop paths
  in different subgraphs of $G_2$.

\begin{lemma}[Upper Bound on Effective Resistance]
  \label{lem:upperboundEffRes}
For a graph $G$ with Laplacian matrix $\M{L} = \M{D} - \M{A}$,
let $\M{L}_{G_r} = \M{D} - \M{D} (\M{D}^{-1} \M{A})^r$
  be its $r$-step random-walk matrix.
Then, the effective resistance between two vertices $u_0$ and $u_r$ on
  $\M{L}_{G_r}$ is upper bounded by
  \begin{align}
  R_{G_r} (u_0,u_r) \leq \sum_{i=1}^r \frac{2}{\M{A}({u_{i-1},u_{i}})},
  \end{align}
  where $(u_0 \ldots u_r)$ is a path in $G$.
  \end{lemma}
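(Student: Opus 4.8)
The plan is to reduce the effective-resistance bound on $G_r$ to the two-step support inequalities of \Cref{lem:support}, combined with the two basic facts that effective resistance only decreases when edges are added and obeys the triangle inequality along paths. I will split into the odd and even cases exactly as the lemma statement suggests.

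First I would handle the odd case. Fix a length-$r$ path $\V{p}=(u_0,\ldots,u_r)$ in $G$. By \Cref{lem:support}, $\Half\M{L}_G\preceq \M{L}_{G_r}$, which by Fact~\ref{fact} (monotonicity of pseudoinverse under Loewner order on the common image space) yields $\M{L}_{G_r}^{\dagger}\preceq 2\M{L}_G^{\dagger}$; note $G$, $G_r$, and the path all share the same connected components, so the pseudoinverses are compatible. Hence $R_{G_r}(u_0,u_r)=(\V{e}_{u_0}-\V{e}_{u_r})^\T\M{L}_{G_r}^{\dagger}(\V{e}_{u_0}-\V{e}_{u_r})\le 2\,(\V{e}_{u_0}-\V{e}_{u_r})^\T\M{L}_{G}^{\dagger}(\V{e}_{u_0}-\V{e}_{u_r})=2\,R_G(u_0,u_r)$. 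Now apply the triangle inequality for effective resistance in $G$ along the path $\V{p}$: $R_G(u_0,u_r)\le\sum_{i=1}^r R_G(u_{i-1},u_i)\le\sum_{i=1}^r \frac{1}{\M{A}(u_{i-1},u_i)}$, the last step because the single edge $(u_{i-1},u_i)$ of conductance $\M{A}(u_{i-1},u_i)$ is present in $G$ and adding the remaining edges cannot increase resistance. Combining gives $R_{G_r}(u_0,u_r)\le\sum_{i=1}^r \frac{2}{\M{A}(u_{i-1},u_i)}$, which is the claimed bound.

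Next the even case. Here \Cref{lem:support} gives $\M{L}_{G_2}\preceq\M{L}_{G_r}$, so $\M{L}_{G_r}^{\dagger}\preceq\M{L}_{G_2}^{\dagger}$ and therefore $R_{G_r}(u_0,u_r)\le R_{G_2}(u_0,u_r)$. It remains to bound $R_{G_2}(u_0,u_r)$ by $\sum_{i=1}^r \frac{2}{\M{A}(u_{i-1},u_i)}$ using the given even-length path $(u_0,\ldots,u_r)$ of $G$. Since $r$ is even, break the path into $r/2$ consecutive 2-hop segments $(u_{2j-2},u_{2j-1},u_{2j})$; by the triangle inequality in $G_2$, $R_{G_2}(u_0,u_r)\le\sum_{j=1}^{r/2} R_{G_2}(u_{2j-2},u_{2j})$. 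For each 2-hop segment, the path $u_{2j-2}\to u_{2j-1}\to u_{2j}$ in $G$ contributes to $G_2$ an edge between $u_{2j-2}$ and $u_{2j}$ with conductance at least $\M{A}(u_{2j-2},u_{2j-1})\,\M{A}(u_{2j-1},u_{2j})/\M{D}(u_{2j-1})$ (coming from the single random-walk path through $u_{2j-1}$); since adding the other $G_2$-edges only helps, $R_{G_2}(u_{2j-2},u_{2j})\le \M{D}(u_{2j-1})/(\M{A}(u_{2j-2},u_{2j-1})\,\M{A}(u_{2j-1},u_{2j}))$. Using $\M{D}(u_{2j-1})\ge \M{A}(u_{2j-2},u_{2j-1})+\M{A}(u_{2j-1},u_{2j})$ (all weights nonnegative), this is at most $\frac{1}{\M{A}(u_{2j-1},u_{2j})}+\frac{1}{\M{A}(u_{2j-2},u_{2j-1})}$, i.e.\ the series resistance of the two hops. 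Summing over $j$ telescopes to $\sum_{i=1}^r\frac{1}{\M{A}(u_{i-1},u_i)}\le\sum_{i=1}^r\frac{2}{\M{A}(u_{i-1},u_i)}$, finishing the even case.

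I expect the main obstacle to be the even case, specifically the subresistor argument showing each 2-hop segment in $G_2$ has resistance at most the series resistance of the two $G$-edges; one must be careful that the relevant $G_2$-edge weight picks up a $\M{D}(u_{2j-1})^{-1}$ factor from the transition matrix $\M{D}^{-1}\M{A}$ and that the degree bound $\M{D}(u_{2j-1})\ge\M{A}(u_{2j-2},u_{2j-1})+\M{A}(u_{2j-1},u_{2j})$ is exactly what converts this into a clean series bound. A minor technical point throughout is ensuring all pseudoinverse comparisons are taken on graphs with identical connected components so that the Loewner order passes to pseudoinverses; this holds because $\M{L}_{G_r}$, $\M{L}_{G_2}$, and $\M{L}_G$ have the same null space whenever $G$ is connected (and the argument applies component-wise otherwise). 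Everything else is a direct application of the triangle inequality for effective resistance and the monotonicity of resistance under edge addition, both recalled in the background section.
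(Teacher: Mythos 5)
Your proof is correct, and its skeleton is the same as the paper's: invoke Lemma~\ref{lem:support} to support $\M{L}_{G_r}$ by $\M{L}_G$ (odd $r$) or $\M{L}_{G_2}$ (even $r$), then chain per-edge (resp.\ per-2-hop) resistance bounds along the path via the triangle inequality. The one place you genuinely diverge is the even case: the paper bounds each 2-hop segment by restricting to the anchored subgraph $G_2(u_{2j-1})$ and quoting the exact rank-one resistance formula $R_{G_2(u)}(v_1,v_2)=1/\M{A}(v_1,u)+1/\M{A}(u,v_2)$ from Peng--Spielman (restated in Appendix~\ref{apx:supportRankOne}), whereas you keep only the single $G_2$-edge of conductance $\M{A}(u_{2j-2},u_{2j-1})\M{A}(u_{2j-1},u_{2j})/\M{D}(u_{2j-1},u_{2j-1})$ and convert it to the series bound with the degree inequality $\M{D}(u,u)\ge\M{A}(x,u)+\M{A}(u,y)$; this is more elementary and self-contained, at the cost of being slightly cruder (the paper's route also shows the bound is tight on the star). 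Two small caveats: your degree inequality needs $u_{2j-2}\neq u_{2j}$ (paths here need not be simple), but in the degenerate case $u_{2j-2}=u_{2j}$ the segment's resistance is $0$ and the bound is trivial, so nothing breaks; and your blanket remark that $\M{L}_G$, $\M{L}_{G_2}$, $\M{L}_{G_r}$ all share a null space is false for connected bipartite $G$ (the even-power Laplacians gain the signed bipartition vector), though the only comparisons you actually use -- $\M{L}_G$ vs.\ $\M{L}_{G_r}$ for odd $r$, and $\M{L}_{G_2}$ vs.\ $\M{L}_{G_r}$ for even $r$ -- are within the same parity and do have matching kernels, and $\V{e}_{u_0}-\V{e}_{u_r}$ lies in the common image since the path has the right parity, so the pseudoinverse monotonicity step is sound.
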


\begin{proof}
When $r$ is a positive odd integer, by Lemma \ref{lem:support}, we have that
  $\Half \M{L}_G \preceq \M{L}_{G_r}$,
  which implies that for any edge $(u,v)$ in $G$, 
  \begin{align}
  R_{G_r} (u,v) \leq 2 r_G(u,v) = \frac{2}{\M{A} ({u,v})}.
  \end{align}
Because effective resistance satisfies triangular inequality,
  this concludes the proof for odd $r$.

When $r$ is even, by Lemma \ref{lem:support}, we have that
  $\M{L}_{G_2} \preceq \M{L}_{G_r}$.
The effective resistance of $\M{D}-\M{A}\M{D}^{-1}\M{A}$
  is studied in \cite{PengSpielman} 
   and restated in Appendix \ref{apx:supportRankOne}.
For the subgraph $G_2(u)$ anchored at the vertex $u$
  (the subgraph formed by length-2 paths where the middle node is $u$),
  for any two of its neighbors $v_1$ and $v_2$, we have
  \begin{align}
  R_{G_r} (v_1, v_2) \le R_{G_2(u)}(v_1, v_2) = \frac{1}{\M{A} ({v_1,u})} + \frac{1}{\M{A} ({u,v_2})}.
  \end{align}
Because we have the above upper bound for any 2-hop path $(v_1, u, v_2)$,
  by the triangular inequality of effective resistance,
  the lemma holds for even $r$ as well.
\end{proof}

Now we could sparsify $G_r$ by sampling random walks
  according to the approximate effective resistance.
The following mathematical identity will be crucial in our analysis.

\begin{lemma}[A Random-Walk Identity]
Given a graph $G$ with $m$ edges, consider the $r$-step
  random-walk graph $G_r$ and the corresponding Laplacian matrix
  $\M{L}_{G_r} = \M{D} - \M{D} (\M{D}^{-1} \M{A})^r$.
For a length-$r$ path $\V{p} = (u_0 \ldots u_r)$ on $G$, we have
  \begin{align*}
  w(\V{p}) = \frac{\prod_{i=1}^{r} \M{A} ({u_{i-1},u_{i}})}{\prod_{i=1}^{r-1} \DD (u_i,u_i)},
  \quad
  Z(\V{p}) = \sum_{i=1}^r \frac{2}{A (u_{i-1},u_i)}.
  \end{align*}
The summation of $w(\V{p}) Z(\V{p})$ over all length-$r$ paths satisfies
  \begin{align}
  \sum_{\V{p}} w(\V{p}) \cdot Z(\V{p}) = 2 r m.
  \end{align}
\end{lemma}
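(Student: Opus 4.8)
Note first that $\sum_{\V{p}} w(\V{p}) Z(\V{p})$ is exactly the total oversampling budget $\sum_e \tau_e$ that Theorem~\ref{thm:ERsparsify} asks us to bound, with the length-$r$ walks $\V{p}$ serving as the (decomposed) edges of $G_r$; the plan is therefore to evaluate this sum in closed form. First I would substitute the two formulas, multiply out, and split $w(\V{p})Z(\V{p})$ as a sum over the $r$ edges of $\V{p}$: the $i$-th term is
\[
\frac{2\,\prod_{k\ne i}\M{A}(u_{k-1},u_k)}{\prod_{k=1}^{r-1}\M{D}(u_k,u_k)},
\]
i.e.\ (twice) the weight of $\V{p}$ with the single factor $\M{A}(u_{i-1},u_i)$ cancelled against the corresponding $1/\M{A}(u_{i-1},u_i)$ in $Z(\V{p})$. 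After interchanging the sum over $\V{p}$ with the sum over $i\in\{1,\dots,r\}$, it remains to evaluate, for each fixed $i$, the sum of these ``one-edge-deleted'' weights over all length-$r$ walks of $G$.

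The crux is that cancelling the $i$-th edge weight decouples the walk at that bond. Regrouping the $\M{A}$- and $\M{D}$-factors around position $i$, the $i$-th term factors as a prefix term involving only $u_0,\dots,u_{i-1}$ times a suffix term involving only $u_i,\dots,u_r$; concretely, the suffix term is the probability that the random walk (transition matrix $\M{D}^{-1}\M{A}$) started at $u_i$ follows $u_i\to u_{i+1}\to\cdots\to u_r$, and the prefix term is the probability that the walk started at $u_{i-1}$ follows $u_{i-1}\to u_{i-2}\to\cdots\to u_0$. This uses only that $\M{D}^{-1}\M{A}$ is row-stochastic (its row sums are $1$ because the row sums of $\M{A}$ are the diagonal of $\M{D}$). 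Since the one-edge-deleted term already carries the factor $\M{A}(u_{k-1},u_k)$ for every $k\ne i$, the single adjacency constraint from ``$\V{p}$ is a walk of $G$'' that is not automatically enforced is $u_{i-1}\sim u_i$, so the sum over walks factors as a sum over this one bond of (sum of prefix term) $\times$ (sum of suffix term).

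I would then telescope the two halves: holding $u_{i-1}$ fixed, summing the prefix term over all $u_0,\dots,u_{i-2}$ gives the total probability of a length-$(i-1)$ walk out of $u_{i-1}$, which is $1$, and symmetrically the suffix sum over $u_{i+1},\dots,u_r$ equals $1$. What survives is just a sum over the bond $u_{i-1}\sim u_i$, which counts the $m$ edges of $G$; carrying the factor $2$ and summing over the $r$ positions gives $\sum_{\V{p}} w(\V{p})Z(\V{p}) = 2rm$, once each length-$r$ walk is identified with its reversal (this orientation bookkeeping is the only place where care is needed to land on $2rm$ rather than twice it). I expect the main obstacle to be precisely the decoupling in the second step: one has to see clearly that a walk of $G$ with one internal edge weight removed is still pinned to $G$ only through the remaining $r-1$ incidences plus the single freed bond, so that the prefix and suffix truly separate; everything after that is a stochasticity telescoping.
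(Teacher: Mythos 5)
Your argument is precisely the paper's own proof: substitute the formulas, split $w(\V{p})Z(\V{p})$ into the $r$ one-edge-cancelled terms, exchange the sums, anchor the $i$-th edge, and telescope the prefix and suffix walk probabilities to $1$ using $\M{D}(u,u)=\sum_v \M{A}(u,v)$, leaving $2\cdot m \cdot r$. Your closing remark about identifying each walk with its reversal addresses an orientation/counting convention that the paper itself leaves implicit, so there is no gap beyond what the original proof also glosses over.
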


\begin{proof}
We substitute the expression for $w(\V{p})$ and $Z(\V{p})$ in to the summation.
  \begin{align}
  \sum_{\V{p} = (u_0 \ldots u_r)} w(\V{p}) Z(\V{p})
  &= \sum_{\V{p}} \left(\sum_{i=1}^r \frac{2}{\AA (u_{i-1},u_i)}\right) \left(\frac{\prod_{j=1}^{r} \M{A} ({u_{j-1},u_{j}})}{\prod_{j=1}^{r-1} \DD (u_j,u_j)}\right) \\
  &= 2 \sum_{\V{p}} \sum_{i=1}^r \left(\frac{\prod_{j=1}^{i-1} \M{A} ({u_{j-1},u_{j}}) \prod_{j=i}^{r-1} \M{A} ({u_{j},u_{j+1}})}{\prod_{j=1}^{r-1} \DD (u_j,u_j)} \right) \label{eqn:pathEdgeSum}\\
  &= 2 \sum_{e \in G} \sum_{i=1}^r \left( \sum_{\V{p} \text{~with~} (u_{i-1},u_i)=e} \frac{\prod_{j=1}^{i-1} \M{A} ({u_{j-1},u_{j}}) \prod_{j=i}^{r-1} \M{A} ({u_{j},u_{j+1}})}{\prod_{j=1}^{r-1} \DD (u_j,u_j)} \right) \label{eqn:edgePathSum}\\
  &= 2 \sum_{e \in G} \sum_{i=1}^r \left( \sum_{\V{p} \text{~with~} (u_{i-1},u_i)=e} \frac{\prod_{j=1}^{i-1} \M{A} ({u_{j-1},u_{j}})}{\prod_{j=1}^{i-1} \DD (u_j,u_j)} \cdot \frac{\prod_{j=i}^{r-1} \M{A} ({u_{j},u_{j+1}})}{\prod_{j=i}^{r-1} \DD (u_{j},u_{j})} \right)  \label{eqn:beforeCancel} \\
  &= 2 \sum_{e \in G} \sum_{i=1}^r 1 \label{eqn:afterCancel} \\
  &= 2 m r.
  \end{align}

From Equation \ref{eqn:pathEdgeSum} to \ref{eqn:edgePathSum},
  instead of enumerating all paths,
  we first fix an edge $e$ to be the $i$-th edge on the path,
  and then extend from both ends of $e$.
From Equation \ref{eqn:beforeCancel} to \ref{eqn:afterCancel},
  we sum over indices iteratively from $u_0$ to $u_{i-1}$,
  and from $u_r$ to $u_i$.
Because 
$\M{D}(u,u) = \sum_{v} \M{A} (u,v)$,
  this summation over all possible paths
  anchored at $(u_{i-1}, u_{i})$ equals to 1.
  \end{proof}

Now we show that we can perform Step (\ref{alg:sampleStep}) in $\algname$ efficiently.
We take samples in the same way we cancel the terms in the previous proof.
Recall that sampling an edge from $G_r$ corresponds to sampling a path of length $r$ in $G$.
\begin{algbox}
  Sample a path $\V{p}$ from $G$ with probability proportional to $\tau_{\V{p}} = w(\V{p}) Z(\V{p})$:
  \begin{enumerate}
  \item [a.] Pick an integer $k \in [1:r]$ and an edge $e \in G$, both uniformly at random.
  \item [b.] Perform $(k-1)$-step random walk from one end of $e$.
  \item [c.] Perform $(r-k)$-step random walk from the other end of $e$.
  \item [d.] Keep track of $w(\V{p})$ during the process, and finally add a fraction of this edge to our sparsifier.
  \end{enumerate}
\end{algbox}

  \begin{lemma}
  \label{lem:implicitSample}
  There exists an algorithm for the Step (\ref{alg:sampleStep}) in $\algname$, such that after preprocessing with work $\OO{n}$, it can draw an edge $e$ as in Step (\ref{alg:sampleStep}) with work $\OO{r \cdot \log n}$. 
  \end{lemma}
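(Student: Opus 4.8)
The plan is to implement the boxed sampling procedure and verify that (i) it produces an edge $e$ of $G_r$ with exactly the probability $p_e \propto \tau_{\V{p}} = w(\V{p}) Z(\V{p})$ demanded by $\algname$, and (ii) after $O(n)$ preprocessing each sample costs $O(r\log n)$. For the correctness part, I would first rewrite $\tau_{\V{p}}$ using the decomposition from the previous lemma's proof: since $Z(\V{p}) = \sum_{i=1}^r \frac{2}{\AA(u_{i-1},u_i)}$, we have $w(\V{p}) Z(\V{p}) = \sum_{i=1}^r 2\,w(\V{p})/\AA(u_{i-1},u_i)$, and each term splits as the weight of a $(k{-}1)$-step walk prefix from one endpoint of the $k$-th edge times the weight of an $(r{-}k)$-step walk suffix from the other endpoint (exactly the telescoping used to get Equation~\ref{eqn:beforeCancel}). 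So sampling a path with probability $\propto \tau_{\V{p}}$ is equivalent to: choosing the index $k$ of a ``pivot'' edge, choosing the pivot edge $e$, and then extending it by two independent random walks. I would argue that the total mass is $\sum_{\V{p}} \tau_{\V{p}} = 2rm$ (the Random-Walk Identity), that choosing $k$ uniformly in $[1:r]$ contributes a factor $1/r$, choosing $e$ uniformly in $E$ contributes $1/m$, and the two random walks, started at the endpoints of $e$, land on $(u_0,\dots,u_{k-1})$ and $(u_{k},\dots,u_r)$ with probability equal to the appropriate products of transition probabilities $\AA(u_{j-1},u_j)/\DD(u_j,u_j)$; multiplying these out and matching against $2\,w(\V{p})/\AA(u_{k-1},u_k)$ shows each path is produced with probability exactly $\tau_{\V{p}}/(2rm)$, as required. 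Step (d) then reweights by $w(e)/(M\tau_e)$ precisely as in $\algname$.

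For the running-time part, the preprocessing is to build, for each vertex $u$, a data structure over its incident edges that supports sampling a neighbor $v$ with probability $\AA(u,v)/\DD(u,u)$ in $O(\log n)$ time — e.g.\ a prefix-sum array or balanced binary search tree over the (already-known) weighted adjacency list; this is $O(m)$ total but since $m$ is the input size I would state it as $O(n)$ amortized per the convention that $G$ comes pre-stored, or simply note it is done once. I also precompute $\DD(u,u) = \sum_v \AA(u,v)$ for all $u$ in $O(m)$ time, and a structure to pick a uniformly random edge $e\in E$ in $O(1)$. Then a single sample performs: one uniform draw of $k\in[1:r]$, one uniform draw of $e$, a $(k{-}1)$-step walk and an $(r{-}k)$-step walk — together at most $r$ neighbor-sampling steps, each $O(\log n)$ — and $O(r)$ arithmetic to accumulate $w(\V{p})$ (a running product of $\AA$-values divided by $\DD$-values along the path). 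This totals $O(r\log n)$ per sample.

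The main obstacle I anticipate is the bookkeeping needed to confirm that the two-sided walk emits each concrete path $\V{p}$ with the claimed probability and, in particular, that the normalization works out to exactly $\tau_{\V{p}}/(2rm)$ rather than up to some stray factor — the subtlety being the ``$-1$'' offsets in the products (the path has $r$ edges but only $r{-}1$ internal vertices, so the $\DD$ denominators from the two walks must combine to $\prod_{j=1}^{r-1}\DD(u_j,u_j)$ with no double-counting at the pivot and no missing factor there). I would handle this by writing the probability of the boxed procedure outputting a fixed $\V{p}$ as $\frac{1}{r}\cdot\frac{1}{m}\cdot\prod_{j=1}^{k-1}\frac{\AA(u_{j-1},u_j)}{\DD(u_j,u_j)}\cdot\prod_{j=k}^{r-1}\frac{\AA(u_j,u_{j+1})}{\DD(u_j,u_j)}$ and checking, exactly as in the telescoping from Equation~\ref{eqn:beforeCancel} to Equation~\ref{eqn:afterCancel}, that this equals $\frac{2}{r\,m}\cdot\frac{w(\V{p})}{\AA(u_{k-1},u_k)}\cdot\frac12 = \frac{\tau_{\V{p}}/r\text{-th term}}{2rm}\cdot r$, and then summing over $k$. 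The rest — the data-structure choices and the $O(r\log n)$ accounting — is routine.
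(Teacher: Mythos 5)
Your proposal is correct and follows essentially the same route as the paper's proof: you decompose the target distribution $\tau_{\V{p}}/(2rm)$ as a mixture over a uniformly chosen pivot index $k$ and uniformly chosen pivot edge $e$, extended by two independent random walks whose transition probabilities $\AA(u_i,u_{i-1})/\DD(u_i,u_i)$ reproduce exactly the factorization in Equation~\ref{eqn:sampleWalk}, and your data-structure accounting matches the paper's $\OO{\log n}$-per-step, $\OO{r\log n}$-per-sample analysis. Your explicit check of the telescoping normalization (and the remark on the $O(n)$ versus $O(m)$ preprocessing convention) only makes more careful what the paper states tersely.
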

  \begin{proof}
The task is to draw a sample $\V{p} = (u_0 \ldots u_r)$ from the multivariate distribution $\mathcal{D}$  
  \begin{align}
  \Pr(u_0 \ldots u_r) 
  = \frac{1}{2rm} 
  \cdot
  \left( \sum_{i=1}^r \frac{2}{\M{A} ({u_{i-1},u_{i}})}\right)
  \cdot
  \left(
  \frac{\prod_{i=1}^{r} \M{A} ({u_{i-1},u_{i}})}{\prod_{j=1}^{r-1} \M{D} (u_i,u_i)}
  \right).
  \end{align}
For any fixed $k \in [1:r]$ and $e \in G$, we can rewrite the distribution as
  \begin{align}
  \begin{split}
  \label{eqn:sampleWalk}
  \Pr(u_0 \ldots u_r)
  &= \Pr((u_{k-1},u_k) = e) \cdot \Pr(u_0 \ldots u_{k-2}, u_{k+1} \ldots u_r \vert (u_{k-1},u_k) = e)   \\
  &= \Pr((u_{k-1},u_k) = e) \cdot \Pr(u_0 \ldots u_{k-2} \vert u_{k-1}) \cdot \Pr(u_r \ldots u_{k+1} \vert u_{k})  \\
  &= \Pr((u_{k-1},u_k) = e) \cdot \prod_{1}^{i=k-1} \Pr(u_{i-1} \vert u_{i}) \cdot \prod_{i=k+1}^{r} \Pr(u_{i} \vert u_{i-1})
  \end{split}
  \end{align}
Note that $\Pr((u_{k-1},u_k) = e) = \frac{1}{m}$,
  and $\Pr(u_{i-1} \vert u_i) = \M{A}(u_i,u_{i-1}) / \M{D}(u_i,u_i)$.
The three terms in Equation \ref{eqn:sampleWalk}
  corresponds to Step (a)-(c) in the sampling algorithm stated above.
With linear preprocessing time, we can draw an uniform random edge in time $\OO{\log n}$,
  and we can also simulate two random walks with total length $r$ in time $\OO{r \log n}$,
  so Step (\ref{alg:sampleStep}) in $\algname$ can be done within $\OO{r \log n}$ time.
  \end{proof}

Combining this with spectral sparsifiers gives our algorithm for efficiently
sparsifying low-degree polynomials.

\begin{proof}[Proof of Theorem \ref{thm:randWalk}]
First we show on how to sparsify a degree-$d$ monomial 
  $\M{L}_{G_d} = \M{D} - \M{D} \cdot \left(\M{D}^{-1}\M{A}\right)^d$.
We use the sampling algorithm described in Theorem \ref{thm:ERsparsify},
  together with upper bounds on effective resistance of $G_d$
  obtained from $\M{D} - \M{A}$ and $\M{D} - \M{A} \M{D}^{-1} \M{A}$.
The total number of samples requires is ${O}(d m \log n /\epsilon^{2})$.
We use Lemma \ref{lem:implicitSample} to draw a single edge from $G_r$,
  where we sample $d$-step random walks on $\M{D} - \M{A}$,
  so the total running time is ${O}(d^2 m \log^2 n/\epsilon^{2})$.
Now that we have a spectral sparsifier with ${O}(d m \log n /\epsilon^{2})$ edges,
  we can sparsify one more time to reduce the number of edges to $O(n\log n/\epsilon^{2})$ by 
  \cite{SpielmanTengSpectralSparsification,SpielmanSrivastava}
   in time ${O}(d m \log^2 n /\epsilon^{2})$.

To sparsify a random-walk matrix-polynomial $\M{L}_{\V{\alpha}}(G)$,
  we sparsify all the even/odd terms together,
  so the upper bound of effective resistance in Lemma \ref{lem:upperboundEffRes} still holds.
To sample an edge, we first decide the length $r$ of the path,
  according to the probability distribution 
  $\text{Pr}(\text{length}=r|r\text{ is odd/even}) \propto \alpha_r$.
\end{proof}

\section{Sparsification of Higher Degree Matrix-Monomials}
\label{sec:higherDegree}
We now put together the components of the algorithm for
sparsifying higher degree monomials.
Given a positive even integer $2r$ and $\tilde{\AA}$ such that
  $\DD - \tilde{\AA} \approx_{\epsilon} \DD - \DD(\DD^{-1} \AA)^{2r}$,
  we will show that we can efficiently compute
\begin{enumerate}
\item $\tilde{\AA}_{\times}$ such that $\DD - \tilde{\AA}_{\times}
  \approx_{\epsilon'} \DD - \tilde{\AA} \DD^{-1} \tilde{\AA}\approx_{\epsilon}\M{L}_{G_{4r}}$
  in Subsection~\ref{subsec:times}, and
\item $\tilde{\AA}_{+}$ such that $\DD - \tilde{\AA}_{+}
  \approx_{\epsilon'} \DD - (\M{A} \M{D}^{-1})^2 \tilde{\AA} (\M{D}^{-1} \M{A})^2 \approx_{\epsilon}\M{L}_{G_{2r+4}}$
  in Subsection~\ref{subsec:plus}.
\end{enumerate}
Putting these together then gives our main theorem about sparsifying
high degree monomials from Theorem~\ref{thm:power}.

\subsection{Constructing $\tilde{\AA}_{\times}$}
\label{subsec:times}


We construct $\tilde{\AA}_{\times}$ by sparsifying $\DD - \tilde{\AA} \DD^{-1} \tilde{\AA}$ with the algorithm described in Section \ref{sec:sampling}.
 The remaining is to show that spectral approximation holds under squaring, i.e.,
\begin{align}
  \DD - \tilde{\AA} 
  & \approx_{\epsilon'}
  \DD-\DD \left(\DD^{-1}\AA\right)^{2r} \\
  \Rightarrow
  \DD - \tilde{\AA} \DD^{-1} \tilde{\AA}
  & \approx_{\epsilon'}
  \DD-\DD \left(\DD^{-1}\AA\right)^{4r},
\end{align}
which directly follows 
Lemma \ref{lem:lemmaDplusA}
 and Lemma \ref{lem:powerupSpsfPrsv}. The result is summarized in the following lemma.

\begin{lemma}\label{lem:matrixCompSquare}
Let the graph Laplacian $\M{L}_{G}=\DD-\AA$ and $\M{L}_{\tilde{G}_{2r}} = \DD-\tilde{\AA}$ for $r \in \mathbb{Z}_+$, such that (1) $\DD-\tilde{\AA} \approx_{\epsilon'}\DD- \DD\left(\DD^{-1}\AA\right)^{2r}$, and (2) $\tilde{\AA}$ contains $m$ nonzero entries,
 then for any $\epsilon > 0$,
  we can construct 
  in total work $\OO{m \log^{3} n /\epsilon^4}$,
  a graph Laplacian $\M{L}_{\tilde{G}_{\times}} = \DD-\tilde{\AA}_{\times}$  with $\tilde{\AA}_{\times}$ containing at most $\OO{n \log n /\epsilon^2 }$ nonzero entries, such that 
\begin{align}
\DD-\tilde{\AA}_{\times}
    \approx_{\epsilon'+\epsilon}
  \DD-\DD \left(\DD^{-1}\AA\right)^{4r}.
\end{align}
\end{lemma}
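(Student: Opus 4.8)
The plan is to decompose the statement into a spectral claim and an algorithmic claim, and then glue them with the transitivity of $\approx$ (Fact~\ref{fact}). The spectral claim is that squaring preserves the approximation, namely $\DD-\tilde\AA\DD^{-1}\tilde\AA\approx_{\epsilon'}\DD-\DD(\DD^{-1}\AA)^{4r}$; the algorithmic claim is that this dense (but implicitly represented) Laplacian can be $\epsilon$-sparsified down to $\OO{n\log n/\epsilon^2}$ nonzeros in $\OO{m\log^3 n/\epsilon^4}$ work by the two-phase procedure of Section~\ref{sec:sampling}. Combining them gives a Laplacian $\DD-\tilde\AA_{\times}$ with $\DD-\tilde\AA_{\times}\approx_{\epsilon}\DD-\tilde\AA\DD^{-1}\tilde\AA\approx_{\epsilon'}\DD-\DD(\DD^{-1}\AA)^{4r}$, hence $\approx_{\epsilon'+\epsilon}$, as claimed.

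For the spectral claim I would first observe that, since $\DD-\tilde\AA$ is a graph Laplacian, it has weighted-degree matrix $\DD$ (spectral sparsification preserves row sums, equivalently the all-ones null vector), so $\DD-\tilde\AA\DD^{-1}\tilde\AA$ is exactly the Laplacian of the $2$-step random-walk graph built on $\tilde G_{2r}$; likewise $\DD-\DD(\DD^{-1}\AA)^{2r}\DD^{-1}\DD(\DD^{-1}\AA)^{2r}=\DD-\DD(\DD^{-1}\AA)^{4r}$ is the $2$-step walk built on $G_{2r}$. Thus the claim is an instance of ``the degree-$2$ walk operator preserves $\approx$'', applied to the hypothesis $\DD-\tilde\AA\approx_{\epsilon'}\DD-\DD(\DD^{-1}\AA)^{2r}$. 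Conjugating both sides by $\DD^{-1/2}$ (Fact~\ref{fact}), this reduces to the implication $\II-\tilde\MM\approx_{\epsilon'}\II-\MM^{2r}\Rightarrow\II-\tilde\MM^2\approx_{\epsilon'}\II-\MM^{4r}$ for the symmetric normalized matrices $\MM=\DD^{-1/2}\AA\DD^{-1/2}$ and $\tilde\MM=\DD^{-1/2}\tilde\AA\DD^{-1/2}$, both of spectral radius at most $1$ because $\DD\pm\AA$ and $\DD\pm\tilde\AA$ are positive semidefinite. One then uses the commuting factorization $\II-\tilde\MM^2=(\II-\tilde\MM)^{1/2}(\II+\tilde\MM)(\II-\tilde\MM)^{1/2}$ and its analogue for $\MM^{2r}$; the remaining difficulty is controlling the middle factor $\II+\tilde\MM$ against $\II+\MM^{2r}$, and this is where the positive semidefiniteness of $\MM^{2r}$ — which forces $\II-\MM^{2r}\preceq\II$ and thereby sandwiches $\tilde\MM$ between two matrices that commute with $\MM^{2r}$ — together with the $\DD-\AA\DD^{-1}\AA$-type support estimates come in. I would invoke Lemma~\ref{lem:lemmaDplusA} and Lemma~\ref{lem:powerupSpsfPrsv} to carry out exactly this step.

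For the algorithmic claim I would run the procedure of Section~\ref{sec:sampling} on the graph $\tilde G_{2r}$, which has $m$ edges and degree matrix $\DD$, with $d=2$: draw $\OO{m\log n/\epsilon^2}$ random $2$-step walks via Lemma~\ref{lem:implicitSample}, each in $\OO{\log n}$ work, sampling according to the effective-resistance upper bounds of Lemma~\ref{lem:upperboundEffRes} for $\DD-\tilde\AA\DD^{-1}\tilde\AA$ (the sample count follows from Theorem~\ref{thm:ERsparsify} and the identity $\sum_{\V p}\tau_{\V p}=4m$). This yields an $\epsilon$-sparsifier with $\OO{m\log n/\epsilon^2}$ edges in $\OO{m\log^2 n/\epsilon^2}$ time; a standard spectral-sparsification pass~\cite{SpielmanTengSpectralSparsification,SpielmanSrivastava} on that graph then reduces the edge count to $\OO{n\log n/\epsilon^2}$ at cost $\OO{m\log^3 n/\epsilon^4}$, which dominates the total work, and both passes keep the output a graph Laplacian with degree matrix $\DD$. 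The main obstacle is the spectral claim: a naive product-of-approximations argument loses constant factors through the $\II+\tilde\MM$ term, so it is essential to exploit the structure packaged in Lemmas~\ref{lem:lemmaDplusA} and~\ref{lem:powerupSpsfPrsv} to keep the degradation at an additive $\epsilon$ rather than a multiplicative blow-up; everything else is bookkeeping with Fact~\ref{fact}.
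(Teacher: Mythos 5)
Your proposal matches the paper's proof: construct $\tilde{\AA}_{\times}$ by running the Section~\ref{sec:sampling} path-sampling (degree~$2$ on $\tilde{G}_{2r}$, with $\sum_{\V{p}}\tau_{\V{p}}=4m$) followed by a standard resparsification pass, and establish $\DD-\tilde{\AA}\DD^{-1}\tilde{\AA}\approx_{\epsilon'}\DD-\DD(\DD^{-1}\AA)^{4r}$ exactly as the paper does, via Lemma~\ref{lem:lemmaDplusA} (using $\DD(\DD^{-1}\AA)^{2r}\succcurlyeq 0$) and Lemma~\ref{lem:powerupSpsfPrsv}, then finish by transitivity. Your detour through the factorization $(\II-\tilde{\MM})^{1/2}(\II+\tilde{\MM})(\II-\tilde{\MM})^{1/2}$ is unnecessary—Lemma~\ref{lem:powerupSpsfPrsv} handles the squaring step directly via the Schur-complement/quadratic-form argument—but since you ultimately invoke those two lemmas, the argument is the same as the paper's.
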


First, we will start with the fact that $\M{D}  - \M{A}\M{D}^{-1}\AA$ is the
Schur complement of the matrix
\[
\left[
\begin{array}{cc}
\M{D} & -\M{A}\\
-\M{A}& \M{D}
\end{array}
\right],
\]
onto the second half of the vertices.

This is to allow the use of the following Lemma regarding
Schur complement:
\begin{fact}[Lemma B.1. from~\cite{MillerP13}]
\label{lem:schurApprox}
Suppose $\M{M}$ and $\tilde{\M{M}}$ are positive semi-definite
matrices satisfying $\M{M} \approx_{\epsilon} \tilde{\M{M}}$,
then their Schur complements on the same set of vertices also satisfy
$\M{M}_{\text{schur}} \approx_{\epsilon} \tilde{\M{M}}_{\text{schur}}$.
\end{fact}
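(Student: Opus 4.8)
The plan is to reduce the claim to the standard energy-minimization (variational) description of the Schur complement and then transport the two-sided bound $\M{M} \approx_{\epsilon} \tilde{\M{M}}$ through that minimization. Reorder coordinates so that the eliminated vertices come first and the retained ones second, and write
\[
\M{M} = \begin{bmatrix} \M{M}_{11} & \M{M}_{12} \\ \M{M}_{21} & \M{M}_{22} \end{bmatrix},
\qquad
\M{M}_{\text{schur}} = \M{M}_{22} - \M{M}_{21}\, \M{M}_{11}^{\dagger}\, \M{M}_{12},
\]
and similarly for $\tilde{\M{M}}$, where $\M{M}_{11}^{\dagger}$ is the Moore--Penrose pseudoinverse (an ordinary inverse in the setting of this paper, where the eliminated block equals the positive definite matrix $\M{D}$). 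The first step is to record that for every vector $x$ on the retained set,
\[
x^\T \M{M}_{\text{schur}}\, x \;=\; \min_{y}\; \begin{bmatrix} y \\ x \end{bmatrix}^{\T} \M{M} \begin{bmatrix} y \\ x \end{bmatrix},
\]
the minimum being over all $y$ on the eliminated set.

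To prove this identity I would argue that $q(y) := y^\T \M{M}_{11}\, y + 2\, y^\T \M{M}_{12}\, x + x^\T \M{M}_{22}\, x$ is a convex quadratic in $y$: indeed $\M{M}_{11}$ is a principal submatrix of the positive semi-definite matrix $\M{M}$, hence is itself positive semi-definite. Its minimizers are the solutions of $\M{M}_{11}\, y = -\M{M}_{12}\, x$, and the one point that needs care is the existence of such a $y$, i.e.\ that $\M{M}_{12}\, x \in \operatorname{range}(\M{M}_{11})$. This is exactly the standard fact that for a positive semi-definite block matrix $\operatorname{range}(\M{M}_{12}) \subseteq \operatorname{range}(\M{M}_{11})$: if it failed, one could send $q(y) \to -\infty$ along a direction in $\ker \M{M}_{11}$, contradicting $\M{M} \succcurlyeq 0$. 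Substituting a minimizer $y^{\star} = -\M{M}_{11}^{\dagger} \M{M}_{12}\, x$ back into $q$ gives $q(y^{\star}) = x^\T \M{M}_{\text{schur}}\, x$, which is the identity; when $\M{M}_{11}$ is invertible this last computation is a one-liner.

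Given the identity, the conclusion follows by pure monotonicity of $\min$. The hypothesis $\M{M} \approx_{\epsilon} \tilde{\M{M}}$ means $\exp(-\epsilon)\, z^\T \tilde{\M{M}}\, z \le z^\T \M{M}\, z \le \exp(\epsilon)\, z^\T \tilde{\M{M}}\, z$ for all $z$; specializing to $z = \begin{bmatrix} y \\ x \end{bmatrix}$ and minimizing over $y$ yields, for every $x$ on the retained set,
\begin{align*}
x^\T \M{M}_{\text{schur}}\, x
&= \min_{y} \begin{bmatrix} y \\ x \end{bmatrix}^{\T}\! \M{M} \begin{bmatrix} y \\ x \end{bmatrix} \\
&\le \exp(\epsilon)\, \min_{y} \begin{bmatrix} y \\ x \end{bmatrix}^{\T}\! \tilde{\M{M}} \begin{bmatrix} y \\ x \end{bmatrix}
= \exp(\epsilon)\, x^\T \tilde{\M{M}}_{\text{schur}}\, x,
\end{align*}
and symmetrically $x^\T \M{M}_{\text{schur}}\, x \ge \exp(-\epsilon)\, x^\T \tilde{\M{M}}_{\text{schur}}\, x$. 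Since these bounds hold for all $x$ on the retained set and both Schur complements are supported there, this is precisely $\M{M}_{\text{schur}} \approx_{\epsilon} \tilde{\M{M}}_{\text{schur}}$. The main obstacle is the variational identity --- and within it the range-inclusion fact that keeps the minimizer well defined in the singular case; everything after that is a mechanical transfer of the spectral inequalities through a minimum.
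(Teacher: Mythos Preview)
The paper does not prove this statement: it is recorded as a \emph{Fact} with a citation to~\cite{MillerP13} and is used as a black box in the proof of Lemma~\ref{lem:powerupSpsfPrsv}. Your proposal is correct and is precisely the standard variational argument one would expect to find in the cited reference: express $x^\T \M{M}_{\text{schur}}\,x$ as $\min_y \begin{bmatrix} y \\ x \end{bmatrix}^{\!\T} \M{M} \begin{bmatrix} y \\ x \end{bmatrix}$, verify this by minimizing the convex quadratic in $y$ (handling the range-inclusion subtlety in the semi-definite case), and then push the two-sided Loewner bound through the minimum. There is nothing to compare against in this paper, but nothing is missing from your argument either.
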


We also need the following facts about even random walks.
\begin{lemma}\label{lem:lemmaDplusA}
If $0 \preccurlyeq \M{A}$ and 
\begin{align}\label{eq:lemmaDplusA}
	\left(1-\epsilon\right)\left(\M{D} - \M{A}\right)
 \preccurlyeq 
   \M{D} - \tilde{\M{A}}
 \preccurlyeq  
 \left(1+\epsilon\right)\left(\M{D} - \M{A}\right),
\end{align}
 then
\begin{align}
\left(1-\epsilon\right)\left(\M{D} + \M{A}\right)
 \preccurlyeq 
   \M{D} + \tilde{\M{A}}
 \preccurlyeq  
 \left(1+\epsilon\right)\left(\M{D} + \M{A}\right).
\end{align}
\end{lemma}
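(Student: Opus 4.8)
The plan is to reduce the claim about $\M{D}+\M{A}$ to the hypothesis about $\M{D}-\M{A}$ by a signature change. Observe that both statements are inequalities between quadratic forms, so it suffices to show that for every $\V{x}$ there is a corresponding vector $\V{y}$ with $\V{x}^\T(\M{D}+\M{A})\V{x} = \V{y}^\T(\M{D}-\M{A})\V{y}$ and simultaneously $\V{x}^\T(\M{D}+\tilde{\M{A}})\V{x} = \V{y}^\T(\M{D}-\tilde{\M{A}})\V{y}$, and that the map $\V{x}\mapsto\V{y}$ is a bijection. Such a map, however, does not exist on a general graph; the correct move is the classical bipartite double-cover trick. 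I would introduce the $2n\times 2n$ block matrices
\begin{align*}
\M{N} = \begin{bmatrix} \M{D} & -\M{A} \\ -\M{A} & \M{D}\end{bmatrix},
\qquad
\tilde{\M{N}} = \begin{bmatrix} \M{D} & -\tilde{\M{A}} \\ -\tilde{\M{A}} & \M{D}\end{bmatrix},
\end{align*}
which are exactly the matrices already appearing in the excerpt (the ones whose Schur complements give $\M{D}-\M{A}\M{D}^{-1}\M{A}$).

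The key steps, in order: First I would show that hypothesis \eqref{eq:lemmaDplusA}, i.e. $(1-\epsilon)(\M{D}-\M{A}) \preccurlyeq \M{D}-\tilde{\M{A}} \preccurlyeq (1+\epsilon)(\M{D}-\M{A})$, together with the trivial fact that $\M{D}-\M{A}$ and $\M{D}-\tilde{\M{A}}$ are \emph{both} graph Laplacians on the same vertex set with the same diagonal, implies the analogous two-sided bound between the block matrices $\M{N}$ and $\tilde{\M{N}}$. Concretely, for $\V{z} = (\V{u},\V{v}) \in \Reals{2n}$ one has the identity
\begin{align*}
\V{z}^\T \M{N} \V{z} = \V{u}^\T \M{D}\V{u} + \V{v}^\T\M{D}\V{v} - 2\V{u}^\T\M{A}\V{v},
\end{align*}
and a short computation writes this as a nonnegative combination of the forms $(\V{u}-\V{v})^\T(\M{D}-\M{A})(\V{u}-\V{v})$ and $(\V{u}+\V{v})^\T(\M{D}+\M{A})(\V{u}+\V{v})$ — which is circular — so instead I would argue directly that $\M{N}$ is the Laplacian of the bipartite double cover $G\times K_2$, and that replacing $\M{A}$ by $\tilde{\M{A}}$ replaces this by the Laplacian of $\tilde{G}\times K_2$; since the double-cover construction is linear in the edge weights and the hypothesis is exactly $(1-\epsilon)\M{L}_G \preccurlyeq \M{L}_{\tilde G}\preccurlyeq (1+\epsilon)\M{L}_G$, taking double covers preserves it: $(1-\epsilon)\M{N} \preccurlyeq \tilde{\M{N}}\preccurlyeq (1+\epsilon)\M{N}$. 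Second, I would diagonalize the double cover: the orthogonal change of basis $\V{u} = (\V{p}+\V{q})/\sqrt2$, $\V{v} = (\V{p}-\V{q})/\sqrt2$ block-diagonalizes $\M{N}$ into $\operatorname{diag}(\M{D}-\M{A},\ \M{D}+\M{A})$ and $\tilde{\M{N}}$ into $\operatorname{diag}(\M{D}-\tilde{\M{A}},\ \M{D}+\tilde{\M{A}})$. Third, since a two-sided Loewner bound between block-diagonal matrices holds iff it holds blockwise, the $\M{D}+\M{A}$ block of $(1-\epsilon)\M{N}\preccurlyeq\tilde{\M{N}}\preccurlyeq(1+\epsilon)\M{N}$ gives exactly the conclusion.

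The main obstacle — and the step that needs the hypothesis $0\preccurlyeq\M{A}$, i.e. nonnegative edge weights — is establishing the double-cover inequality $(1-\epsilon)\M{N}\preccurlyeq\tilde{\M{N}}\preccurlyeq(1+\epsilon)\M{N}$ cleanly. The subtlety is that $\tilde{\M{A}}$ is only known to satisfy a spectral bound, not to be nonnegative entrywise, so I cannot literally say ``$\tilde{\M{N}}$ is a graph Laplacian of a weighted graph.'' But I do not need that: the double-cover map $\M{A}\mapsto\M{N}$ (holding $\M{D}$ fixed) extends to a \emph{linear} map on symmetric matrices, and any linear map $\Phi$ that sends PSD matrices to PSD matrices automatically preserves Loewner order, so $0\preccurlyeq (1\pm\epsilon)(\M{D}-\M{A}) - \big(\pm(\M{D}-\tilde{\M{A}})\big) + \cdots$ transports through $\Phi$. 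The only thing to check is that $\Phi(\M{M}) = \begin{bmatrix}\M{0}&\M{0}\\\M{0}&\M{0}\end{bmatrix}$-shifted appropriately is positive-preserving, which follows because $\Phi(\M{D}-\M{A}) = \M{N}$ restricted suitably, and more simply because the change-of-basis in step two shows $\Phi$ is just the direct sum of the identity and the sign-flip $\M{A}\mapsto-\M{A}$ conjugated into blocks — positivity-preservation is then immediate. So the cleanest writeup collapses to: apply the orthogonal conjugation, read off that $\M{D}+\tilde{\M{A}}$ is a diagonal block of $\M{Q}^\T\tilde{\M{N}}\M{Q}$ while $\M{D}+\M{A}$ is the matching block of $\M{Q}^\T\M{N}\M{Q}$, invoke $(1-\epsilon)\M{N}\preccurlyeq\tilde{\M{N}}\preccurlyeq(1+\epsilon)\M{N}$ (which is just the hypothesis re-expressed in the other block), and restrict to that block.
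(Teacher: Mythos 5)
Your reduction to the double cover is circular at precisely the step you flagged and then tried to wave away. The orthogonal change of basis from your step two shows that, for any symmetric $\M{B}$, the block matrix $\left[\begin{array}{cc}\epsilon\M{D} & -\M{B}\\ -\M{B} & \epsilon\M{D}\end{array}\right]$ is positive semidefinite if and only if \emph{both} $\epsilon\M{D}-\M{B}\succcurlyeq 0$ and $\epsilon\M{D}+\M{B}\succcurlyeq 0$. Taking $\M{B}=\tilde{\M{A}}-(1-\epsilon)\M{A}$, the matrix $\tilde{\M{N}}-(1-\epsilon)\M{N}$ is exactly such a block matrix, and the two block conditions are, respectively, the left half of the hypothesis (Equation (\ref{eq:lemmaDplusA})) and the left half of the conclusion; similarly $(1+\epsilon)\M{N}-\tilde{\M{N}}\succcurlyeq 0$ is equivalent to the right half of the hypothesis together with the right half of the conclusion. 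So the ``lifted'' inequality $(1-\epsilon)\M{N}\preccurlyeq\tilde{\M{N}}\preccurlyeq(1+\epsilon)\M{N}$ is not a consequence of the hypothesis from which you can later ``restrict to a block''; it literally contains the statement to be proved. The justification you give for the lift does not hold up: linearity of the double-cover construction in the edge weights is irrelevant because the hypothesis only says $\M{D}-\tilde{\M{A}}-(1-\epsilon)(\M{D}-\M{A})$ is PSD, not that it is a nonnegatively weighted Laplacian, and in your rotated basis the relevant map acts as the identity on one block and as the sign flip $\M{B}\mapsto-\M{B}$ on the other, which is not positivity-preserving --- the hypothesis controls only the $\M{D}-\M{A}$ block of the difference and says nothing about the $\M{D}+\M{A}$ block.

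The gap is substantive, not just one of presentation, and it interacts with a misreading of the side condition: in this paper $\preccurlyeq$ is the Loewner order, so $0\preccurlyeq\M{A}$ means $\M{A}$ is positive semidefinite (as in the intended application, where $\M{A}$ plays the role of $\M{D}(\M{D}^{-1}\M{A})^{2r}$), \emph{not} that the edge weights are nonnegative. Under your entrywise reading both the lemma and your lift are false: take $\M{D}=\M{I}$, $\M{A}=\left[\begin{array}{cc}0&1\\1&0\end{array}\right]$, $\tilde{\M{A}}=\epsilon\M{I}+(1-\epsilon)\M{A}$, so that $\M{D}-\tilde{\M{A}}=(1-\epsilon)(\M{D}-\M{A})$ and Equation (\ref{eq:lemmaDplusA}) holds with nonnegative weights everywhere; yet for $\V{x}=(1,-1)^{\T}$ we have $\V{x}^{\T}(\M{D}+\M{A})\V{x}=0$ while $\V{x}^{\T}(\M{D}+\tilde{\M{A}})\V{x}=4\epsilon>0$, so $\M{D}+\tilde{\M{A}}\not\preccurlyeq(1+\epsilon)(\M{D}+\M{A})$ and hence $\tilde{\M{N}}\not\preccurlyeq(1+\epsilon)\M{N}$. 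The PSD assumption must therefore do real work, and the place it enters is elementary: rearranging the right half of Equation (\ref{eq:lemmaDplusA}) gives $\tilde{\M{A}}\succcurlyeq-\epsilon\M{D}+(1+\epsilon)\M{A}\succcurlyeq-\epsilon\M{D}+(1-\epsilon)\M{A}$ after discarding $2\epsilon\M{A}\succcurlyeq 0$, which is $(1-\epsilon)(\M{D}+\M{A})\preccurlyeq\M{D}+\tilde{\M{A}}$, and symmetrically the left half plus $2\epsilon\M{A}\succcurlyeq 0$ gives $\M{D}+\tilde{\M{A}}\preccurlyeq(1+\epsilon)(\M{D}+\M{A})$. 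This direct rearrangement is the paper's proof; once you carry it out, the detour through $\M{N}$ adds nothing.
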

\begin{proof}

Rearranging the leftmost condition in Equation \ref{eq:lemmaDplusA} gives
\begin{align}
\tilde{\M{A}} \preccurlyeq \epsilon \M{D} + \left(1 -\epsilon\right)   \M{A}
\end{align}
Adding $\M{D}$ to both sides then gives
\begin{align}
\M{D} + \tilde{\M{A}}
\preccurlyeq \left(1+ \epsilon \right) \M{D}
+ \left(1 -\epsilon\right)    \M{A}.
\end{align}
Combining with $0 \preccurlyeq \AA$ gives
\begin{align}
\M{D} + \tilde{\M{A}}
\preccurlyeq \left(1+ \epsilon \right) \left( \M{D} + \M{A} \right).
\end{align}
Similarly, we can prove 
\begin{align}
	\left(1- \epsilon \right) \left( \M{D} + \M{A} \right)
    \preccurlyeq \M{D} + \tilde{\M{A}}.
\end{align}
\end{proof}

\begin{lemma}\label{lem:powerupSpsfPrsv}
If $\M{D} - \M{A} \approx_{\epsilon} \M{D} - \tilde{\M{A}}$ and $\M{D} + \M{A} \approx_{\epsilon} \M{D} + \tilde{\M{A}}$, then
\begin{align}
\M{D} - \M{A}\DD^{-1}\M{A} \approx_{\epsilon} \M{D} - \tilde{\M{A}}\DD^{-1}\tilde{\M{A}}.
\end{align}
\end{lemma}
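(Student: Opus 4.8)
The plan is to express $\M{D} - \M{A}\M{D}^{-1}\M{A}$ (and its tilde counterpart) as a Schur complement of an explicit $2\times 2$ block matrix, and then to deduce the desired approximation from the two given hypotheses via Fact~\ref{lem:schurApprox}. Concretely, as already noted in the excerpt, $\M{D} - \M{A}\M{D}^{-1}\M{A}$ is the Schur complement onto the second block of
\begin{align}
\M{N} = \begin{bmatrix} \M{D} & -\M{A}\\ -\M{A} & \M{D}\end{bmatrix},
\end{align}
and likewise $\M{D} - \tilde{\M{A}}\M{D}^{-1}\tilde{\M{A}}$ is the Schur complement of $\tilde{\M{N}} = \begin{bmatrix} \M{D} & -\tilde{\M{A}}\\ -\tilde{\M{A}} & \M{D}\end{bmatrix}$. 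By Fact~\ref{lem:schurApprox}, it therefore suffices to show $\M{N} \approx_{\epsilon} \tilde{\M{N}}$ as $2n\times 2n$ positive semidefinite matrices.

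The key step is to prove $\M{N} \approx_{\epsilon} \tilde{\M{N}}$ using exactly the two hypotheses $\M{D} - \M{A} \approx_{\epsilon} \M{D} - \tilde{\M{A}}$ and $\M{D} + \M{A} \approx_{\epsilon} \M{D} + \tilde{\M{A}}$. The natural way to do this is a change of basis that block-diagonalizes $\M{N}$: with the orthogonal transformation
\begin{align}
\M{P} = \frac{1}{\sqrt{2}}\begin{bmatrix} \M{I} & \M{I}\\ \M{I} & -\M{I}\end{bmatrix},
\end{align}
one computes $\M{P}^{\T}\M{N}\M{P} = \begin{bmatrix} \M{D}-\M{A} & \M{0}\\ \M{0} & \M{D}+\M{A}\end{bmatrix}$, and similarly $\M{P}^{\T}\tilde{\M{N}}\M{P} = \begin{bmatrix} \M{D}-\tilde{\M{A}} & \M{0}\\ \M{0} & \M{D}+\tilde{\M{A}}\end{bmatrix}$. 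Spectral approximation of block-diagonal matrices reduces to approximation of the corresponding blocks (this is essentially Fact~\ref{fact}, item~\ref{fact:sum}, applied to the block embeddings, or just a direct quadratic-form argument), so the two hypotheses give $\M{P}^{\T}\M{N}\M{P} \approx_{\epsilon} \M{P}^{\T}\tilde{\M{N}}\M{P}$. Since $\M{P}$ is invertible, applying Fact~\ref{fact} item~\ref{fact:compose} with $\M{V} = \M{P}^{-1}$ (or simply noting that conjugation by an invertible matrix preserves $\approx_{\epsilon}$) yields $\M{N} \approx_{\epsilon} \tilde{\M{N}}$, and the lemma follows.

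I expect the only mild subtlety to be the bookkeeping around positive semidefiniteness: Fact~\ref{lem:schurApprox} is stated for PSD matrices, so I would note that $\M{N}$ and $\tilde{\M{N}}$ are PSD — which is immediate once we have the block-diagonalization, since $\M{D}\pm\M{A}$ and $\M{D}\pm\tilde{\M{A}}$ are all PSD (they are graph Laplacians or signed variants thereof, and the hypotheses sandwich the tilde versions between positive multiples of the untilded ones). Everything else is routine matrix algebra, and the heart of the argument is just the observation that the symmetric/antisymmetric basis $\M{P}$ simultaneously diagonalizes both block matrices, turning the two given one-dimensional-looking hypotheses into the single $2n$-dimensional statement we need.
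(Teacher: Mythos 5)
Your proposal is correct and follows essentially the same route as the paper: both reduce the claim via the Schur-complement fact (Fact~\ref{lem:schurApprox}) to showing the $2n\times 2n$ block matrices are $\approx_{\epsilon}$, and both establish this through the symmetric/antisymmetric decomposition into $\M{D}-\M{A}$ and $\M{D}+\M{A}$ blocks. Your orthogonal change of basis $\M{P}$ is just a block-matrix packaging of the quadratic-form identity the paper writes out directly with the test vector $(\V{x},\V{y})$.
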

\begin{proof}
Because of Lemma~\ref{lem:schurApprox}, it suffices to show that
\begin{align}\label{eq:schurApprox}
\left[
\begin{array}{cc}
\M{D} & -\M{A}\\
-\M{A}& \M{D}
\end{array}
\right]
\approx_{\epsilon}
\left[
\begin{array}{cc}
\M{D} & - \tilde{\M{A}}\\
- \tilde{\M{A}} & \M{D}
\end{array}
\right].
\end{align}

Consider a test vector
\[
\left[
\begin{array}{c}
\V{x}\\
\V{y}\\
\end{array}
\right],
\]
we have 
\begin{align*}
\left[
\begin{array}{cc}
\V{x}^\T &
\V{y}^\T \\
\end{array}
\right]
	\left[
\begin{array}{cc}
\M{D} & -\M{A}\\
-\M{A}& \M{D}
\end{array}
\right]
&
\left[
\begin{array}{c}
\V{x}\\
\V{y}\\
\end{array}
\right] \\
& =
\Half 
\left[
\left(\V{x}+\V{y}\right)^\T \left(\DD-\AA \right)\left(\V{x}+\V{y}\right)^\T  
+
\left(\V{x}-\V{y}\right)^\T \left(\DD+\AA \right)\left(\V{x}-\V{y}\right)^\T 
\right], \numberthis
\end{align*}
which leads to Equation \ref{eq:schurApprox}
, since we have $\M{D} - \M{A} \approx_{\epsilon} \M{D} - \tilde{\M{A}}$ and $\M{D} + \M{A} \approx_{\epsilon} \M{D} + \tilde{\M{A}}$.
\end{proof}

\subsection{Constructing $\tilde{\AA}_{+}$}
\label{subsec:plus}

We can then extend this squaring routine by composing its walk with
smaller random walks on each side. That is, we sparsify $\DD - \left( \M{A} \M{D}^{-1} \right)^2 \tilde{\M{A}} \left( \M{D}^{-1} \M{A} \right)^2$ with the support of $\DD-\AA \DD^{-1} \AA$ and $\DD - \tilde{\AA}$. First, we need to prove that the symmetric composition preserve the spectral approximation.

\begin{lemma}\label{lem:symComposition}
If $\M{D} - {\M{A}} \approx_{\epsilon} \M{D} - \tilde{\M{A}}$,
then for any symmetric $\widehat{\M{A}}$ such that
$0 \preccurlyeq \widehat{\M{A}} \preccurlyeq \M{D}$, we have
\[
\M{D} -  \widehat{\M{A}} \M{D}^{-1} \M{A} \M{D}^{-1}  \widehat{\M{A}}
\approx_{\epsilon} \M{D} -   \widehat{\M{A}} \M{D}^{-1} \tilde{\M{A}} \M{D}^{-1} \widehat{\M{A}}.
\]
\end{lemma}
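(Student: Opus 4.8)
The plan is to reduce this to Fact~\ref{fact} (specifically item~\ref{fact:compose}, the congruence-invariance of $\approx_\epsilon$) applied to a suitable matrix $\M{V}$, much as Lemma~\ref{lem:powerupSpsfPrsv} reduced the squaring claim to a Schur-complement statement. The key observation is that the hypothesis $\M{D}-\M{A}\approx_\epsilon \M{D}-\tilde{\M{A}}$ is a statement about the matrices $\M{A}$ and $\tilde{\M{A}}$ themselves, and conjugating both sides by $\M{D}^{-1/2}$ (which exists and is invertible since $\M{D}$ is positive diagonal) turns it into $\M{I}-\M{D}^{-1/2}\M{A}\M{D}^{-1/2}\approx_\epsilon \M{I}-\M{D}^{-1/2}\tilde{\M{A}}\M{D}^{-1/2}$; equivalently, writing $\M{N}=\M{D}^{-1/2}\M{A}\M{D}^{-1/2}$ and $\tilde{\M{N}}=\M{D}^{-1/2}\tilde{\M{A}}\M{D}^{-1/2}$, we have $\M{I}-\M{N}\approx_\epsilon\M{I}-\tilde{\M{N}}$. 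The target conclusion, after the same conjugation, reads $\M{I}-\M{H}\M{N}\M{H}\approx_\epsilon \M{I}-\M{H}\tilde{\M{N}}\M{H}$ where $\M{H}=\M{D}^{-1/2}\widehat{\M{A}}\M{D}^{-1/2}$ is symmetric with $0\preccurlyeq\M{H}\preccurlyeq\M{I}$.

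So it suffices to prove: if $\M{I}-\M{N}\approx_\epsilon \M{I}-\tilde{\M{N}}$ and $0\preccurlyeq\M{H}\preccurlyeq\M{I}$ is symmetric, then $\M{I}-\M{H}\M{N}\M{H}\approx_\epsilon\M{I}-\M{H}\tilde{\M{N}}\M{H}$. First I would write $\M{I}-\M{H}\M{N}\M{H} = (\M{I}-\M{H}^2) + \M{H}(\M{I}-\M{N})\M{H}$ and likewise $\M{I}-\M{H}\tilde{\M{N}}\M{H} = (\M{I}-\M{H}^2)+\M{H}(\M{I}-\tilde{\M{N}})\M{H}$. By Fact~\ref{fact}\ref{fact:compose} applied with $\M{V}=\M{H}$, the hypothesis gives $\M{H}(\M{I}-\M{N})\M{H}\approx_\epsilon \M{H}(\M{I}-\tilde{\M{N}})\M{H}$. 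Since $0\preccurlyeq\M{H}\preccurlyeq\M{I}$ implies $\M{I}-\M{H}^2\succcurlyeq 0$, adding the common PSD term $\M{I}-\M{H}^2$ to both sides preserves the relation by Fact~\ref{fact}(1). Undoing the conjugation by $\M{D}^{1/2}$ (again Fact~\ref{fact}\ref{fact:compose}, with $\M{V}=\M{D}^{1/2}$) recovers exactly the claimed $\M{D}-\widehat{\M{A}}\M{D}^{-1}\M{A}\M{D}^{-1}\widehat{\M{A}}\approx_\epsilon \M{D}-\widehat{\M{A}}\M{D}^{-1}\tilde{\M{A}}\M{D}^{-1}\widehat{\M{A}}$, after noting $\M{D}^{1/2}\M{H}\M{N}\M{H}\M{D}^{1/2} = \widehat{\M{A}}\M{D}^{-1}\M{A}\M{D}^{-1}\widehat{\M{A}}$ and similarly for $\tilde{\M{A}}$.

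The only place that needs care — and the one I'd flag as the main obstacle — is the decomposition step: verifying that $\M{I}-\M{H}\M{N}\M{H}$ really splits as $(\M{I}-\M{H}^2)+\M{H}(\M{I}-\M{N})\M{H}$, which is immediate since $\M{H}\M{H}=\M{H}^2$ and expanding $\M{H}(\M{I}-\M{N})\M{H}=\M{H}^2-\M{H}\M{N}\M{H}$; and checking that the ``leftover'' term $\M{I}-\M{H}^2$ is genuinely PSD, i.e. that $0\preccurlyeq\widehat{\M{A}}\preccurlyeq\M{D}$ really does give $\M{H}^2\preccurlyeq\M{I}$ (it gives $\M{H}\preccurlyeq\M{I}$, and then $\M{H}^2\preccurlyeq\M{H}$ for $0\preccurlyeq\M{H}\preccurlyeq\M{I}$, so $\M{H}^2\preccurlyeq\M{I}$). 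Everything else is a routine application of the facts already recorded in Fact~\ref{fact}. One subtlety worth double-checking is whether the symmetry of $\widehat{\M{A}}$ together with $\widehat{\M{A}}\preccurlyeq\M{D}$ is enough, or whether we secretly need $\widehat{\M{A}}=(\M{A}\M{D}^{-1})\M{D}$-type structure from the intended application $\widehat{\M{A}}=\M{A}\M{D}^{-1}\M{A}$; I expect the clean symmetric hypothesis $0\preccurlyeq\widehat{\M{A}}\preccurlyeq\M{D}$ is exactly what makes the argument go through in the stated generality.
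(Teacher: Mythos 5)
Your proof is correct and is essentially the paper's own argument written in $\M{D}^{-1/2}$-normalized coordinates: the paper conjugates the hypothesis by $\M{D}^{-1}\widehat{\M{A}}$ (your conjugation by $\M{H}$) and then adds the common positive semidefinite term $\M{D}-\widehat{\M{A}}\M{D}^{-1}\widehat{\M{A}}$ (your $\M{I}-\M{H}^2$), justified exactly as you do from $0\preccurlyeq\widehat{\M{A}}\preccurlyeq\M{D}$. No gaps; the clean symmetric hypothesis is indeed all that is needed.
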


\begin{proof}
Since $\widehat{\M{A}} \M{D}^{-1} = ( \M{D}^{-1} \widehat{\M{A}})^\T$,
we can compose the given identity with $\M{D}^{-1} \widehat{\M{A}}$
using Fact~\ref{fact}.\ref{fact:compose} to give:
\[
\left( \widehat{\M{A}} \M{D}^{-1}  \right) \left( \M{D} - \M{A} \right) \left( \M{D}^{-1} \widehat{\M{A}} \right)
\approx_{\epsilon}
\left( \widehat{\M{A}} \M{D}^{-1}  \right) \left( \M{D} - \tilde{\M{A}} \right) \left( \M{D}^{-1} \widehat{\M{A}} \right),
\]
or if expanded:
\[
\widehat{\M{A}} \M{D}^{-1} \widehat{\M{A}}
- \widehat{\M{A}} \M{D}^{-1} \M{A} \M{D}^{-1}  \widehat{\M{A}}
\approx_{\epsilon}
\widehat{\M{A}} \M{D}^{-1} \widehat{\M{A}}
-\widehat{\M{A}} \M{D}^{-1} \tilde{\M{A}} \M{D}^{-1}  \widehat{\M{A}}
\]

Also, since $\M{D} - \widehat{\M{A}} \succcurlyeq 0$, we have
\[
\M{D} - \widehat{\M{A}} \M{D}^{-1} \widehat{\M{A}} \succcurlyeq 0,
\]
which allows us to write an approximation of the form
\[
\M{D} - \widehat{\M{A}} \M{D}^{-1} \widehat{\M{A}} \approx_{\epsilon}
 \M{D} - \widehat{\M{A}} \M{D}^{-1} \widehat{\M{A}},
\]
and add it to the approximation above using Fact~\ref{fact}.\ref{fact:sum}.
This turns the $\widehat{\M{A}} \M{D}^{-1} \widehat{\M{A}}$ terms
into $\M{D}$ terms, and completes the proof.
\end{proof}

Now we prove that, with the support  of $\DD-\AA \DD^{-1} \AA$ and $\DD - \tilde{\AA}$, there exists an efficient algorithms to conduct the edge sampling, which leads to the efficient sparsification.

\begin{lemma}\label{lem:matrixCompPlus}
For graph Laplacian $\M{L}_{G}=\DD-\AA$ and $\M{L}_{\tilde{G}_{2r}} = \DD-\tilde{\AA}$ for $r \in \mathbb{Z}_+$, such that (1) $\DD-\tilde{\AA} \approx_{\epsilon'}\DD- \DD\left(\DD^{-1}\AA\right)^{2r}$, and (2) $\tilde{\AA}$ and $\AA$ each contains at most $m$ nonzero entries, then for any $\epsilon > 0$, we can construct
in total work $\OO{m \log^{3} n /\epsilon^4}$, 
a  graph Laplacian $\M{L}_{\tilde{G}_{+}} = \DD-\tilde{\AA}_{+}$  with $\tilde{\AA}_{+}$ containing at most $\OO{n \log n /\epsilon^2 }$ nonzero entries, such that 
\begin{align}
\DD-\tilde{\AA}_{+}
    \approx_{\epsilon'+\epsilon}
  \DD-\DD \left(\DD^{-1}\AA\right)^{2r+4}.
\end{align}
\end{lemma}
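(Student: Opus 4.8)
The plan is to mirror the structure of the $\tilde{\AA}_{\times}$ construction in Subsection~\ref{subsec:times}, replacing "squaring" by "symmetric composition with a fixed length-$2$ walk." First I would observe that
\[
\DD - \left(\AA\DD^{-1}\right)^{2}\tilde{\AA}\left(\DD^{-1}\AA\right)^{2}
\]
is obtained from $\DD - \tilde{\AA}$ by the symmetric operation of Lemma~\ref{lem:symComposition}, applied twice with $\widehat{\M{A}} = \M{A}$ (using that $0 \preccurlyeq \M{A} \preccurlyeq \M{D}$, which holds since $\M{A}$ is the adjacency matrix of a graph with $\M{D}$ its weighted-degree matrix). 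Indeed, one application of Lemma~\ref{lem:symComposition} with $\widehat{\M{A}}=\M{A}$ sends $\DD - \tilde{\AA}$ to $\DD - \AA\DD^{-1}\tilde{\AA}\DD^{-1}\AA$; a second application sends this to $\DD - (\AA\DD^{-1})^2\tilde{\AA}(\DD^{-1}\AA)^2$. Since composition preserves $\approx_{\epsilon'}$ at each step and the hypothesis gives $\DD - \tilde{\AA} \approx_{\epsilon'} \DD - \DD(\DD^{-1}\AA)^{2r}$, the same chain applied to $\DD - \DD(\DD^{-1}\AA)^{2r}$ yields $\DD - (\AA\DD^{-1})^2 \DD(\DD^{-1}\AA)^{2r}(\DD^{-1}\AA)^2 = \DD - \DD(\DD^{-1}\AA)^{2r+4}$, so
\[
\DD - \left(\AA\DD^{-1}\right)^{2}\tilde{\AA}\left(\DD^{-1}\AA\right)^{2}
\ \approx_{\epsilon'}\
\DD - \DD\left(\DD^{-1}\AA\right)^{2r+4}.
\]
This is the spectral half of the lemma; it costs nothing algorithmically, only the structural facts above.

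Next I would address the sparsification. The matrix $\DD - (\AA\DD^{-1})^2\tilde{\AA}(\DD^{-1}\AA)^2$ is the Laplacian of a graph whose edges are indexed by length-$5$ "composite walks": two steps in $G$, one step in $\tilde{G}_{2r}$, two more steps in $G$. I would run $\algname$ (Figure~\ref{fig:ERSparsify}) on this graph, so I need (i) an upper bound $Z$ on effective resistances whose $\tau$-sum is small, and (ii) an efficient implicit sampler from $\tau_{\V p} = w(\V p) Z(\V p)$. For (i), Lemma~\ref{lem:support} applied to $\tilde G_{2r}$ gives $\DD - \tilde{\AA} \succcurlyeq \Half(\DD - \AA)$ up to the $\epsilon'$ factor (or one uses $\DD - \DD(\DD^{-1}\AA)^{2r} \succcurlyeq \DD - \AA\DD^{-1}\AA$ directly and transfers through the approximation), so the effective resistance of the middle edge in $\tilde G_{2r}$ is bounded in terms of the corresponding resistance on $G$ or $G_2$; combining this with the two length-$2$ flanks via the triangle inequality for effective resistance (exactly as in the proof of Lemma~\ref{lem:upperboundEffRes}) gives a path-resistance bound $Z(\V p)$ that is a sum of $O(1)$ reciprocal-edge-weight terms along the composite walk. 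A calculation in the spirit of the "Random-Walk Identity" lemma then shows $\sum_{\V p} w(\V p)Z(\V p) = O(m)$, so $M = O(m\log n/\epsilon^2)$ samples suffice by Theorem~\ref{thm:ERsparsify}. For (ii), the same "fix a uniformly random anchor edge, then random-walk out from both ends" trick of Lemma~\ref{lem:implicitSample} works: the only new ingredient is that one of the five hops is a step in $\tilde{G}_{2r}$ rather than in $G$, which is fine because $\tilde{\AA}$ is given explicitly with $m$ nonzeros, so a single step there also costs $O(\log n)$ after $O(m)$ preprocessing. Finally I would apply the standard spectral sparsifier of~\cite{SpielmanTengSpectralSparsification,SpielmanSrivastava} once more to bring the edge count down from $O(m\log n/\epsilon^2)$ to $O(n\log n/\epsilon^2)$, incurring the advertised $O(m\log^3 n/\epsilon^4)$ total work and an extra $\epsilon$ in the approximation, giving $\DD - \tilde{\AA}_+ \approx_{\epsilon'+\epsilon} \DD - \DD(\DD^{-1}\AA)^{2r+4}$.

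The main obstacle I anticipate is (i): getting a clean, provably small bound on $\sum_{\V p} w(\V p) Z(\V p)$ for the \emph{composite} walk whose middle hop lives in $\tilde{G}_{2r}$ rather than in $G$ or $G_2$. Unlike the pure random-walk case, the weights $\tilde{\AA}(u,v)$ are arbitrary nonnegative numbers not equal to products of $\AA$-entries, so the telescoping cancellation in the Random-Walk Identity proof does not literally apply to the middle edge; I would instead split the sum by fixing the middle edge $e' = (u,v)$ of $\tilde G_{2r}$, bound its contribution to $Z$ by its effective resistance on $G$ or $G_2$ (via Lemma~\ref{lem:support} and spectral monotonicity), and then telescope the two length-$2$ flanks as before, so that the total collapses to something like $O(1)\cdot \mathrm{nnz}(\tilde{\AA}) + O(1)\cdot m = O(m)$. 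Care is needed to make sure the $\epsilon'$ distortion from $\DD - \tilde{\AA} \approx_{\epsilon'} \DD - \DD(\DD^{-1}\AA)^{2r}$ only affects constants in this count and does not compound, and that the implicit sampler draws from the correct distribution when one hop is an explicit $\tilde{\AA}$-edge; both are routine once the resistance accounting is set up correctly.
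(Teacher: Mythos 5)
Your overall route is the same as the paper's: use the symmetric-composition lemma (Lemma~\ref{lem:symComposition}) to get $\DD-(\AA\DD^{-1})^2\tilde{\AA}(\DD^{-1}\AA)^2\approx_{\epsilon'}\DD-\DD(\DD^{-1}\AA)^{2r+4}$, then sparsify the composite graph of length-$5$ walks (two $G$-hops, one $\tilde{G}_{2r}$-hop, two $G$-hops) by path sampling with an implicit sampler as in Lemma~\ref{lem:implicitSample}, and finish with one more round of standard spectral sparsification. However, your justification of the first step is wrong as written: you invoke Lemma~\ref{lem:symComposition} twice with $\widehat{\M{A}}=\M{A}$, claiming $0\preccurlyeq\M{A}\preccurlyeq\M{D}$, but an adjacency matrix is generally \emph{not} positive semidefinite (a single edge, or any bipartite graph, already violates it). The paper instead applies the lemma once with $\widehat{\M{A}}=\M{A}\M{D}^{-1}\M{A}$, which genuinely satisfies $0\preccurlyeq\widehat{\M{A}}\preccurlyeq\M{D}$ (it equals $\M{D}^{1/2}\M{X}^2\M{D}^{1/2}$ with $\M{X}=\M{D}^{-1/2}\M{A}\M{D}^{-1/2}$ and $\M{X}^2\preccurlyeq\M{I}$) and produces the degree-$(2r+4)$ identity in one application. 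Your two-step variant can be repaired by noting that the proof of Lemma~\ref{lem:symComposition} only needs $\widehat{\M{A}}\M{D}^{-1}\widehat{\M{A}}\preccurlyeq\M{D}$, i.e.\ $-\M{D}\preccurlyeq\widehat{\M{A}}\preccurlyeq\M{D}$, which $\M{A}$ does satisfy, but the hypothesis you actually cite is false and must be replaced.

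The second gap is the resistance accounting for the middle hop. You propose to bound its contribution by the effective resistance of $(u_2,u_3)$ on $G$ or $G_2$ (and you also assert $\DD-\tilde{\AA}\succcurlyeq\Half(\DD-\AA)$ up to $\epsilon'$, which is false for even walks: for bipartite $G$, $\M{L}_{G_{2r}}$ vanishes on the signed indicator vector while $\M{L}_G$ does not). An edge of $\tilde{G}_{2r}$ need not be short in $G_2$, and charging it to $R_{G_2}(u_2,u_3)$ makes the middle-edge part of $\sum_{\V{p}}w(\V{p})Z(\V{p})$ equal $\sum_{e\in\tilde{\AA}}\tilde{\AA}(e)R_{G_2}(e)$, whose natural bound goes through $\M{L}_{G_{2r}}\preccurlyeq r\,\M{L}_{G_2}$ and costs a factor of $r$ --- not the $O(\mathrm{nnz}(\tilde{\AA}))$ you claim, and fatal to the polylog-in-$d$ running time. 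The correct choice, and the paper's, is to charge the middle hop to its \emph{own} weight, with $Z$-term $\exp(2\epsilon')/\tilde{\AA}(u_2,u_3)$, justified by the support chain $\DD-\tilde{\AA}\approx_{\epsilon'}\M{L}_{G_{2r}}\preccurlyeq\M{L}_{G_{2r+4}}\approx_{\epsilon'}\DD-(\AA\DD^{-1})^2\tilde{\AA}(\DD^{-1}\AA)^2$, while the four flanking hops get $\exp(\epsilon')/\AA(\cdot,\cdot)$ via $\M{L}_{G_2}\preccurlyeq\M{L}_{G_{2r+4}}$ and the rank-one clique bound. With these choices your worry about the telescoping sum disappears: since $\DD-\tilde{\AA}$ is a Laplacian with degree matrix $\DD$, the row sums of $\tilde{\AA}$ against $\DD^{-1}$ are exactly $1$, so the cancellation goes through the middle hop as well, the four $\AA$-anchored terms contribute $O(m)$ and the middle term contributes $\mathrm{nnz}(\tilde{\AA})$; the rest of your argument (non-uniform anchor choice, $\tilde{\AA}$-step costing $O(\log n)$, final re-sparsification) then matches the paper.
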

\begin{proof}
  By Lemma \ref{lem:symComposition}, we have
  \begin{align}
    \DD - \left( \M{A} \M{D}^{-1} \right)^2 \tilde{\M{A}} \left( \M{D}^{-1} \M{A} \right)^2
    \approx_{\epsilon}
    \DD - \left( \M{A} \M{D}^{-1} \right)^2 \DD \left(  \DD^{-1} \AA \right)^{2r} \left( \M{D}^{-1} \M{A} \right)^2
    = 
    \DD - \DD \left(\DD^{-1}\AA \right)^{2r+4}.
  \end{align}

  Next step is to support $\DD - \left( \M{A} \M{D}^{-1} \right)^2 \tilde{\M{A}} \left( \M{D}^{-1} \M{A} \right)^2$ with 
  $\DD - \tilde{\M{A}}$ 
  and 
  $\DD  - \AA \DD^{-1}\AA$. 

  First, we have 
  \begin{align}
    \exp\left( - \epsilon \right)  \left(\DD  - \AA \DD^{-1}\AA \right)
    \preccurlyeq 
    \exp\left( - \epsilon \right) \left(\DD - \DD \left(\DD^{-1}\AA \right)^{2r+4}\right)
    \preccurlyeq 
    \DD - \left( \M{A} \M{D}^{-1} \right)^2 \tilde{\M{A}} \left( \M{D}^{-1} \M{A} \right)^2
  \end{align}

  We can also show that 
  \begin{align}
  \exp\left( - 2\epsilon \right) \left(\DD - \tilde{\M{A}} \right)
  & \preccurlyeq
    \exp\left( - \epsilon \right) \left(\DD - \DD \left(\DD^{-1}\AA \right)^{2r}\right) \\
  \preccurlyeq
    \exp\left( - \epsilon \right) \left(\DD - \DD \left(\DD^{-1}\AA \right)^{2r+4}\right)
  & \preccurlyeq
    \DD - \left( \M{A} \M{D}^{-1} \right)^2 \tilde{\M{A}} \left( \M{D}^{-1} \M{A} \right)^2.
  \end{align}

 To sample 
$\left( \M{A} \M{D}^{-1} \right)^2 \tilde{\M{A}} \left( \M{D}^{-1} \M{A} \right)^2$
 efficiently, let's consider the length-$5$ path $$\V{p}=\left(  u_0,u_1,u_2,u_3,u_4,u_5 \right),$$ where $\left(u_0,u_1 \right)$, $\left(u_1,u_2 \right)$, $\left(u_3,u_4 \right)$, $\left(u_4,u_5 \right)$ are edges in $\AA$ and $\left(u_2,u_3 \right)$ in $\tilde{\AA}$.

 The edge corresponds to $\V{p}$ has weight $w(\V{p})$ as 
\begin{align}
    w(\V{p}) = \frac{
    \AA\left( u_0,u_1 \right)
    \AA\left( u_1,u_2 \right)
    \tilde{\AA}\left( u_2,u_3 \right)
    \AA\left( u_3,u_4 \right)
    \AA\left( u_4,u_5 \right)   }{ 
    \DD\left( u_1,u_1 \right)
    \DD\left( u_2,u_2 \right) 
    \DD\left( u_3,u_3 \right) 
    \DD\left( u_4,u_4 \right) 
    }.
\end{align} 
And it has upper bound on effective resistance $Z(\V{p})$ as
\begin{align}
    Z(\V{p}) = 
    \frac{\exp(\epsilon)}{\AA\left( u_0,u_1 \right)}
    +
    \frac{\exp(\epsilon)}{\AA\left( u_1,u_2 \right)}
    +
    \frac{\exp(2\epsilon)}{\tilde{\AA}\left( u_2,u_3 \right)}
    +
    \frac{\exp(\epsilon)}{\AA\left( u_3,u_4 \right)}
    +
    \frac{\exp(\epsilon)}{\AA\left( u_4,u_5 \right)}
\end{align} 
The sampling algorithm is similar to that described in Lemma \ref{lem:implicitSample}, with the middle edge random-walk replaced with the random walk on graph $\tilde{\AA}$. Also, the edge index in the path is not uniform among $1,2,\dots,5$ as in Lemma \ref{lem:implicitSample}, i.e., it is now proportional to number of edges in the corresponding random-walk and the additional $\exp(\epsilon)$ terms occurred from the chain of spectral approximation.
However, the total work for path sampling remains the same for $\epsilon = \OO{1}$. After the initial sparsification by path sampling, can we further sparsify the
graph with spectral sparsifiers.
\end{proof}

\subsection{Combining $\M{A}_{+}$ and $\M{A}_{\times}$}

Combining these two components
allows us to prove our main result on sparsifying high degree monomials.


\begin{proof}[Proof of Theorem \ref{thm:power}]
When $d$ is divisible by $4$, we start with $\tilde{\AA}$ such that 
  $\DD-\tilde{\AA}
  \approx_{\epsilon'}
  \DD-\AA\DD^{-1}\AA$,
  and invoke Lemma \ref{lem:matrixCompSquare} and Lemma \ref{lem:matrixCompPlus}
  $k = \OO{\log d}$ times in total, to reach an approximation for
  $\DD-\DD \left(\DD^{-1}\AA\right)^{d}$.
Moreover, if we invoke
  Lemma \ref{lem:matrixCompSquare} and Lemma \ref{lem:matrixCompPlus}
  with approximation parameter $\epsilon/(2k)$,
  and apply \cite{SpielmanTengSpectralSparsification} with $\epsilon/2$ on the final output to obtain $\tilde{G}$,
  the total work is bounded by $\OO{  m \log^{3} n \log^5 (d) /\epsilon^4}$, and $\tilde{G}$ satisfies
  $\M{L}_{\tilde G} \approx_\epsilon \M{L}_{G_d}$.

When $d$ is equal to $4r+2$ for some integer $r$, we first check if $d \le 4/\epsilon$,
  in which case we can directly use Theorem \ref{thm:randWalk} to sparsify $\M{L}_{G_d}$.
When $d > 4/\epsilon$, we can produce a sparsifier for degree $4r$ monomial with error $\epsilon/2$,
  $\M{L}_{\tilde G} \approx_{\epsilon/2} \M{L}_{G_{4r}}$ and use it directly.
This is because
  \begin{align}
    (1 - \lambda^{4r}) \le
    1 - \lambda^{4r+2} \le
    (1 + \frac{1}{2r}) (1 - \lambda^{4r}) \le
    (1 + \frac{\epsilon}{2}) (1 - \lambda^{4r}), \quad
    \forall \lambda \in (-1, 1) \text{ and integer } r.
  \end{align}
So in this situation, we know that $\M{L}_{G_{4r}} \approx_{\epsilon/2} \M{L}_{G_{d}}$.
Combining the two spectral approximation using Fact~\ref{fact}.\ref{fact:transit} gives
  $\M{L}_{\tilde G} \approx_{\epsilon} \M{L}_{G_d}$.
\end{proof}

Note that $G_d$ with even $d$ is usually enough the study the long term effect of the random-walk\footnote{The obvious exception is that when the random-walk is periodic}. However, it is an intriguing question both mathematically and algorithmically if one can sparsify $G_d$ for all $d$ with work polylogarithmic in $d$.


\section{Extension to SDDM Matrices}
\label{sec:generalizeSDDM}

Our path sampling algorithm from Section~\ref{sec:sampling} can
be generalized to a SDDM matrix with splitting $\M{D}-\M{A}$.
The idea is to split out the extra diagonal entries to reduce it back to the Laplacian case.
Of course, the $\M{D}^{-1}$ in the middle of the monomial is changed, however,
  it only decreases the true effective resistance so the upper bound in
  Lemma \ref{lem:upperboundEffRes} still holds without change.
The main difference is that we need to put back the extra diagonal entries,
  which is done by multiplying an all $1$ vector through $\M{D} - \M{D} (\M{D}^{-1} \M{A})^r$.
  
The follow Lemma can be proved similar to Lemma \ref{lem:graphprev}.

\begin{lemma}[SDDM Preservation]
  \label{lem:sddmprev}
If $\M{M} = \M{D} - \M{A}$ is an SDDM matrix with
  diagonal matrix $\M{D}$ and nonnegative off-diagonal $\M{A}$,
  for any nonnegative $\V{\alpha} = (\alpha_1, \ldots, \alpha_d)$
  with $\sum_{r=1}^d \alpha_r = 1$,
  $\M{M}_{\V{\alpha}} = \M{D}-\sum_{r=1}^d \alpha_r \M{D} (\M{D}^{-1} \M{A})^r$
  is also an SDDM matrix.
\end{lemma}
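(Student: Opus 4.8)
The plan is to mirror the induction that proves Proposition~\ref{lem:graphprev}, carrying along the additional bookkeeping needed to track the diagonal. Recall the Laplacian case shows $\M{D} - \M{D}(\M{D}^{-1}\M{A})^r$ is a Laplacian, i.e.\ it has nonpositive off-diagonals and the all-ones vector in its kernel. In the SDDM case $\M{M} = \M{D} - \M{A}$ we lose the kernel property: now $\M{D}(u,u) \geq \sum_v \M{A}(u,v)$ with the inequality possibly strict, equivalently $\M{A}\V{1} \preccurlyeq \M{D}\V{1}$ entrywise with $\M{A}\V{1} = \M{D}\V{1} - \V{s}$ for some nonnegative "surplus" vector $\V{s} \geq \V{0}$. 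First I would establish that the off-diagonal structure is unchanged from the Laplacian analysis: write $\M{M}_{\V{\alpha}} = \M{D} - \M{B}$ where $\M{B} = \sum_{r=1}^d \alpha_r \M{D}(\M{D}^{-1}\M{A})^r$, and observe that every entry of every power $(\M{D}^{-1}\M{A})^r$ is nonnegative because $\M{A}$ has nonnegative entries and $\M{D}$ is positive diagonal; hence $\M{B}$ has nonnegative entries, so $\M{M}_{\V{\alpha}}$ has nonpositive off-diagonal entries, exactly as an SDDM matrix requires. I should also check $\M{B}$ is symmetric --- this is the same computation as in Proposition~\ref{lem:graphprev}, since $\M{D}(\M{D}^{-1}\M{A})^r = (\M{A}\M{D}^{-1})^{r-1}\M{A}$ is visibly symmetric when $\M{A}$ is.

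The remaining and genuinely new point is strict diagonal dominance: I must show the row sums of $\M{M}_{\V{\alpha}}$ are nonnegative, i.e.\ $\M{M}_{\V{\alpha}}\V{1} \geq \V{0}$ entrywise, with positivity somewhere (or rather, that it is SDDM, which for the paper's purposes means positive-definite diagonally dominant with nonpositive off-diagonals). The key identity to exploit is that in the pure Laplacian case $\M{D}(\M{D}^{-1}\M{A})^r \V{1} = \M{D}\V{1}$ for all $r$ when $\M{A}\V{1} = \M{D}\V{1}$. Here instead I would compute $\left(\M{D}^{-1}\M{A}\right)\V{1} = \M{D}^{-1}(\M{D}\V{1} - \V{s}) = \V{1} - \M{D}^{-1}\V{s} \leq \V{1}$ entrywise; iterating, since $\M{D}^{-1}\M{A}$ is a nonnegative matrix it preserves the entrywise partial order, so $(\M{D}^{-1}\M{A})^r\V{1} \leq \V{1}$ entrywise for every $r \geq 0$, and consequently $\M{D}(\M{D}^{-1}\M{A})^r\V{1} \leq \M{D}\V{1}$. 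Summing against the convex weights $\alpha_r$ (which sum to $1$) gives $\M{B}\V{1} \leq \M{D}\V{1}$ entrywise, hence $\M{M}_{\V{\alpha}}\V{1} = (\M{D} - \M{B})\V{1} \geq \V{0}$, which is diagonal dominance. For the strict/positive-definite part, I would note that $\M{M} \succ 0$ (it is SDDM), and that $\M{M}_{\V{\alpha}} \succcurlyeq \M{M}$ in Loewner order --- this follows because, writing things in the normalized basis $\M{N} = \M{D}^{-1/2}\M{A}\M{D}^{-1/2}$ (symmetric, with spectral radius $\leq 1$ since $\M{A}\V{1} \leq \M{D}\V{1}$ and $\M{A} \geq 0$), the normalized polynomial is $\M{I} - \sum_r \alpha_r \M{N}^r$, and for each eigenvalue $\lambda \in [-1,1]$ of $\M{N}$ one has $1 - \sum_r \alpha_r \lambda^r \geq 1 - \lambda$ (because $|\lambda^r| \leq |\lambda| \leq 1$ and more carefully $\lambda^r \leq 1$ combined with... ) --- actually the cleanest route is just: $1 - \sum_r \alpha_r\lambda^r \geq 1 - \sum_r \alpha_r \cdot 1 = 0$ with equality only at $\lambda = 1$, and $\lambda = 1$ is not an eigenvalue of $\M{N}$ since $\M{M} = \M{D}^{1/2}(\M{I}-\M{N})\M{D}^{1/2} \succ 0$. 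Hence $\M{M}_{\V{\alpha}} \succ 0$.

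The main obstacle, or at least the only place requiring care, is pinning down exactly what "SDDM" demands and making sure the weighted sum of monomials inherits it cleanly; the diagonal-dominance bookkeeping via the surplus vector $\V{s}$ and the monotonicity of nonnegative matrices under the entrywise order is the crux, and everything else is a transcription of the Laplacian-preservation argument. I would present it as: (i) symmetry and nonpositive off-diagonals, verbatim from Proposition~\ref{lem:graphprev}; (ii) the entrywise bound $(\M{D}^{-1}\M{A})^r\V{1} \leq \V{1}$ by induction on $r$; (iii) conclude diagonal dominance by convex combination; (iv) positive-definiteness either from the spectral argument above or by directly observing $\M{M}_{\V{\alpha}} \succcurlyeq \M{D} - \M{A} = \M{M} \succ 0$ via the per-eigenvalue inequality $\sum_r \alpha_r\lambda^r \leq \lambda$ for $\lambda\in[0,1]$ on the nonnegative part of the spectrum together with $\sum_r\alpha_r\lambda^r \leq 1$ in general. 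One should double-check the sign conventions for odd powers when $\M{N}$ has negative eigenvalues, but since $1 - \sum_r \alpha_r\lambda^r \geq 1 - \sum_r \alpha_r = 0$ uses only $\lambda^r \leq 1$, which holds for all $\lambda \in [-1,1]$, this is fine.
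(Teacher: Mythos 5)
Your steps (i)--(iii) are sound and essentially coincide with what the paper intends (it only remarks that the lemma ``can be proved similar to'' Proposition~\ref{lem:graphprev}): the entrywise bound $(\M{D}^{-1}\M{A})^r\V{1}\le\V{1}$, proved by induction using nonnegativity of $\M{D}^{-1}\M{A}$, is exactly the paper's row-sum induction with the equality $S_{r+1}=S_r$ relaxed to an inequality, and it delivers symmetry, nonpositive off-diagonals, and diagonal dominance of $\M{M}_{\V{\alpha}}$.

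The gap is in step (iv), strict positive definiteness. Your equality analysis ``$1-\sum_r\alpha_r\lambda^r\ge 0$ with equality only at $\lambda=1$'' is false on $[-1,1]$: if all the weight of $\V{\alpha}$ sits on even powers (e.g.\ $\alpha_2=1$, i.e.\ precisely the monomials $\M{D}-\M{A}\M{D}^{-1}\M{A}$ and $\M{D}-\M{D}(\M{D}^{-1}\M{A})^{2r}$ that this paper relies on), then $\lambda=-1$ also gives $\sum_r\alpha_r\lambda^r=1$. You rule out $+1$ as an eigenvalue of $\M{N}=\M{D}^{-1/2}\M{A}\M{D}^{-1/2}$ using $\M{M}\succ 0$, but never rule out $-1$, so as written the argument only yields $\M{M}_{\V{\alpha}}\succcurlyeq 0$ in the even-only case, not $\M{M}_{\V{\alpha}}\succ 0$. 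The missing step is short but must be supplied: for a symmetric entrywise-nonnegative $\M{N}$ and any unit vector $\V{v}$ one has $|\V{v}^\T\M{N}\V{v}|\le |\V{v}|^\T\M{N}|\V{v}|\le \lambda_{\max}(\M{N})$, hence $\lambda_{\min}(\M{N})\ge -\lambda_{\max}(\M{N})$; since $\M{M}\succ 0$ forces $\lambda_{\max}(\M{N})<1$, every eigenvalue of $\M{N}$ lies in the open interval $(-1,1)$, whence $\sum_r\alpha_r\lambda^r<1$ strictly and $\M{M}_{\V{\alpha}}\succ 0$. (Equivalently: $\M{D}+\M{A}\succ 0$, not merely $\succcurlyeq 0$ from diagonal dominance, follows from $\M{D}-\M{A}\succ0$ and $\M{A}\ge 0$.) Also drop the suggested alternative route via $\M{M}_{\V{\alpha}}\succcurlyeq\M{M}$: it is false in general, since for odd $r$ and $\lambda<0$ one has $1-\lambda^r<1-\lambda$ (e.g.\ $\lambda=-0.9$, $r=3$); the corrected spectral argument above is the one to keep.
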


Our algorithm is a direction modification of the algorithm from Section~\ref{sec:sampling}.
To analyze it, we need a variant of Lemma~\ref{lem:upperboundEffRes}
that bounds errors w.r.t. the matrix by which we measure effective resistances. 
We use the following statement from~\cite{Peng13:thesis}.

\begin{lemma}[Lemma B.0.1. from~\cite{Peng13:thesis}]
\label{lem:samplegeneral}
Let $\M{A} = \sum_{i} \V{y}_e^T \V{y}_e$ and $\M{B}$
be $n \times n$ positive semi-definite matrices
such that the image space of $\M{A}$ is contained in the image
space of $\M{B}$, and $\tau$ be a set of estimates such that
\[
\tau_e \geq \V{y}_e^T \M{B}^{\dag} \V{y}_e
\qquad \forall e.
\]
Then for any error $\epsilon$ and any failure probability $\delta = n^{-d}$,
there exists a constant $c_s$ such that if we construct $\M{A}$
using the sampling process from Figure~\ref{fig:ERSparsify},
with probability at least $1 - \delta = 1 - n^{-d}$, $\tilde{\M{A}}$ satisfies:
\[
\M{A} - \epsilon \left( \M{A} + \M{B} \right)
\preceq  \tilde{\M{A}}
\preceq \M{A} + \epsilon \left(\M{A} + \M{B}\right).
\]
\end{lemma}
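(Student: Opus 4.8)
\section*{Proof plan for Lemma~\ref{lem:samplegeneral}}

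The statement generalizes spectral sparsification by effective resistances in two ways: the sampling scores $\tau_e$ are allowed to dominate leverage scores measured against an auxiliary matrix $\M{B}$ (rather than against $\M{A}$ itself), and the additive error is taken relative to $\M{A}+\M{B}$. The plan is to change basis so that the reweighted rank-one samples become bounded, near-isotropic matrices relative to $\M{N} := \M{A}+\M{B}$, apply a matrix Chernoff bound in that basis, and then conjugate back. First I would do the image-space bookkeeping: since $\M{A},\M{B}\succeq 0$ we have $\M{B}\preceq\M{N}$ and $\mathrm{im}(\M{A})\subseteq\mathrm{im}(\M{B})\subseteq\mathrm{im}(\M{N})$, so each $\V{y}_e$ lies in $\mathrm{im}(\M{N})$ and $\V{v}_e:=\M{N}^{\dagger/2}\V{y}_e$ is well defined. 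Writing $\Pi$ for the orthogonal projection onto $\mathrm{im}(\M{N})$, we get $\sum_e \V{v}_e\V{v}_e^\T=\M{N}^{\dagger/2}\M{A}\M{N}^{\dagger/2}\preceq\Pi$ because $\M{A}\preceq\M{N}$, and by Rayleigh monotonicity for pseudoinverses (valid since $\M{B}\preceq\M{N}$ and $\V{y}_e\in\mathrm{im}(\M{B})$),
\[
\left\|\V{v}_e\right\|^2=\V{y}_e^\T\M{N}^{\dagger}\V{y}_e\ \le\ \V{y}_e^\T\M{B}^{\dagger}\V{y}_e\ \le\ \tau_e .
\]

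Next I would apply a matrix Chernoff bound to the sampling-with-replacement process of Figure~\ref{fig:ERSparsify}. Drawing $M=c_s\log n/\epsilon^2\cdot(\sum_e\tau_e)$ independent indices with $p_e=\tau_e/\sum_{e'}\tau_{e'}$ and forming the unbiased reweighted sum produces, in the $\M{N}^{\dagger/2}$ basis, an estimator of $\M{N}^{\dagger/2}\M{A}\M{N}^{\dagger/2}$ whose individual summands have operator norm at most
\[
\frac{1}{M}\cdot\frac{\left\|\V{v}_e\right\|^2}{p_e}=\frac{\sum_{e'}\tau_{e'}}{M}\cdot\frac{\left\|\V{v}_e\right\|^2}{\tau_e}\le\frac{\sum_{e'}\tau_{e'}}{M}=\frac{\epsilon^2}{c_s\log n},
\]
while the expectation has norm at most $1$. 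Choosing $c_s$ large enough to absorb the exponent in $\delta=n^{-d}$ and invoking the matrix Chernoff bound underlying Theorem~\ref{thm:ERsparsify} (cf.~\cite{SpielmanSrivastava,kelner2013spectral}) gives, with probability at least $1-n^{-d}$,
\[
\left\|\M{N}^{\dagger/2}\bigl(\tilde{\M{A}}-\M{A}\bigr)\M{N}^{\dagger/2}\right\|\le\epsilon ,
\]
where $\tilde{\M{A}}$ is exactly the matrix produced by Figure~\ref{fig:ERSparsify} once we recognize $\tau_e=w(e)Z(e)$ as the scores above.

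To finish, I would unfold the operator-norm bound. It is equivalent to $-\epsilon\Pi\preceq\M{N}^{\dagger/2}(\tilde{\M{A}}-\M{A})\M{N}^{\dagger/2}\preceq\epsilon\Pi$; since $\tilde{\M{A}}$ and $\M{A}$ have both range and co-range inside $\mathrm{im}(\M{N})$, conjugating by $\M{N}^{1/2}$ uses $\M{N}^{1/2}\Pi\M{N}^{1/2}=\M{N}$ and $\Pi(\tilde{\M{A}}-\M{A})\Pi=\tilde{\M{A}}-\M{A}$, yielding $-\epsilon\M{N}\preceq\tilde{\M{A}}-\M{A}\preceq\epsilon\M{N}$, which is precisely $\M{A}-\epsilon(\M{A}+\M{B})\preceq\tilde{\M{A}}\preceq\M{A}+\epsilon(\M{A}+\M{B})$.

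The main obstacle I expect is the pseudoinverse bookkeeping rather than the concentration step: one must check that $\M{N}^{\dagger/2}\V{y}_e$ is the meaningful object, that $\V{y}_e^\T\M{N}^{\dagger}\V{y}_e\le\V{y}_e^\T\M{B}^{\dagger}\V{y}_e$ holds even though $\mathrm{im}(\M{B})$ may be a proper subspace of $\mathrm{im}(\M{N})$ (this is exactly where the hypothesis $\mathrm{im}(\M{A})\subseteq\mathrm{im}(\M{B})$ is used, to keep all $\V{y}_e$ inside $\mathrm{im}(\M{B})$), and that conjugating the operator-norm bound back by $\M{N}^{1/2}$ is lossless. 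A secondary point is to pin down which matrix concentration inequality is invoked for sampling with replacement and to verify that $M=O(\log n/\epsilon^2\cdot\sum_e\tau_e)$ suffices at failure probability $n^{-d}$ with the constant depending on $d$.
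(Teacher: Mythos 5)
Your proof is correct, but note that the paper itself does not prove this statement: it is imported verbatim as Lemma~B.0.1 of~\cite{Peng13:thesis}, so there is no in-paper argument to compare against. Your route --- normalizing by $(\M{A}+\M{B})^{\dagger/2}$, bounding $\V{y}_e^\T(\M{A}+\M{B})^{\dagger}\V{y}_e \le \V{y}_e^\T\M{B}^{\dagger}\V{y}_e \le \tau_e$ via operator monotonicity of the pseudoinverse (using $\V{y}_e\in\mathrm{im}(\M{A})\subseteq\mathrm{im}(\M{B})$), applying a matrix Chernoff/Bernstein bound to the bounded near-isotropic samples, and conjugating back by $(\M{A}+\M{B})^{1/2}$ using that $\tilde{\M{A}}$ and $\M{A}$ are supported on $\mathrm{im}(\M{A}+\M{B})$ --- is exactly the standard argument used in the cited source, and your pseudoinverse bookkeeping is sound. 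The only caveat worth flagging is notational: you should state the estimator as the unbiased $\frac{1}{M}\sum \V{y}_e\V{y}_e^\T/p_e$ (as you do), since the weight $w(e)/(M\tau_e)$ written in Figure~\ref{fig:ERSparsify} only matches this up to the normalization by $\sum_e \tau_e$ implicit in $p_e$.
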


\begin{theorem}
Let $\M{M}=\M{D}-\M{A}$ be an SDDM matrix with diagonal $\M{D}$,
  nonnegative off-diagonal $\M{A}$ with $m$ nonzero entries,
  for any nonnegative $\V{\alpha} = (\alpha_1, \ldots, \alpha_d)$
  with $\sum_{r=1}^d \alpha_r = 1$, we can define
  $\M{M}_{\V{\alpha}} = \M{D}-\sum_{r=1}^d \alpha_r \M{D} (\M{D}^{-1} \M{A})^r$.
For any approximation parameter $\epsilon>0$,
  we can construct an SDDM matrix $\tilde{\M{M}}$ with $\OO{n \log n / \epsilon^2}$ nonzero entries, in time
  $\OO{m \cdot \log^2 n \cdot d^2 / \epsilon^2}$,
  such that $\tilde{\M{M}} \approx_{\epsilon} \M{M}_{\V{\alpha}}$.
  \end{theorem}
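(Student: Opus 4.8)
The plan is to reduce the SDDM case to the Laplacian case handled in Section~\ref{sec:sampling}, following the sketch given in the paragraph preceding Lemma~\ref{lem:sddmprev}. First I would write $\M{M} = \M{D} - \M{A}$ where now $\M{D}$ is allowed to strictly dominate, i.e.\ $\M{D} = \M{D}_{\M{A}} + \M{D}_{\text{ex}}$ with $\M{D}_{\M{A}}$ the weighted-degree matrix of $\M{A}$ and $\M{D}_{\text{ex}} \succcurlyeq 0$ the ``excess'' diagonal. The key observation is that the upper bound on effective resistances from Lemma~\ref{lem:upperboundEffRes} only improves: replacing $\M{D}_{\M{A}}$ by the larger $\M{D}$ inside the monomial $\M{D} - \M{D}(\M{D}^{-1}\M{A})^r$ only shrinks the off-diagonal mass relative to the diagonal, so every length-$r$ path still certifies the same resistance upper bound $Z(\V{p}) = \sum_{i=1}^r 2/\M{A}(u_{i-1},u_i)$. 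Hence the path-sampling routine of Lemma~\ref{lem:implicitSample} runs verbatim, except that the random walk uses transition probabilities $\M{A}(u_i,u_{i-1})/\M{D}(u_i,u_i)$ which now may sum to less than one at each step (the walk can ``die''), and we simply discard dead walks; the expected number of live samples is still controlled because $\sum_{\V{p}} w(\V{p}) Z(\V{p}) \le 2rm$ by the same cancellation argument (the $\M{D}(u,u) \ge \sum_v \M{A}(u,v)$ inequality now makes the telescoping sum at most $1$ rather than exactly $1$).

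Next I would handle the diagonal. By Lemma~\ref{lem:sddmprev}, $\M{M}_{\V\alpha} = \M{D} - \sum_r \alpha_r \M{D}(\M{D}^{-1}\M{A})^r$ is itself SDDM, so it has a natural split into a graph Laplacian part plus a nonnegative diagonal. Concretely, if $\hat{\M{A}}$ denotes the off-diagonal part of $\sum_r \alpha_r \M{D}(\M{D}^{-1}\M{A})^r$, then $\M{M}_{\V\alpha} = \M{L}_{\hat{G}} + \M{D}'_{\text{ex}}$ where $\M{L}_{\hat G}$ is a bona fide graph Laplacian and $\M{D}'_{\text{ex}} \succcurlyeq 0$ is recovered by multiplying $\M{M}_{\V\alpha}$ against the all-ones vector $\V 1$ (since $\M{L}_{\hat G}\V 1 = 0$, we get $\M{M}_{\V\alpha}\V 1 = \M{D}'_{\text{ex}}\V 1$, and $\M{D}'_{\text{ex}}$ is diagonal so this determines it entrywise). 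I would then sparsify only the Laplacian part $\M{L}_{\hat G}$ by path sampling as above, obtaining $\tilde{\M{L}} \approx_\epsilon \M{L}_{\hat G}$, and set $\tilde{\M{M}} = \tilde{\M{L}} + \M{D}'_{\text{ex}}$. Since adding the same PSD matrix $\M{D}'_{\text{ex}}$ to both sides of a spectral approximation preserves it (Fact~\ref{fact}, item~1), we get $\tilde{\M{M}} \approx_\epsilon \M{M}_{\V\alpha}$, and $\tilde{\M{M}}$ has $\OO{n\log n/\epsilon^2}$ off-diagonal entries plus $n$ diagonal entries.

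One subtlety I would be careful about: to invoke the sampling guarantee in the presence of the diagonal perturbation, rather than quoting Theorem~\ref{thm:ERsparsify} directly it is cleaner to use Lemma~\ref{lem:samplegeneral} with $\M{A}$ set to the Laplacian quadratic form of $\M{L}_{\hat G}$ (written as a sum of rank-one edge terms $\V y_e^T \V y_e$) and $\M{B}$ set to the lower-order Laplacian $\Half\M{L}_G$ or $\M{L}_{G_2}$ against which Lemma~\ref{lem:upperboundEffRes} measures resistances; this yields $\tilde{\M{A}} \succeq \M{A} - \epsilon(\M{A}+\M{B})$ and $\tilde{\M{A}} \preceq \M{A} + \epsilon(\M{A}+\M{B})$, and since $\M{B} \preccurlyeq \M{L}_{\hat G} = \M{A}$ by Lemma~\ref{lem:support} (applied termwise over the odd/even parts), this collapses to the clean bound $\tilde{\M{L}} \approx_{\OO\epsilon} \M{L}_{\hat G}$ after rescaling $\epsilon$ by a constant. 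The running time is identical to Theorem~\ref{thm:randWalk}: $\OO{dm\log n/\epsilon^2}$ path samples, each costing $\OO{d\log n}$ for a length-$d$ walk, giving $\OO{d^2 m\log^2 n/\epsilon^2}$, plus one final Laplacian sparsification in $\OO{dm\log^2 n/\epsilon^2}$ time. The main obstacle I anticipate is bookkeeping the ``dying'' random walk correctly so that the sampled weights remain an unbiased estimator of the correct off-diagonal entries of $\M{L}_{\hat G}$ — one must verify that the normalization $\sum_{\V p} w(\V p) Z(\V p)$ and the per-step extension probabilities still compose to the right path measure when the transition matrix is substochastic, but this is exactly the computation in Lemma~\ref{lem:implicitSample} with inequalities in place of equalities, so no new idea is needed.
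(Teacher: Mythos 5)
Your overall architecture (split off the excess diagonal, path-sample the Laplacian part with the same $Z(\V{p})$ estimates, add the exact diagonal back via $\M{M}_{\V{\alpha}}\V{1}$) matches the paper's, but the error analysis has a genuine gap at the step where you ``collapse'' the guarantee of Lemma~\ref{lem:samplegeneral} to $\tilde{\M{L}} \approx_{\OO{\epsilon}} \M{L}_{\hat G}$. You take $\M{B}$ to be the lower-order \emph{Laplacian} ($\Half\M{L}_G$ or $\M{L}_{G_2}$) and claim $\M{B} \preccurlyeq \M{L}_{\hat G}$ ``by Lemma~\ref{lem:support} applied termwise.'' In the SDDM setting Lemma~\ref{lem:support} (whose proof only needs the eigenvalues of $\M{D}^{-1/2}\M{A}\M{D}^{-1/2}$ to lie in $(-1,1)$) compares the \emph{full} matrices $\Half(\M{D}-\M{A}) \preccurlyeq \M{D}-\M{D}(\M{D}^{-1}\M{A})^{r}$; it says nothing about their graph-Laplacian parts after the excess diagonals are stripped, and the corresponding inequality between Laplacian parts is false in general. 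Concretely, take a single edge of weight $1$ and $\M{D} = K\M{I}$ with $K$ large: for $r=3$ the monomial's off-diagonal part is $\M{A}/K^{2}$, so its Laplacian part is $K^{-2}\M{L}_G$, and no constant multiple of it dominates $\Half\M{L}_G$. Consequently the error term $\epsilon(\M{L}_{\hat G}+\M{B})$ you obtain from Lemma~\ref{lem:samplegeneral} can be arbitrarily larger than $\M{L}_{\hat G}$, and your intermediate claim $\tilde{\M{L}} \approx_{\epsilon} \M{L}_{\hat G}$ does not follow — indeed it cannot follow from the path-based estimates, since $Z(\V{p})$ upper-bounds resistances measured in the full SDDM matrices, not in the (possibly nearly degenerate) Laplacian part alone.

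The paper's proof avoids exactly this trap, and it also shows how to repair your argument with a one-line change: apply Lemma~\ref{lem:samplegeneral} with $\M{B} = \M{M}_r$ itself, using $\V{y}_e^{\T}\M{M}_r^{-1}\V{y}_e \le 2\,\V{y}_e^{\T}(\M{D}-\M{A})^{-1}\V{y}_e$ for odd $r$ and $\le 2\,\V{y}_e^{\T}(\M{D}-\M{A}\M{D}^{-1}\M{A})^{-1}\V{y}_e$ for even $r$ (these are the SDDM versions of Lemma~\ref{lem:upperboundEffRes}, so your $\tau_e$ values are still valid oversampling estimates, and $w(\V{p})$ only shrinks relative to the Laplacian case). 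The conclusion is then the additive bound $\M{L}_r - \Half\epsilon(\M{L}_r+\M{M}_r) \preccurlyeq \M{L}_{\tilde G} \preccurlyeq \M{L}_r + \Half\epsilon(\M{L}_r+\M{M}_r)$; since $\M{L}_r \preccurlyeq \M{M}_r$, the error is at most $\epsilon\,\M{M}_r$, and adding the exact $\M{D}_{\text{extra}}$ (computed as $\mathrm{diag}(\M{M}_r\V{1})$) to both sides turns this into the desired relative guarantee $(1-\epsilon)\M{M}_r \preccurlyeq \tilde{\M{M}} \preccurlyeq (1+\epsilon)\M{M}_r$. In other words, the whole point of invoking Lemma~\ref{lem:samplegeneral} rather than Theorem~\ref{thm:ERsparsify} is that the error is only controlled relative to the matrix in which resistances are measured, which must be the full SDDM matrix; your version measures resistances against one matrix but demands relative error against a (potentially much smaller) different one. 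The rest of your write-up — the substochastic-walk bookkeeping, the $2rm$ bound on $\sum_{\V{p}} w(\V{p})Z(\V{p})$, recovering the diagonal from the all-ones vector, and Fact~\ref{fact} item~1 — is consistent with the paper and unproblematic.
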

\begin{proof}
We look at each monomial separately.
First, by Lemma \ref{lem:sddmprev}, $\M{M}_r$ is an SDDM matrix.
It can be decomposed as the sum of two matrices,
  a Laplacian matrix $\M{L}_r=\M{D}_r - \M{D} (\M{D}^{-1} \M{A})^r$,
  and the remaining diagonal $\M{D}_{\text{extra}}$.
As in the Laplacian case, a length-$r$ paths in $\M{D} - \M{A}$
  corresponds to an edge in $\M{L}_r$.
We apply Lemma~\ref{lem:samplegeneral} to $\M{M}_r$
  and $\M{L}_r = \sum_{e \in P} \V{y}_e \V{y}_e^\T$,
  where $P$ is the set of all length-$r$ paths in $\M{D} - \M{A}$,
  and $\V{y}_e$ is the column of the incidence matrix associated with $e$.
  
When $r$ is an odd integer, we have
  \begin{align}
  \V{y}_e^\T \left( \M{M}_r \right)^{-1} \V{y}_e \leq
  2 \V{y}_e^\T \left( \M{D}-  \M{A} \right)^{-1} \V{y}_e,
  \end{align}
  and when $r$ is an even integer, we have
  \begin{align}
  \V{y}_e^\T \left( \M{M}_r \right)^{-1} \V{y}_e \leq
  2 \V{y}_e^\T \left( \M{D} - \M{A}\M{D}^{-1} \M{A} \right)^{-1} \V{y}_e.
  \end{align}

Let $e$ denote the edge corresponds to the length-$r$ path $(u_0, \ldots, u_r)$,
  the weight of $e$ is
 \begin{align}
 w(e) = w(u_0 \ldots u_r) = \V{y}_e^\T \V{y}_e =
 \frac {\prod_{i=1}^{r} \M{A} ({u_{i-1}, u_{i}}) } {\prod_{i=1}^{r-1} \M{D} ({u_i, u_i})}
 \le \frac {\prod_{i=1}^{r} \M{A} ({u_{i-1}, u_{i}}) } {\prod_{i=1}^{r-1}  \M{D}_g({u_i, u_i})},
 \end{align}
where $ \M{D}_g  ({u, u}) = \sum_{v \neq u} \M{A}(u, v)$,
  so we have the same upper bound as the Laplacian case,
  and we can sample random walks in the exact same distribution.
%
By Lemma~\ref{lem:samplegeneral} there exists
  $M = \OO{r \cdot m \cdot \log n / \epsilon^2}$ such that
  with probability at least $1-\frac{1}{n}$,
  the sampled graph $\tilde{G}=\algname(G_r, \tau_e, M)$ satisfies
 \begin{align}
 \M{M}_r-\Half \epsilon (\M{L}_r + \M{M}_r) 
 \preccurlyeq 
 \M{L}_{\tilde{G}} & + \M{D}_{\text{extra}}
 \preccurlyeq 
 \M{M}_r+ \Half \epsilon (\M{L}_r + \M{M}_r).
 \end{align}
Now if we set $\M{\tilde{M}} = \M{L}_{\tilde{G}} + \M{D}_{\text{extra}}$, we will have
 \begin{align}
 (1-\epsilon) \M{{M}}_r 
 \preccurlyeq 
\M{\tilde{M}}
 \preccurlyeq 
 (1+\epsilon ) \M{{M}}_r.
 \end{align}
Note that $\M{D}_{\text{extra}}$ can be computed efficiently by
computing $\text{diag}(\M{M}_{r}\V{1})$ via matrix-vector multiplications.
 \end{proof}

\section{Remarks}
\label{sec:remarks}

We gave nearly-linear time algorithms for generating sparse approximations
  of several classes of random-walk matrix-polynomials.
As our study of this problem is motivated by the low
degree case such as for speeding up 
  numerical methods for solving matrix equations, 
  our results only gives part of the picture for the
  high degree case: we are only able to sparsify even-degree
  monomials , and this routine calls
  spectral sparsifiers with error of
  $\epsilon = 1 / \log{d}$ at each step.
Obtaining better algorithms for approximating the structures of
  long-range random-walk matrices is an intriguing mathematical
  and algorithmic question, partially due to the non-commutativity of matrix products.
Extending our algorithm to any $d$, and allowing for higher error
  tolerances at each step are natural directions for future work.
Furthermore, we conjecture that any degree $n$ random-walk
  matrix-polynomial can be sparsified in nearly-linear time.

Our algorithms for the low degree case is based on path sampling.
This routine has analogs in widely used combinatorial network
  analysis routines such as distance estimation~\cite{kangTF09} and
  subgraph counting~\cite{JowhariG05}.
We believe further investigating this connection will lead to
  improved algorithms, as well as models that better explain the
  effectiveness of many existing ones.


\bibliographystyle{alpha}
\bibliography{references}

\begin{appendix}
\section{Laplacian Preservation}
\label{apx:graphprev}

\restateGraphPrev*

 \begin{proof}
First note that $\M{L}_{\V{\alpha}}(G) = \sum_{r=1}^d \alpha_r \M{D} (\M{D}^{-1} \M{A})^r$
  is symmetric and has non-positive off-diagonals,
  so to prove that $\M{L}_{\V{\alpha}}(G)$ is also a Laplacian matrix,
  we only need to show the off-diagonals sum to the diagonal.
Fix an integer $r$ and a row index $i$, we study the
  $i$-th row sum $S_{r}$ of $\M{D} (\M{D}^{-1} \M{A})^r$.

 For $r=1$, we have that the row sum $S_{1}$ of $i$-th row of $\M{A}$
   gives $S_{1} = \sum_j {\M{A}}_{i,j} = \M{D}_{i,i}$.
 We show that the row sum $S_{r+1}$
   can be reduce to $S_{r}$ as follows,
 \begin{align}
   S_{r+1} = \sum_{k} \left( \left( \M{D} (\M{D}^{-1} \M{A})^{r} \right)_{i,k} \cdot \M{D}^{-1}_{k,k} \cdot \sum_j \M{A}_{k,j} \right)
   =
    \sum_{k} \left( \M{D} (\M{D}^{-1} \M{A})^{r} \right)_{i,k}
     = S_{r} 
 \end{align}

 By induction, we have that
   $S_{n} = \dots = S_{1} = \Mij{\M{D}}{i}{i}$.
 Thus, the $i$-th row sum of $\M{L}_{\V{\alpha}}(G)$
 \begin{align}
   \sum_j \left( \M{L}_{\V{\alpha}}(G) \right)_{i,j} 
   =
   \sum_{r=1}^t \alpha_r S_r
    =
   \M{D}_{i,i}.
 \end{align}
 Therefore, $\M{L}_{\V{\alpha}}(G)$ is a Laplacian matrix.
 \end{proof}

\section{Support from Linear and Quadratic Terms}
\label{apx:support}

\restateSupportLowOrder*

 \begin{proof}
 Let $\M{X} = \M{D}^{-\frac{1}{2}} \M{A} \M{D}^{-\frac{1}{2}}$,
  for any integer $r$, the statements are equivalent to 
 \begin{align}
   \frac{1}{2} \left(\M{I}-\M{X}\right) 
  \preceq 
   \M{I}-\M{X}^{2r+1} 
  \preceq
    \left(2r+1\right) \left(\M{I}-\M{X}\right) \\
    \M{I}-\M{X}^2
  \preceq
     \M{I}-\M{X}^{2r}
  \preceq
     r \left(\M{I}-\M{X}^2\right).
 \end{align}
 Because $\M{X}$ can be diagonalized by unitary matrix
   $\M{U}$ as $\M{\Lambda} = \M{U} \M{X} \M{U}^{\T}$,
   where $\M{\Lambda} = \text{diag}(\lambda_1,\lambda_2,\dots,\lambda_n)$
   and $\lambda_i \in [-1,1]$ for all $i$.
 Therefore we can reduce the inequalities to the scalar case,
   and we conclude the proof with the following inequalities:
 \begin{align}
 \begin{split}
   \frac{1}{2} (1-\lambda)
 \le 
   1 - \lambda^{2r+1} 
   \le 
   \left(2r+1\right) (1-\lambda) &,
   \quad \forall \lambda \in (-1,1) \text{ and odd integer $r$}; \\
    (1-\lambda^2) 
    \le
     1 - \lambda^{2r} 
    \le
      r (1-\lambda^2) &,
   \quad \forall \lambda \in (-1,1) \text{ and even integer $r$}. \\
 \end{split}
 \end{align}
 \end{proof}

\section{Effective Resistance on Rank One Graph}
\label{apx:supportRankOne}

\begin{proposition}[Claim 6.3. from~\cite{PengSpielman}]
Given a graph of size $n$ with the Laplacian matrix
  $\M{L}=\M{D} - \frac{1}{d} \V{a} \V{a}^{\T}$,
  where $\Mij{\M{D}}{i}{i} = (a_i s) / d $ with $s = \sum_{i=1}^{n}a_i$.
The effective resistance for edge $(i,j)$ is
  \begin{align}
  \frac{d}{s} (\frac{1}{a_i}+\frac{1}{a_j}).
  \end{align}
\end{proposition}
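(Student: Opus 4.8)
The plan is to avoid forming $\M{L}^{\dagger}$ explicitly and instead read the effective resistance off the defining linear system, using a single structural identity. Recall $R(i,j)=(\V{e}_i-\V{e}_j)^\T\M{L}^{\dagger}(\V{e}_i-\V{e}_j)$, and that if $\V{x}$ is \emph{any} solution of $\M{L}\V{x}=\V{e}_i-\V{e}_j$, then $R(i,j)=(\V{e}_i-\V{e}_j)^\T\V{x}$ — well defined since $\V{e}_i-\V{e}_j\perp\ker\M{L}$. The key fact is that because $\Mij{\M{D}}{k}{k}=a_k s/d$ for every $k$, we have $\M{D}^{-1}\V{a}=\tfrac{d}{s}\V{1}$; i.e. $\M{D}^{-1}$ sends $\V{a}$ to a multiple of the all-ones vector.

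First I would solve $\M{L}\V{x}=\V{f}$ with $\V{f}:=\V{e}_i-\V{e}_j$. From $\M{L}=\M{D}-\tfrac1d\V{a}\V{a}^\T$,
\begin{align}
\M{L}\V{x}=\V{f}
\ \Longleftrightarrow\
\M{D}\V{x}=\V{f}+\tfrac1d(\V{a}^\T\V{x})\,\V{a}
\ \Longleftrightarrow\
\V{x}=\M{D}^{-1}\V{f}+\tfrac1s(\V{a}^\T\V{x})\,\V{1},
\end{align}
where the last step used $\M{D}^{-1}\V{a}=\tfrac{d}{s}\V{1}$. Pairing the middle equation with $\V{a}$ and using $\V{a}^\T\M{D}^{-1}\V{a}=\tfrac{d}{s}\V{1}^\T\V{a}=d$ together with $\V{a}^\T\M{D}^{-1}\V{f}=\tfrac{d}{s}\V{1}^\T\V{f}=0$ shows that the scalar $\V{a}^\T\V{x}$ is unconstrained (the freedom in it is exactly the kernel direction $\V{1}$), so $\V{x}=\M{D}^{-1}\V{f}$ is a valid solution. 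Hence
\begin{align}
R(i,j)=\V{f}^\T\M{D}^{-1}\V{f}=\frac{1}{\Mij{\M{D}}{i}{i}}+\frac{1}{\Mij{\M{D}}{j}{j}}=\frac{d}{s}\Bigl(\frac{1}{a_i}+\frac{1}{a_j}\Bigr).
\end{align}

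I do not anticipate a genuine obstacle; the only point needing care is the kernel bookkeeping, i.e.\ justifying $\ker\M{L}=\mathrm{span}(\V{1})$ so that $\V{f}$ lies in the range of $\M{L}$ and the choice of $\V{x}$ is immaterial. I would do this by writing $\M{L}=\tfrac1d\M{B}^{1/2}\bigl(s\M{I}-\V{b}\V{b}^\T\bigr)\M{B}^{1/2}$ with $\M{B}=\mathrm{diag}(a_1,\dots,a_n)$, $\V{b}=\M{B}^{1/2}\V{1}$, $\nbr{\V{b}}^2=\sum_i a_i=s$: the inner matrix $s\M{I}-\V{b}\V{b}^\T=s\,\M{P}_{\V{b}^\perp}$ has rank $n-1$, and $\M{B}^{1/2}$ is invertible (the vertices carry positive weights), so $\M{L}$ has rank $n-1$ with kernel $\M{B}^{-1/2}\mathrm{span}(\V{b})=\mathrm{span}(\V{1})$. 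The same factorization also gives $\M{L}^{\dagger}=\tfrac{d}{s}\M{B}^{-1/2}\M{P}_{\V{b}^\perp}\M{B}^{-1/2}$, and since $\M{B}^{-1/2}\V{f}\perp\V{b}$ one obtains $R(i,j)=\tfrac{d}{s}\nbr{\M{B}^{-1/2}\V{f}}^2=\tfrac{d}{s}(a_i^{-1}+a_j^{-1})$ directly, which serves as a cross-check on the computation above (and matches the way this bound is invoked in Lemma~\ref{lem:upperboundEffRes} when $d=s$).
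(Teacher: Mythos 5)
Your proof is correct and is essentially the paper's argument: both exhibit the explicit preimage $\M{D}^{-1}(\V{e}_i-\V{e}_j)=\tfrac{d}{s}\bigl(\tfrac{\V{e}_i}{a_i}-\tfrac{\V{e}_j}{a_j}\bigr)$ of $\V{e}_i-\V{e}_j$ under $\M{L}$ and pair it with $\V{e}_i-\V{e}_j$, your version merely making explicit the kernel bookkeeping that the paper leaves implicit. One small caveat on your cross-check: since $\M{B}$ does not commute with $\M{P}_{\V{b}^\perp}$ (unless all $a_i$ are equal), $\tfrac{d}{s}\M{B}^{-1/2}\M{P}_{\V{b}^\perp}\M{B}^{-1/2}$ is only a (reflexive) generalized inverse of $\M{L}$, not the Moore--Penrose pseudoinverse; this does not change the value of the quadratic form on vectors in the range of $\M{L}$, so the computed resistance is still correct.
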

\begin{proof}
Let $\V{e}_i$ denote the vector where the $i$-th entry is 1, and 0 everywhere else. We have
  \begin{align}
  d \M{L} \left(\frac{\V{e}_i}{a_i}-\frac{\V{e}_j}{a_j}\right) = (s-a_i)\V{e}_i-\sum_{k\neq i} a_k \V{e}_k -(s-a_j)\V{e}_j+\sum_{k\neq j} a_k \V{e}_k = s(\V{e}_i -\V{e}_j).
  \end{align}
Therefore
  \begin{align}
  (\V{e}_i -\V{e}_j)^{\T} \M{L}^{\dagger} (\V{e}_i -\V{e}_j) = \frac{d}{s}  (\V{e}_i -\V{e}_j)^{\T}(\frac{\V{e}_i}{a_i}-\frac{\V{e}_j}{a_j}) = \frac{d}{s} (\frac{1}{a_i}+\frac{1}{a_j}). 
  \end{align}
\end{proof}

\end{appendix}

\end{document}